\newtheorem{theorem}{Theorem}[section]
\begin{document}

\articletype{ARTICLE TEMPLATE}


\title{A new platooning model for connected and autonomous vehicles to improve string stability}

\author{
\name{Shouwei Hui \textsuperscript{a}\thanks{Email: huihui@ucdavis.edu} and Michael Zhang\textsuperscript{b}\thanks{Corresponding author; Email: hmzhang@ucdavis.edu}}
\affil{\textsuperscript{a}Department of Mathematics, University of California Davis, Davis, CA, 95616, United States; \textsuperscript{b}Department of Civil and Environmental Engineering, University of California Davis, Davis, CA, 95616, United States}
}       

\maketitle


\begin{abstract}
This paper presents a novel approach to coordinated vehicle platooning, where the platoon followers only communicate with the platoon leader. A dynamic model is proposed to account for driving safety under communication delays. General linear stability results are mathematically proven, and numerical simulations are performed to analyze the impact of model parameters in two scenarios: a ring road with initial disturbance and an infinite road with periodic disturbance. The simulation outcomes align with the theoretical analysis, demonstrating that the proposed "look-to-the-leader" platooning strategy significantly outperforms conventional car-following strategies, such as following one or two vehicles ahead, in terms of traffic flow stabilization. This paper introduces a new perspective on organizing platoons for connected and autonomous vehicles, with implications for enhancing traffic stability.
\end{abstract}


\begin{keywords}
Platoon; Connected and Autonomous Vehicles; Car-following model; Linear stability
\end{keywords}

\section{Introduction}
In recent years, Connected and Autonomous Vehicles (CAVs) have acquired significant global attention. With the fast development of CAV technologies and their supporting infrastructures, cooperative car-following (CF), including platooning, are poised to be implemented in the near future to enhance traffic flow. Platooning is a coordinated driving method for a group of vehicles that can be analysed as a systematic longitudinal traffic control system \cite{sheikholeslam1990longitudinal}. Such systems are typically modeled using microscopic traffic models, particularly those based on ordinary differential equations (ODEs). 

Pipes \cite{pipes1953operational} was among the first to introduce car-following models in 1953, which described the behavior of a string of cars. Since then, numerous car-following models with adjustable parameters have been developed. Bando et al. \cite{bando1995dynamical, bando1995phenomenological} proposed the Optimal Velocity Model (OVM) that replaces the velocity of the lead vehicle in Pipes' model with an optimal velocity function. In particular, OVM can capture traffic instabilities on a ring road without external disturbances. Treiber et al. \cite{treiber2000congested} introduced the Intelligent Driver Model (IDM), which further accounts for the velocity difference of the lead vehicle. These models are widely used in traffic simulations and control design since they are able to represent typical traffic phenomena in relatively simple forms.

The aforementioned models only describe the interaction between two vehicles in a leader-follower configuration. Several extensions have been developed based on these models. Lenz et al. \cite{lenz1999multi} introduced a multi-following model based on OVM, which considers the influence of multiple vehicles ahead. Nakayama et al. \cite{nakayama2001effect} proposed a backward-looking model, also based on OVM, to capture the effect of the vehicle behind on the subject vehicle. Jiang et al. \cite{jiang2001full} introduced the Full Velocity Difference Model (FVDM), which accounts for both positive and negative relative velocity to eliminate unrealistic acceleration and deceleration behaviors. Yu et al. \cite{yu2013full} extended FVDM by incorporating acceleration differences. Lazar et al. \cite{lazar2016review} provided a comprehensive review of OVM-based models from 1995 to 2016. Treiber et al. \cite{treiber2006delays} expanded the IDM to follow multiple vehicles ahead while considering reaction time and estimation errors. Derbel et al. \cite{derbel2013modified} proposed a modified version of IDM that enhances vehicle performance and safety.

Car-following models can be integrated with various control designs to evaluate the impact of CAVs within a mixed traffic environment consisting of both human-driven and autonomous vehicles. Zhu and Zhang \cite{zhu2018analysis} modified OVM by introducing a smoothing factor to model autonomous vehicles (AVs) and analyzed mixed traffic involving both AVs and human-driven vehicles (HDVs). Jia and Ngoduy \cite{jia2016platoon} developed a platoon-controlled car-following model based on realistic inter-vehicle communication and designed a consensus-based control system for multiple platoons moving in unison. Zhang et al. \cite{zhang2021internet} extended Zhu's model for platooned CAVs by combining multi-following with weighted relative velocity differences and analyzed the linear stability of the model. Wang et al. \cite{wang2021leading} proposed Leading Cruise Control (LCC), a mixed traffic flow control strategy for CAVs that allows them to adaptively follow the vehicle ahead while pacing the following vehicles. Zhao et al. \cite{zhao2023safety} introduced safety-critical traffic control (STC), where CAVs maintain safety relative to both the preceding vehicle and following HDVs, with HDVs modeled using OVM in both Wang's and Zhao's papers. Zhou et al. \cite{zhou2023autonomous, zhou2024self} proposed a platoon-based Intelligent Driver Model (P-IDM) along with an Autonomous Platoon Formation Strategy (APFS) to stabilize CAV platoons under periodic disturbances. Jin and Meng \cite{jin2020dynamical} presented a model combining OVM with a time-delayed feedback controller, increasing stability by analyzing the transcendental characteristic equation to eliminate unstable eigenvalues. Sun et al. \cite{sun2020relationship} examined the relationship between string instability in IDM-based controllers and traffic oscillations, identifying optimal parameters for CAVs to improve stability by forming finite-sized platoons.


Theoretically, beyond car-following models, methodologies such as stochastic modeling, constrained optimization, and model predictive control (MPC) have also been applied by traffic researchers to address CAV platooning problems. Li \cite{li2017stochastic, li2017stochastic2} proposed a stochastic dynamic model for vehicle platoons and investigated its statistical characteristics. Gong et al. \cite{gong2016constrained} introduced a multi-agent dynamic system to model platoons, with the control scheme formulated as a constrained optimization problem. Zhou et al. \cite{zhou2019distributed} developed a series of distributed MPC strategies for coordinated car-following of CAVs, providing approaches that ensure local, $L_2$, and $L_{\infty}$ stabilities. Graffione et al. \cite{graffione2020model} introduced an MPC approach to control inter-vehicular distances and speed within a vehicle platoon, which improves safety and reduces fuel consumption. Li et al. \cite{li2022variable} developed a platoon control strategy for heterogeneous connected vehicles (CVs) by incorporating a variable time headway (VTH) spacing policy and a sliding mode controller. Liu et al. \cite{liu2022autonomous} adopted an integrated deep reinforcement learning (DRL) and dynamic programming (DP) approach to learn autonomous platoon control policies.

Field experiments and related data analysis can further demonstrate the effectiveness of CAVs and CAV platoons in stabilizing traffic and reducing fuel consumption. Stern et al. \cite{stern2018dissipation} conducted a ring road experiment with over 20 vehicles, showing that a controlled autonomous vehicle can dampen stop-and-go waves, reducing total fuel consumption. Lee et al. \cite{lee2024traffic} implemented a large-scale field experiment involving 100 CAVs functioning as a MegaController on a freeway network to mitigate stop-and-go waves, marking the largest CAV field experiment to date. Tsugawa et al. \cite{tsugawa2011automated} tested a platoon of three trucks on a test track along an expressway, demonstrating a $14\%$ reduction in fuel consumption. By combining experimental datasets, Zhou et al. \cite{zhou2023experimental} investigated the emissions and fuel consumption (EFC) characteristics in car-following (CF) platoons.

In general, previous studies on CAV platoon control have demonstrated the potential of CAVs in stabilizing traffic, particularly by developing car-following models that enhance traffic stability through advanced control and communication strategies. However, most existing models rely on communication with multiple vehicles, which increases system complexity and communication load, especially in mixed-traffic conditions. Additionally, few studies prioritize a platoon structure that optimally reduces inter-vehicle communication requirements while maintaining robust stability. Motivated by these research gaps, in this paper we make several contributions to CAV platoon control. We propose a novel platoon CF model that enhances traffic stability by focusing on coordinated communication strategies, particularly prioritizing the role of the leading vehicle. We also show the superiority of our models by performing rigorous mathematical stability proofs and comprehensive numerical simulations. The simulations cover a variety of traffic scenarios, demonstrating how different parameters can impact the stability of a single platoon. Our findings indicate that the proposed control strategy significantly outperforms conventional methods in suppressing traffic oscillations and maintaining string stability.

The remaining parts of this paper are organized as follows. In section \ref{sec2}, the car-following models for HDVs and platoon-controlled CAVs are introduced. In section \ref{sec3}, linear stability criteria of the proposed models are presented and proved. In section\ref{sec4}, numerical simulations on a ring road and an infinite road are demonstrated and performances of different model parameters are evaluated. Lastly in section \ref{sec5}, the conclusion is drawn and potential directions for future research are discussed.

\section{The Non-Platoon and Platoon CF Models \label{sec2}}
Without loss of generality, we assume a string of $N$ vehicles are moving along a single-lane road without overtaking. The initial position of the rear bumper of the first vehicle is at $0$m on the road, with the leading vehicle designated as $N-$th vehicle. In particular, if the single-lane road is a ring road, the $(N+1)$-th vehicle is the same as the $1$st vehicle. To demonstrate the effectiveness of the platoon-controller for connected and autonomous vehicles (CAV), we introduce a base model for human-driven vehicles (HDVs) in Section \ref{base}. In Section \ref{ovmsec2}, we present the model for platoon-controlled CAVs, followed by the transition phase model in Section \ref{ovmsec3}.

\subsection{The optimal velocity model \label{base}}
A commonly used car-following model to represent human-driven vehicles (HDVs) is the Optimal Velocity Model (OVM) \cite{bando1995dynamical}. OVM has the form
\begin{equation}
    \ddot{x}_i(t)=a\left[V(x_{i+1}(t)-x_{i}(t))-\dot{x}_i(t)\right], \label{OVM1}
\end{equation}
where $x_i(t)$, $i=1,2,\dots N$ is the position of the $i$-th vehicle at time $t$ on the single-lane road without overtaking. $x_{i+1}(t)-x_{i}(t)\triangleq h_i(t)$ is the headway between the $i$-th and $i+1$-th vehicle. $\dot{x}_i(t)$, $\ddot{x}_i(t)$ are velocity and acceleration of $i$-th vehicle at time $t$, $V(h)$ is the optimal velocity function of headway (head to head distance) $h$, and $a$ is a sensitivity constant. An example of optimal velocity function is given in \cite{wang2021leading}, which is equivalent to the form
\begin{equation}
    V(h)=\begin{cases}
        v_{\max}, & \text{if } h\geq h_{\max}; \\
        \frac{v_{\max}}{2}\left(1-\cos\left(\pi\frac{h-h_{\min}}{h_{\max}-h_{\min}}\right)\right), & \text{if } h_{\min}\leq h \leq h_{\max};\\
        0, & \text{if } h \leq h_{\min},
    \end{cases}
    \label{ovf1}
\end{equation}
where $h_{\min}$ is the minimum headway, $h_{\max}$ is the maximum headway, $v_{\max}$ is the maximum velocity and $l$ is the length of each vehicle. Figure \ref{f1} is an example plot of \ref{ovf1} and the corresponding fundamental diagram (density-flow diagram).
\begin{figure}
    \centering
\subfloat[Optimal velocity function]{
    \includegraphics[width=0.4\textwidth]{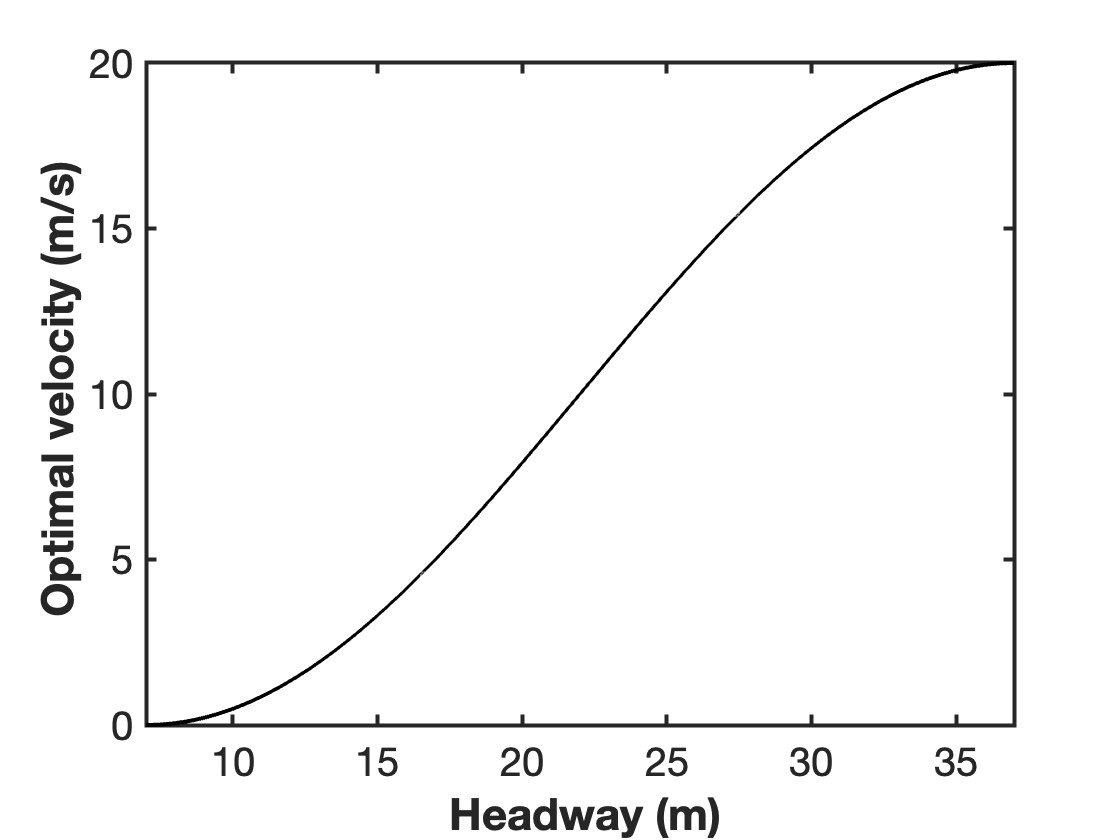}
}
\hfill
\subfloat[Fundamental diagram]{
    \includegraphics[width=0.4\textwidth]{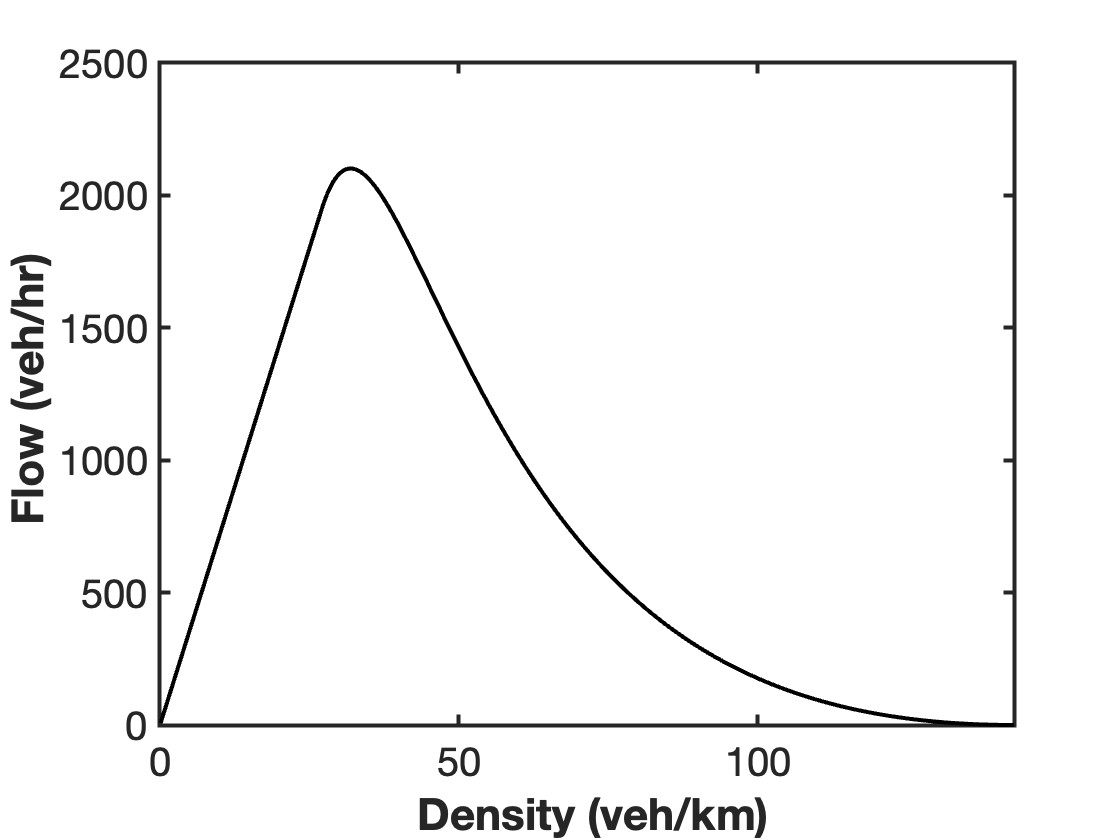}
}
    \caption{Plot of an optimal velocity function with $l=5$m and $v_{\max}=20$m/s and the corresponding fundamental diagram.}
    \label{f1}
\end{figure}

\subsection{Platoon controller: A coordinated car-following strategy \label{ovmsec2}} 
We now consider a system of CAVs where every following vehicle is connected to the leading autonomous vehicle. In this setup, the following vehicles have access to the position and speed of the leading vehicle. In the proposed platoon model, only the position information of the leading vehicle is required. We further assume that the target (optimal) velocity of a following vehicle in the platoon is based on its distance from the leading vehicle:
\begin{equation}
    \ddot{x}_i(t)=a\left[V(\frac{x_{N}(t)-x_{i}(t)}{N-i})-\dot{x}_i(t)\right],\label{povm}
\end{equation}
where $x_N$ is the position of the controlled leading vehicle. We refer \eqref{povm} as the platoon controlled OVM (P-OVM) model. In this model, the spacing between the platoon leader (the $N$-th vehicle) and the $i$-th following vehicle is scaled and used in the optimal velocity function. This is equivalent to considering the average spacing of vehicles between the platoon leader and the $i$-th following vehicle. Figure \ref{d1} illustrates this platoon-controlled car-following configuration. Notably, under this setup, the influence of the immediate preceding vehicle is neglected, which could potentially result in collisions if position measurement errors or communication delays between the platoon leader and the following vehicles reach a critical threshold.
\begin{figure}
    \centering
    \includegraphics[width=0.95\textwidth]{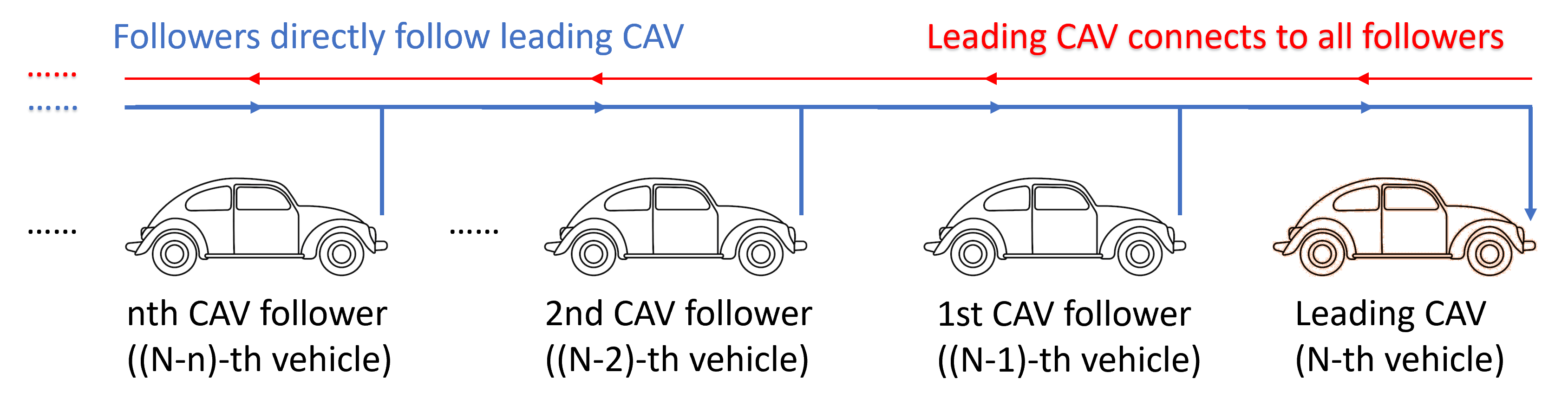}
    \caption{Illustration of the platoon-controlled car-following system. Leading CAV is connected to all followers of the platoon and followers are directly following the leading CAV}
    \label{d1}
\end{figure}

\subsection{The transition phase model \label{ovmsec3}}
We can address the previously mentioned safety issue by adding a local, within platoon vehicle-to-vehicle control to P-OVM. This local control can be any feedback mechanism that prevents vehicle collisions. For the sake of simplicity in presentation and analysis, we adopt OVM as the local control mechanism, referring to the combined model as the Transition Phase Optimal Velocity Model (T-OVM). The model is of the form
\begin{equation}
    \ddot{x}_i=a(V(x_{i+1}-x_i)-\dot{x}_i)+b\left(V\left(\frac{x_N-x_i}{N-i}\right)-\dot{x}_i\right),
    \label{tran}
\end{equation}
where $a$ is the sensitivity to vehicle in front and $b$ is the sensitivity to the leading vehicle. The total sensitivity is defined as the sum of the sensitivities $(a+b)$. Figure \ref{d2} illustrates the car-following pattern of the T-OVM. With this two-part model, each vehicle within the platoon balances between maintaining the optimal velocity relative to the vehicle directly in front and the optimal velocity relative to the platoon leader. This balance helps prevent collisions while simultaneously enhancing stability.
\begin{figure}
    \centering
    \includegraphics[width=0.95\textwidth]{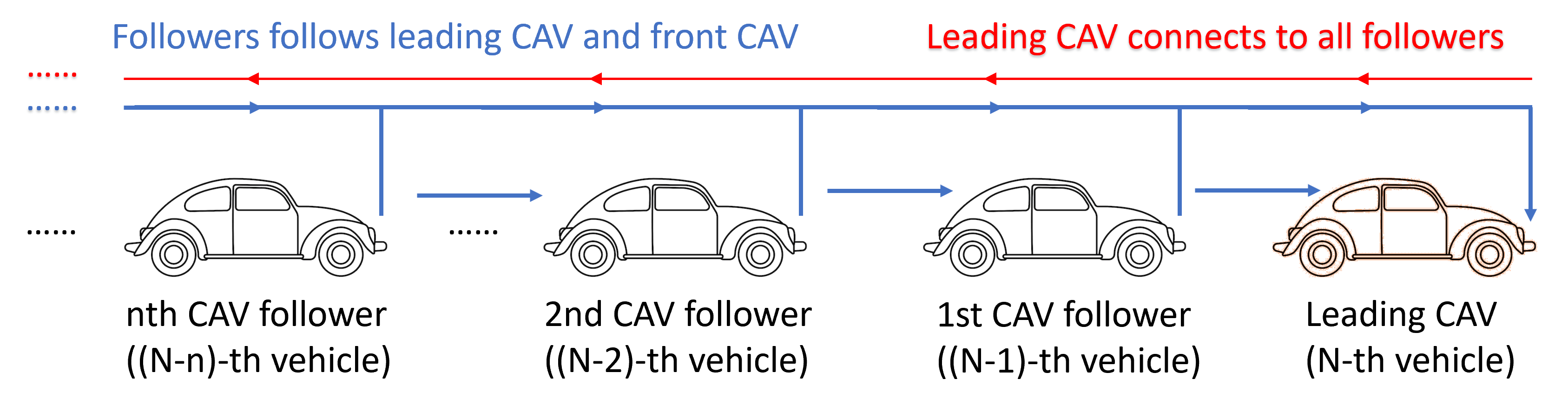}
    \caption{Illustration of the transition phase car-following system. Leading CAV is connected to all followers inside the platoon and followers are following the leading CAV and the front CAV simultaneously}
    \label{d2}
\end{figure}

\section{Stability analysis of the platoon models}\label{sec3}
If the string of $N$ vehicles are on a ring road with length $L$ and the leading vehicle is following the last vehicle of the string by the original OVM \eqref{OVM1}, the position of $i-th$ vehicle is the same for both OVM \eqref{OVM1} and P-OVM \eqref{equi} under steady state:
\begin{equation}
    e_i(t)=hi+V(h)t,
    \label{equi}
\end{equation}
where $h=L/N$ is the equilibrium headway. For the classic optimal velocity model, we have the following linear stability criterion: 
\begin{theorem}
    The optimal velocity model \eqref{OVM1} is linearly stable if 
    \begin{equation}
        a>2V'(h).
        \label{ovmsta}
    \end{equation}
    \label{lin}
\end{theorem}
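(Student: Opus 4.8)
The plan is to carry out a standard linear stability analysis by perturbing the steady-state solution and deriving the dispersion relation. First I would write $x_i(t) = e_i(t) + y_i(t)$, where $e_i(t) = hi + V(h)t$ is the steady-state trajectory from \eqref{equi} and $y_i(t)$ is a small perturbation. Substituting into \eqref{OVM1} and using the fact that $\dot{e}_i = V(h)$ and $\ddot{e}_i = 0$, I would linearize the optimal velocity function via $V(h_i) \approx V(h) + V'(h)(y_{i+1} - y_i)$, which yields the linear constant-coefficient delay-free ODE
\begin{equation}
    \ddot{y}_i(t) = a\left[V'(h)\big(y_{i+1}(t) - y_i(t)\big) - \dot{y}_i(t)\right].
    \label{linpert}
\end{equation}

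Next, exploiting the translational (ring) symmetry, I would seek Fourier-mode solutions of the form $y_i(t) = \exp\left(\mathbf{i}\,\alpha_k i + z t\right)$, where $\alpha_k = 2\pi k/N$ for $k = 0, 1, \dots, N-1$ are the admissible wavenumbers on the ring and $z \in \mathbb{C}$ is the growth rate. Plugging this ansatz into \eqref{linpert} gives the characteristic (dispersion) equation
\begin{equation}
    z^2 + a z - a V'(h)\left(e^{\mathbf{i}\alpha_k} - 1\right) = 0.
    \label{chareq}
\end{equation}
The system is linearly stable precisely when every root $z$ of \eqref{chareq} satisfies $\mathrm{Re}(z) \le 0$ for all admissible $k$ (with the neutral mode $k=0$ corresponding to a uniform translation handled separately). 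I would then analyze this quadratic in $z$ and extract the condition on $a$ under which no root has positive real part.

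The main obstacle, and the technical heart of the argument, is translating the root-location requirement on the complex quadratic \eqref{chareq} into the clean inequality \eqref{ovmsta}. My plan here is to write $z = \mu + \mathbf{i}\nu$ and look for the stability boundary where a root is purely imaginary, or equivalently to apply a Routh--Hurwitz-type criterion adapted to complex coefficients. A cleaner route is to substitute $z = \mathbf{i}\nu$ into \eqref{chareq}, separate real and imaginary parts, and eliminate $\nu$ to find the critical relation between $a$ and $V'(h)$; expanding $e^{\mathbf{i}\alpha_k} - 1 = (\cos\alpha_k - 1) + \mathbf{i}\sin\alpha_k$ and using the long-wavelength limit $\alpha_k \to 0$ (the most unstable mode) yields the threshold. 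Taking the small-$\alpha_k$ expansion and requiring stability for all $k$ reduces the condition to $a > 2V'(h)$, recovering \eqref{ovmsta}.

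Finally, I would confirm that the long-wavelength mode is indeed the binding constraint — that is, that the inequality obtained from the $\alpha_k \to 0$ expansion is the most restrictive over all admissible wavenumbers — so that \eqref{ovmsta} guarantees $\mathrm{Re}(z) \le 0$ for every mode simultaneously. This monotonicity check over $k$ is what certifies sufficiency of the stated criterion rather than merely its necessity at the critical mode.
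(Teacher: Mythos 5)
Your proposal is correct and takes essentially the same route as the paper's Appendix~\ref{ap1} proof: linearize about the equilibrium, pass to ring Fourier modes $e^{2\pi \mathbf{i}k/N}$, obtain the identical quadratic dispersion relation, and reduce stability to a per-mode condition whose binding case is the long-wavelength limit $\alpha_k \to 0$. The only difference is computational rather than conceptual --- the paper extracts $\mathrm{Re}(\lambda)$ directly via the explicit formula for the real part of a complex square root and simplifies to \eqref{cond2}, while you locate the neutral boundary by substituting $z=\mathbf{i}\nu$; both reduce to the same per-mode inequality $a > V'(h)(1+\cos\alpha_k)$, whose supremum over modes gives \eqref{ovmsta}.
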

A proof slightly different from \cite{bando1995dynamical} is given in Appendix \ref{ap1}. Then with the platoon controller applied, it turns out that no stability criteria is required:
\begin{theorem}
    In the $N$-vehicle platoon where the leader vehicle follows the last vehicle of the platoon on a ring road according to OVM \eqref{OVM1}, the platoon-controlled system P-OVM \eqref{povm} is always linearly stable.
    \label{sta2}
\end{theorem}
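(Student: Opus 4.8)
The plan is to linearize the platoon dynamics \eqref{povm} together with the OVM \eqref{OVM1} governing the leader about the steady state \eqref{equi}, and then to exploit the special coupling topology of the ``look-to-the-leader'' strategy. Writing $x_i(t)=e_i(t)+y_i(t)$ with $y_i$ a small perturbation and using $\frac{x_N-x_i}{N-i}=h+\frac{y_N-y_i}{N-i}$ at equilibrium, the follower equations linearize to
\begin{equation}
    \ddot y_i=\frac{aV'(h)}{N-i}\,(y_N-y_i)-a\dot y_i,\qquad i=1,\dots,N-1,
\end{equation}
while the leader, following vehicle $1$ on the ring with headway $x_1+L-x_N=h+(y_1-y_N)$, linearizes to $\ddot y_N=aV'(h)(y_1-y_N)-a\dot y_N$. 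First I would record these $2N$ first-order equations and note that $V'(h)\ge 0$ for the optimal velocity function \eqref{ovf1}, with the flat case $V'(h)=0$ handled trivially.

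The crucial structural observation is that, unlike the nearest-neighbour coupling of the classical OVM, here every follower $i\ge 2$ is driven only by the leader state $y_N$ and by itself, and never feeds back into any other equation; the sole mutual coupling is the pair $(y_1,y_N)$, since the leader reads $y_1$ and follower $1$ reads $y_N$. Ordering the state as $(y_1,y_N,y_2,\dots,y_{N-1})$ therefore renders the system matrix block lower-triangular, with a $4\times4$ diagonal block for $(y_1,\dot y_1,y_N,\dot y_N)$ and decoupled $2\times2$ blocks for each $(y_i,\dot y_i)$, $i\ge 2$. Consequently the spectrum of the full system is the union of the spectra of these blocks, and it suffices to analyse them separately.

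For a follower block with $i\ge 2$ the characteristic polynomial is $s^2+as+\frac{aV'(h)}{N-i}$, whose coefficients are all positive, so both roots lie strictly in the left half-plane. For the coupled pair I would substitute $y_1,y_N\propto e^{st}$ and show that the vanishing of the $2\times2$ determinant factors, after setting $p=s^2+as$, into $p\bigl(p+aV'(h)\tfrac{N}{N-1}\bigr)=0$. This is expected to yield $s=0$, $s=-a$, and the two roots of $s^2+as+aV'(h)\tfrac{N}{N-1}=0$, the latter again having negative real part by positivity of coefficients. Every eigenvalue except a single $s=0$ thus lies strictly in the left half-plane for all $a>0$ and all equilibrium headways, which is precisely the parameter-free stability asserted.

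The main obstacle is the interpretation and treatment of the lone zero eigenvalue. I expect to argue that it is a simple (hence semisimple) root of the degree-$2N$ characteristic polynomial, so it contributes only a bounded constant mode rather than secular growth; physically it is the neutral mode of rigid translation of the whole platoon around the ring, present in any ring-road formulation and not an instability. A secondary point needing care is to confirm that the block-triangular decomposition is exact---that no follower $i\ge 2$ influences the $(y_1,y_N)$ pair---and to dispatch the degenerate case $V'(h)=0$ separately, where the dynamics reduce to $\ddot y_i+a\dot y_i=0$ and stability is immediate.
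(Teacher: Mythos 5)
Your proposal is correct and follows essentially the same route as the paper: both exploit that only the pair $(y_1,y_N)$ is mutually coupled, reduce that pair to the quadratic $s^2+as+aV'(h)\tfrac{N}{N-1}=0$ (your factor $p+aV'(h)\tfrac{N}{N-1}$ is exactly the paper's case $\xi_1=-\xi_N/(N-1)$, and your $p=0$ factor is its case $\xi_1=\xi_N$), and then treat the remaining followers as systems slaved to $y_N$. Your block-triangular formulation is in fact tighter than the paper's at the final step: it explicitly accounts for each follower's own homogeneous modes $s^2+as+\tfrac{aV'(h)}{N-i}$, the simplicity and physical meaning of the zero (rigid-translation) eigenvalue, and the degenerate case $V'(h)=0$, whereas the paper covers the followers only through the driven-response ansatz $y_j=\zeta y_N$ and leaves those points implicit.
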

\begin{proof}
    From previous introduction, P-OVM has the same equilibrium solution of the optimal velocity model as $x_n^0(t)=e_n(t)$ for any car $n$ inside the platoon. Now suppose that all the vehicles are deviated from the initial position with a small disturbance $y_n(t)$, that is,
    \begin{equation}
        x_n(t)=e_n(t)+y_n(t),\;|y_n|\ll 1.
    \end{equation}
    
    To linearize the original system, we can do a Taylor expansion of the optimal velocity function term $V(\Delta x_n)$ and neglect the higher order terms to get
    \begin{equation}
        \ddot{y}_n(t)=\begin{cases}
            a\left[V'(h)\frac{y_{N}(t)-y_n(t)}{N-n}-\dot{y}_n(t)\right],\;\text{if } n\neq N; \\
            a\left[V'(h)(y_{1}(t)-y_N(t))-\dot{y}_N(t)\right],\;\text{if } n= N.
        \end{cases}
        \label{linear}
    \end{equation}
    
    Then by the formation of the controller, $y_1$ and $y_N$ forms a system of 2 linear ODEs, and the solution in vector form is:
    \begin{equation}
        \mathbf{Y}=\sum_{n=1,\;N}C_ne^{\lambda_n t}\mathbf{V}_n,
    \end{equation}
    where $Y = [y_1, y_N]^T$ is the vector form, $\lambda_n$ and $V_n$ are the eigenvalues and eigenvectors of the system, and $C_n$ are constants determined by the initial condition. Now suppose that $\lambda$ is an eigenvalue of the system, and $\xi_n$ is the coefficient of $y_n$ with the term $e^{\lambda t}$, then simplified from \eqref{linear}, $\lambda$ satisfies
    \begin{equation}
        \lambda^2+a\lambda-aV'(h)(\frac{\xi_{1}}{\xi_N}-1)=0,
    \end{equation}
    
    and
    \begin{equation}
        \lambda^2+a\lambda-aV'(h)(\frac{\xi_{N}}{(N-1)\xi_1}-\frac{1}{N-1})=0.
    \end{equation}
    
    Then by comparing these equations we can calculate that $\xi_1=\xi_N$ or $\xi_1=-\xi_N/(N-1)$. And the first case is only true if we have $y_1=y_N$ and the two cars just stay at the same distance from the original equilibrium. For the other case we have
    \begin{equation}
        \lambda^2+a\lambda-aV'(h)(-\frac{1}{N-1}-1)=0.
    \end{equation}
    Then, by solving the quadratic equation, we have
    \begin{equation}
        \lambda= \frac{-a\pm\sqrt{a^2+4aV'(h)(-\frac{1}{N-1}-1)})}{2}.
        \label{cond}
    \end{equation}
   For the system to be stable, we need to have the real parts of both $\lambda$ to be negative. This is true if $V'(h)>0$ which is always true for any well-defined $V(h)$. Now we just need to check if the same property holds for the remaining vehicles. By \eqref{linear} we can have a initial guess that the nonlinear part is $y_j=\zeta y_N$ for $j\neq 1$, then plug this into \eqref{linear} and compare the coefficients we have
    \begin{equation}
        \zeta=N-j+\frac{j-1}{N-1}.
    \end{equation}
    Thus the other cars also have the same eigenvalue for the nonlinear part and the system will always be linearly stable if the problem is well defined.
\end{proof}
As for the transition phase model, stability clearly depends on both $a$, $b$ and $N$. For $N$ sufficiently large we have the following stability criterion:
\begin{theorem}
    Assuming that the leading vehicle follows the last vehicle by OVM \eqref{OVM1} with sensitivity constant $a+b$. Then the transition phase model T-OVM \eqref{tran} is linearly stable if
    \begin{equation}
        \frac{(a+b)^2}{a}>2V'(h).
        \label{transtae}
    \end{equation}
    \label{transta}
\end{theorem}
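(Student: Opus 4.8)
The plan is to follow the linearization used for Theorem~\ref{sta2}, but to \emph{retain} the immediate-ahead coupling that P-OVM discards and then exploit the largeness of $N$ to reduce the question to an effective optimal-velocity chain. First I would keep the same steady state $e_i(t)=hi+V(h)t$ with $h=L/N$ from \eqref{equi}; it is also an equilibrium of \eqref{tran} because both spacing arguments reduce to $h$ and every vehicle travels at $V(h)$. Writing $x_n(t)=e_n(t)+y_n(t)$ with $|y_n|\ll1$ and expanding $V$ to first order, the linearized dynamics for $i\neq N$ read
\[
  \ddot{y}_i+(a+b)\dot{y}_i=aV'(h)(y_{i+1}-y_i)+\frac{b}{N-i}V'(h)(y_N-y_i),
\]
with $\ddot{y}_N+(a+b)\dot{y}_N=(a+b)V'(h)(y_1-y_N)$ for the leader. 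Two structural features drive everything: the damping coefficient equals $a+b$ for every vehicle, while the spring constant binding a car to its immediate predecessor is only $aV'(h)$. It is this mismatch between $a+b$ and $a$ that will manufacture the ratio $(a+b)^2/a$ appearing in \eqref{transtae}.

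Next I would pass to large $N$. For a car in the bulk the leader weight $b/(N-i)$ tends to $0$, so the dominant balance is the translation-invariant chain $\ddot{y}_i+(a+b)\dot{y}_i=aV'(h)(y_{i+1}-y_i)$. Inserting the plane wave $y_i=\exp(\lambda t+\mathrm{i}\theta i)$, $\theta=2\pi k/N$, produces the dispersion relation
\[
  \lambda^2+(a+b)\lambda-aV'(h)(e^{\mathrm{i}\theta}-1)=0,
\]
which is exactly the OVM dispersion relation underlying Appendix~\ref{ap1}, but with the single sensitivity $a$ split into a damping $a+b$ and a spring $aV'(h)$. I would then rerun the neutral-stability computation from that appendix on this relation: setting $\lambda=\mathrm{i}\omega$ and separating real and imaginary parts gives $\omega^2=aV'(h)(1-\cos\theta)$ and $(a+b)\omega=aV'(h)\sin\theta$; eliminating $\omega$ and cancelling the common factor $1-\cos\theta$ collapses the marginal condition to $(a+b)^2=aV'(h)(1+\cos\theta)$. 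The long-wavelength mode $\theta\to0$ is the last to stabilize and yields the threshold $(a+b)^2=2aV'(h)$, so that $(a+b)^2/a>2V'(h)$ forces $\operatorname{Re}\lambda<0$ for every mode of the effective chain, which is precisely \eqref{transtae}.

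The step I expect to be the real obstacle is justifying that discarding the $b/(N-i)$ terms does not corrupt the stability verdict, since those terms break translation invariance and render the linearized operator neither circulant nor symmetric. I would recast the system as $\ddot{\mathbf y}+(a+b)\dot{\mathbf y}=V'(h)A\mathbf y$, observe that each eigenvalue $\mu$ of $A$ enters through $\lambda^2+(a+b)\lambda-V'(h)\mu=0$, and translate stability into the requirement that every $\mu\neq0$ lie in the interior of the leftward parabola $\operatorname{Re}\mu<-\tfrac{V'(h)}{(a+b)^2}(\operatorname{Im}\mu)^2$; the plane-wave analysis shows the bulk spectrum fills the circle $\mu=a(e^{\mathrm{i}\theta}-1)$, which sits strictly inside this parabola exactly when \eqref{transtae} is strict. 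The difficulty is that the leader coupling is not a vanishing perturbation in operator norm: although each weight $b/(N-i)$ is small, the $N$th column of $A$ has entries summing like $b\sum_{k=1}^{N-1}1/k\sim b\log N$, so a crude Gershgorin or operator-norm estimate is too lossy, and the non-normality of $A$ means eigenvalues could in principle be sensitive to the perturbation. A rigorous argument must therefore use the near-circulant structure directly—separating the slowly varying near-leader band from the bulk and proving by a continuity/localization argument in $1/N$ that no eigenvalue crosses the parabola once the inequality is strict. This localization is the technical heart of the proof and is exactly what the qualifier ``for $N$ sufficiently large'' is buying.
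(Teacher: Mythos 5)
Your proposal follows essentially the same route as the paper's own proof in Appendix~\ref{ap2}: linearize about the uniform-flow equilibrium, drop the $b/(N-i)$ leader-coupling terms in the large-$N$ limit, and analyze the resulting effective dispersion relation $\lambda^2+(a+b)\lambda-aV'(h)(e^{\mathrm{i}\theta}-1)=0$, whose neutral-stability analysis yields exactly \eqref{transtae}. The localization step you flag as the technical heart is not carried out rigorously in the paper either---it simply asserts that the leader terms have ``minimal effect'' as $N\to\infty$---so your write-up is, if anything, more candid about that gap while otherwise matching the paper's argument.
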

The proof is given in Appendix \ref{ap2}. Figure \ref{s1} shows the plot of neutral stability lines for different sensitivity values. The area above the lines represents stability, while the area below indicates instability. From the figure, it is evident that as the value of $b$ increases, the unstable area becomes smaller and eventually disappears when $a\to0$.
\begin{figure}
    \centering
    \includegraphics[width=0.5\textwidth]{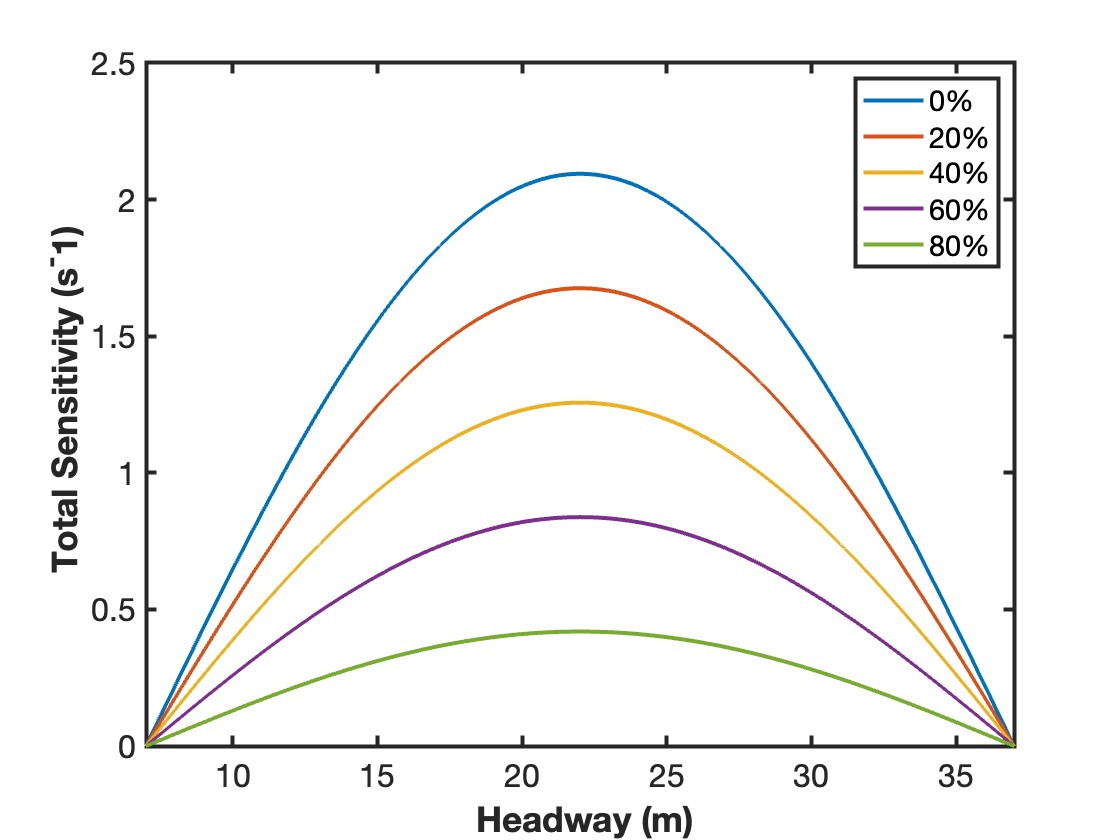}
    \caption{neutral stability lines of the transition phase model for selected percentages of leading vehicle sensitivity $b/(a+b)=0\%, 20\%, 40\%, 60\%, 80\%$. Above each line the region is stable and below is unstable.}
    \label{s1}
\end{figure}
\section{Numerical simulations \label{sec4}}

To better visualize how stability improves with the application of platoon control, several numerical experiments are conducted using different road configurations and model parameters. All simulations are performed in MATLAB 2023b, with a time step of $\Delta t=0.1$s. The forward Euler scheme is used to update the velocity variable $\dot{x}_i$ and a modified Euler scheme (Heun's method) is employed to update the position variable $x_i$:
\begin{equation}
\begin{cases}
    \dot{x}_{i,j+1}=\dot{x}_{i,j}+\ddot{x}_{i,j}\Delta t;\\
    x_{i,j+1}=x_{i,j}+\frac{\dot{x}_{i,j}+\dot{x}_{i,j+1}}{2}\Delta t,
\end{cases}
\end{equation}
where $x_{i, j}$ is referring to the position of $i$-th car at $j$-th time step of simulation. This is equivalent to the discretizaiton method in \cite{zhu2018analysis}.

\subsection{Ring road with initial disturbance}
In this subsection, simulations are conducted on a ring road with a total length of the ring road is $L=264$m with 12 vehicles. For model parameters, we use the optimal velocity function given by \eqref{ovf1} setting $h_{\min}=7$m, $h_{\max}=37$m, $v_{\max}=20$m/s and $l=5$m. This results in an equilibrium headway $h=L/N=22$m and an equilibrium velocity $V(h)=10$ m/s. The initial position and velocity of the $i$-th vehicle deviate from the equilibrium states $(e_i,V(h))$ with random perturbation uniformly distributed on the interval $[0,5]$. The initial condition of the model can be written as
\begin{equation}
    \begin{cases}
        x_i(0) =& e_i(0)+r_i; \\
        \dot{x}_i(0) = & V(h)+\bar{r}_i,
    \end{cases}
\end{equation}
where $r_i,\bar{r}_i$ are random numbers generated from uniform distribution on $[0,5]$, and $e_i(0)=hi$ can be calculated from \eqref{equi}.

\textbf{Simulation 1.1}: Comparison between OVM and P-OVM

To demonstrate the stability improvement of P-OVM, we tested sensitivity constants $a=0.4,\;0.8,\;1.6,\;2.4$ for both OVM and P-OVM  under the same initial conditions. Simulation results are shown for the original OVM and platoon-controlled OVM in Figures \ref{11}-\ref{14}. Figures \ref{11}, \ref{12} are the 3-D plots of headways of all vehicles under OVM and P-OVM, respectively, with different values of $a$. Figures \ref{13}, \ref{14} are the velocity profiles of the $6$th vehicle under OVM and P-OVM, respectively, with different values of $a$.

\begin{figure}
    \centering
    \subfloat[$a=0.4$]{
        \includegraphics[width=0.4\textwidth]{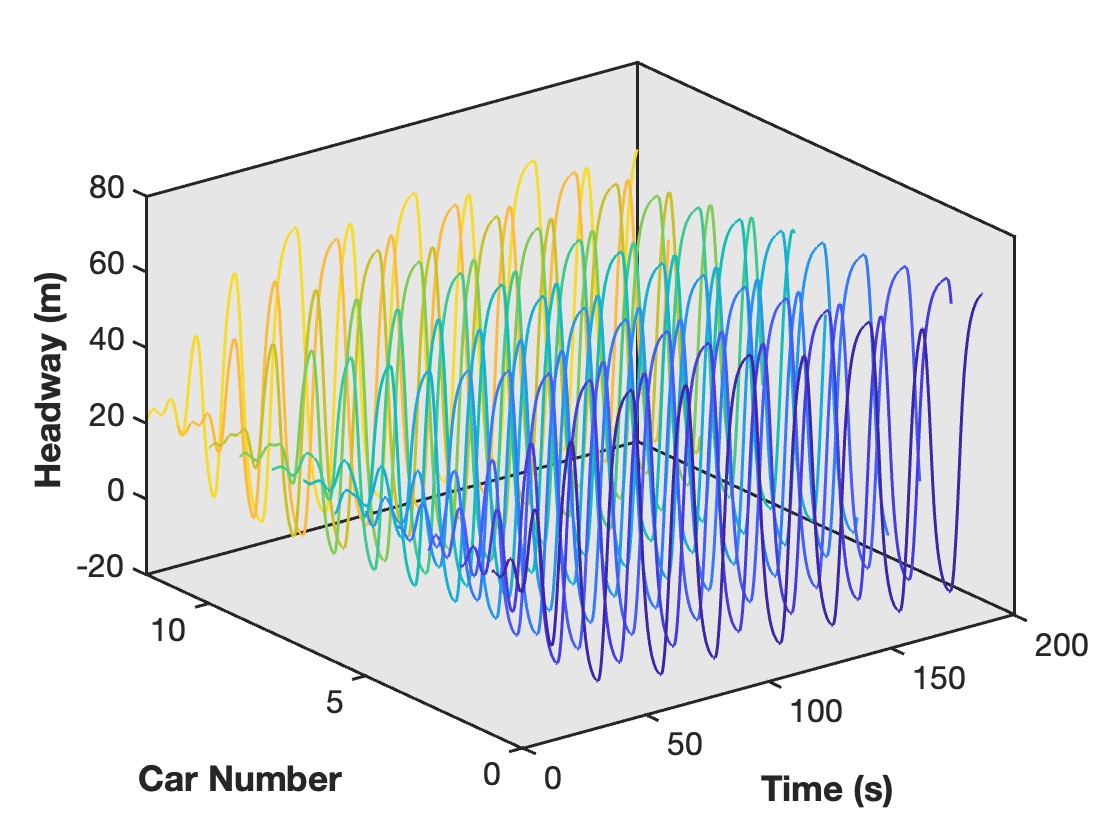}
    }
    \subfloat[$a=0.8$]{
        \includegraphics[width=0.4\textwidth]{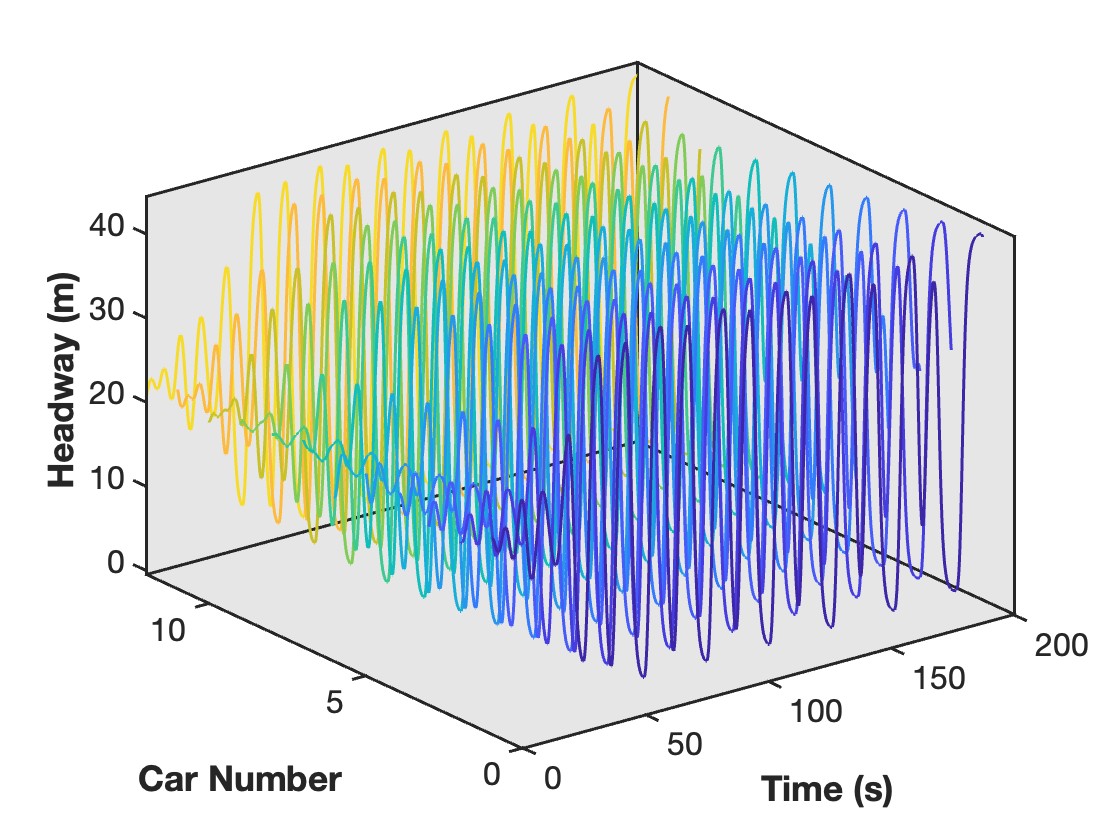}
    }
    \\
    \subfloat[$a=1.6$]{
        \includegraphics[width=0.4\textwidth]{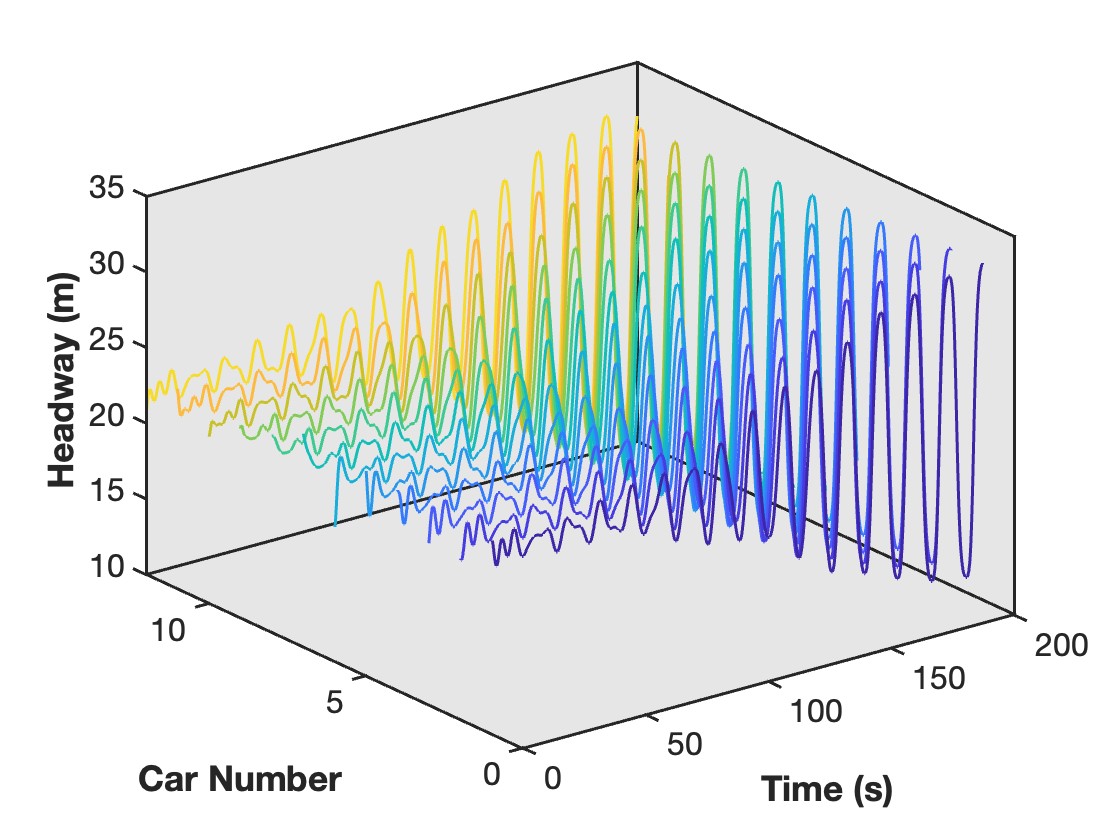}
    }
    \subfloat[$a=2.4$]{
        \includegraphics[width=0.4\textwidth]{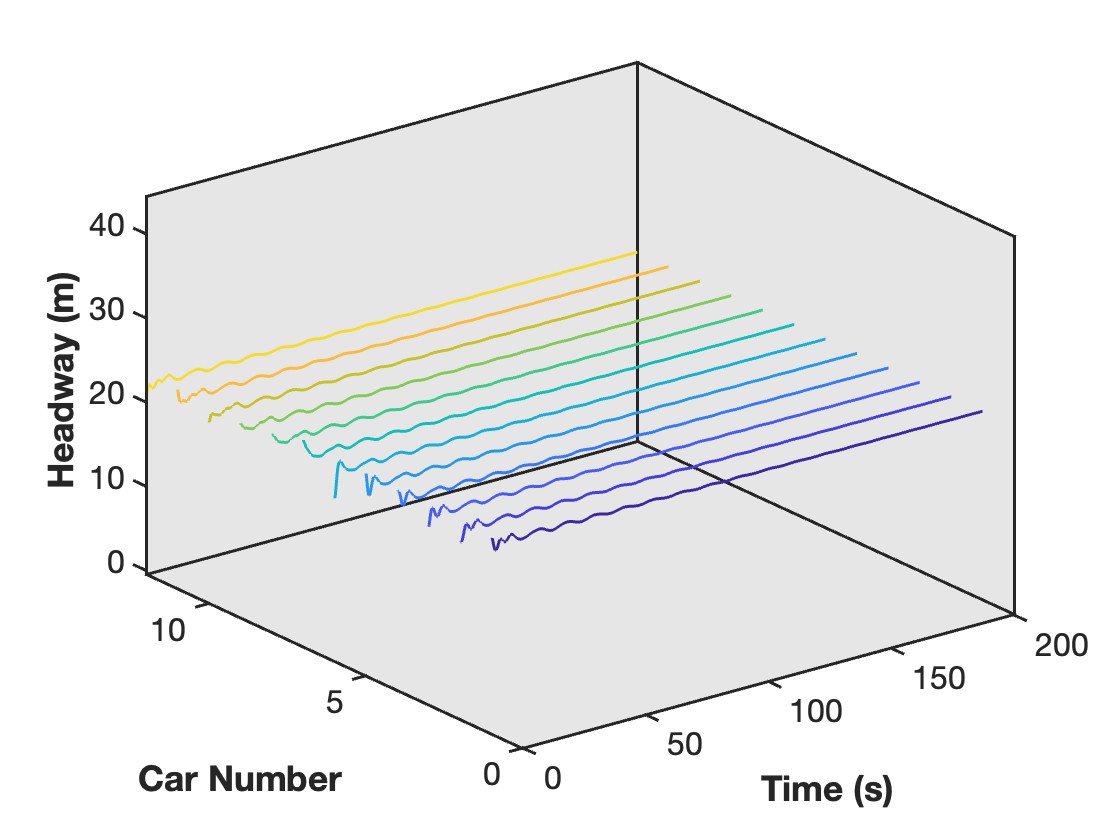}
    }
    \caption{Headway profile of OVM with sensitivity constant $a=0.4,\;0.8\;,1.6,\;2.4$}
    \label{11}
\end{figure}

\begin{figure}
    \centering
    \subfloat[$a=0.4$]{
        \includegraphics[width=0.4\textwidth]{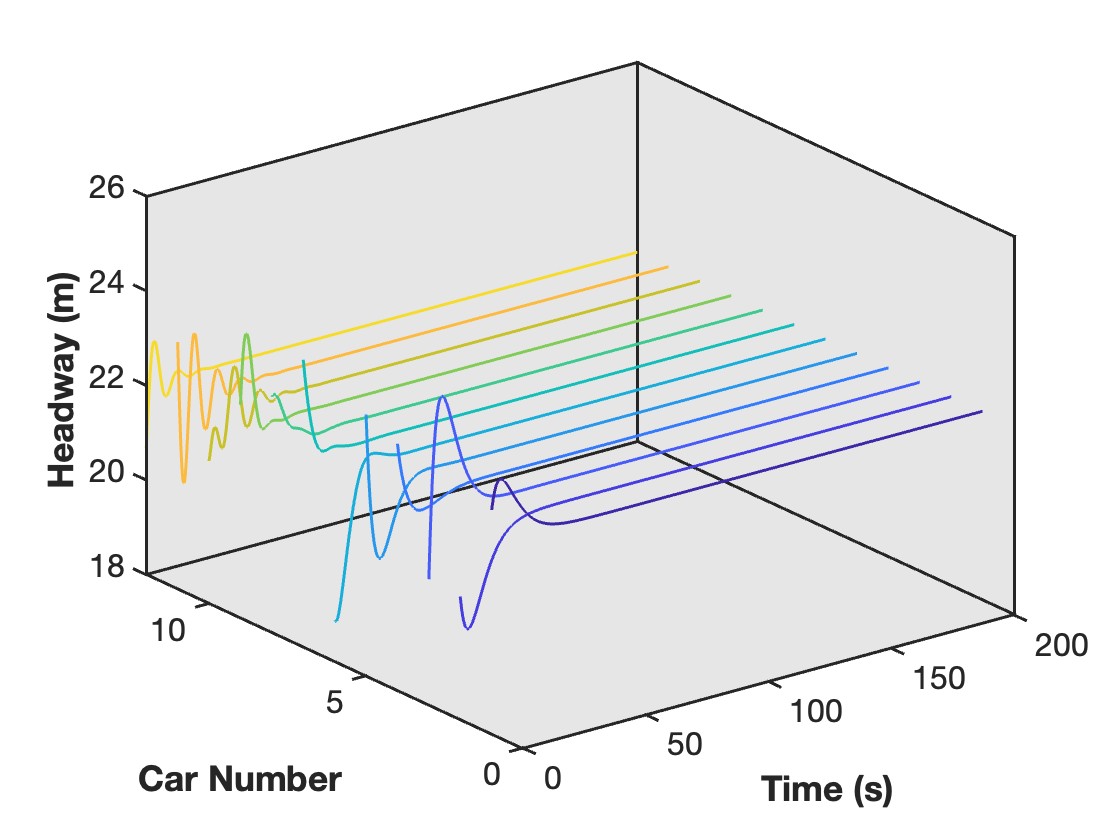}
    }
    \subfloat[$a=0.8$]{
        \includegraphics[width=0.4\textwidth]{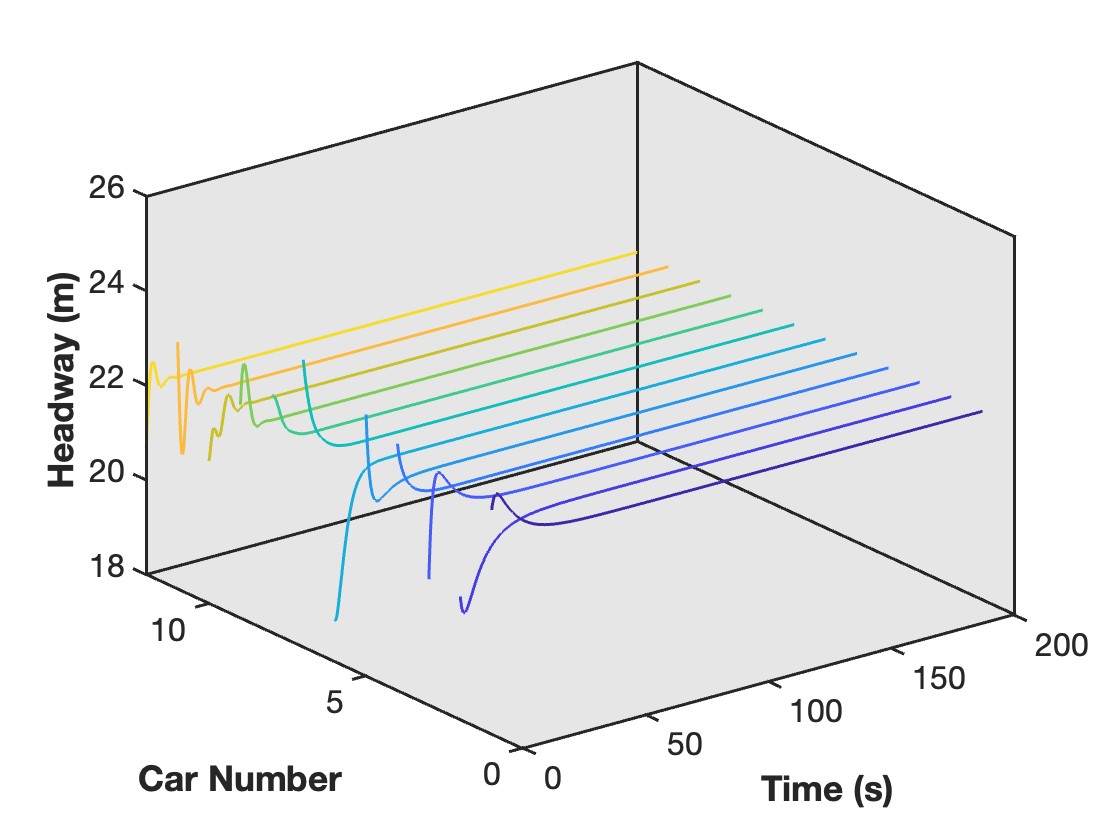}
    }
    \\
    \subfloat[$a=1.6$]{
        \includegraphics[width=0.4\textwidth]{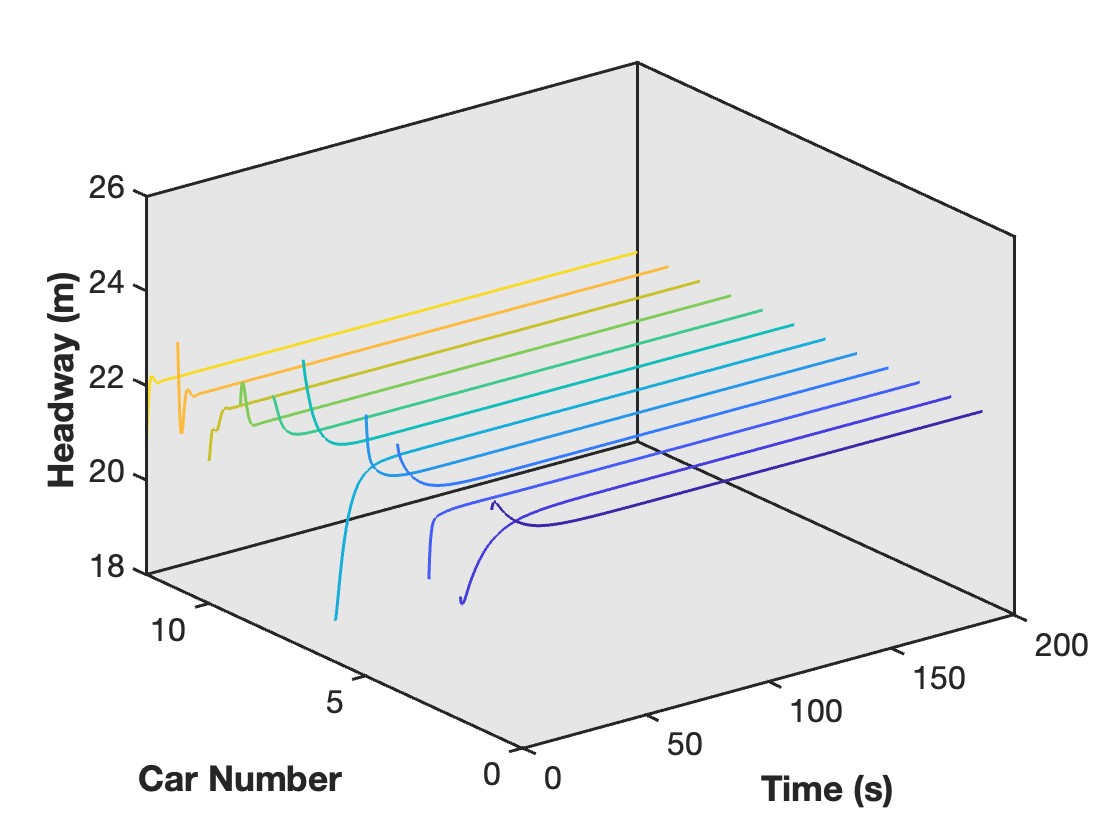}
    }
    \subfloat[$a=2.4$]{
        \includegraphics[width=0.4\textwidth]{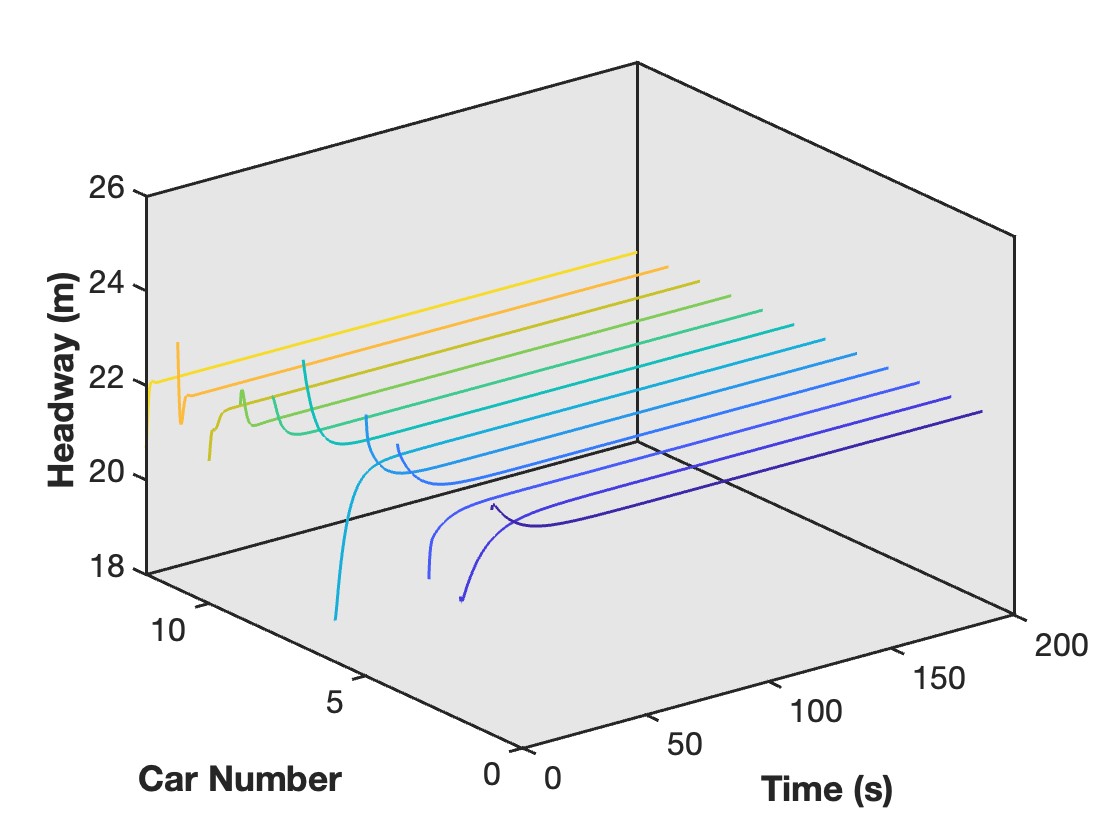}
    }
    \caption{Headway profile of P-OVM with sensitivity constant $a=0.4,\;0.8,\;1.6,\;2.4$}
    \label{12}
\end{figure}

\begin{figure}
    \centering
    \subfloat[$a=0.4$]{
        \includegraphics[width=0.4\textwidth]{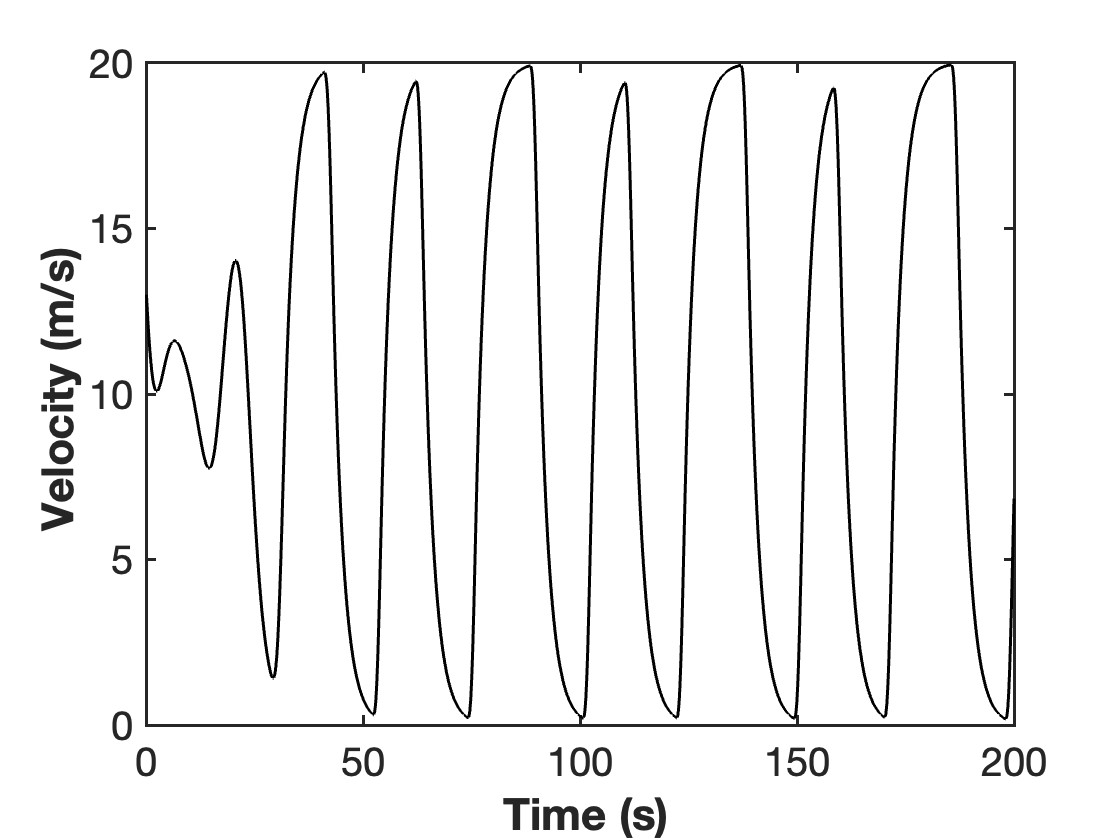}
    }
    \subfloat[$a=0.8$]{
        \includegraphics[width=0.4\textwidth]{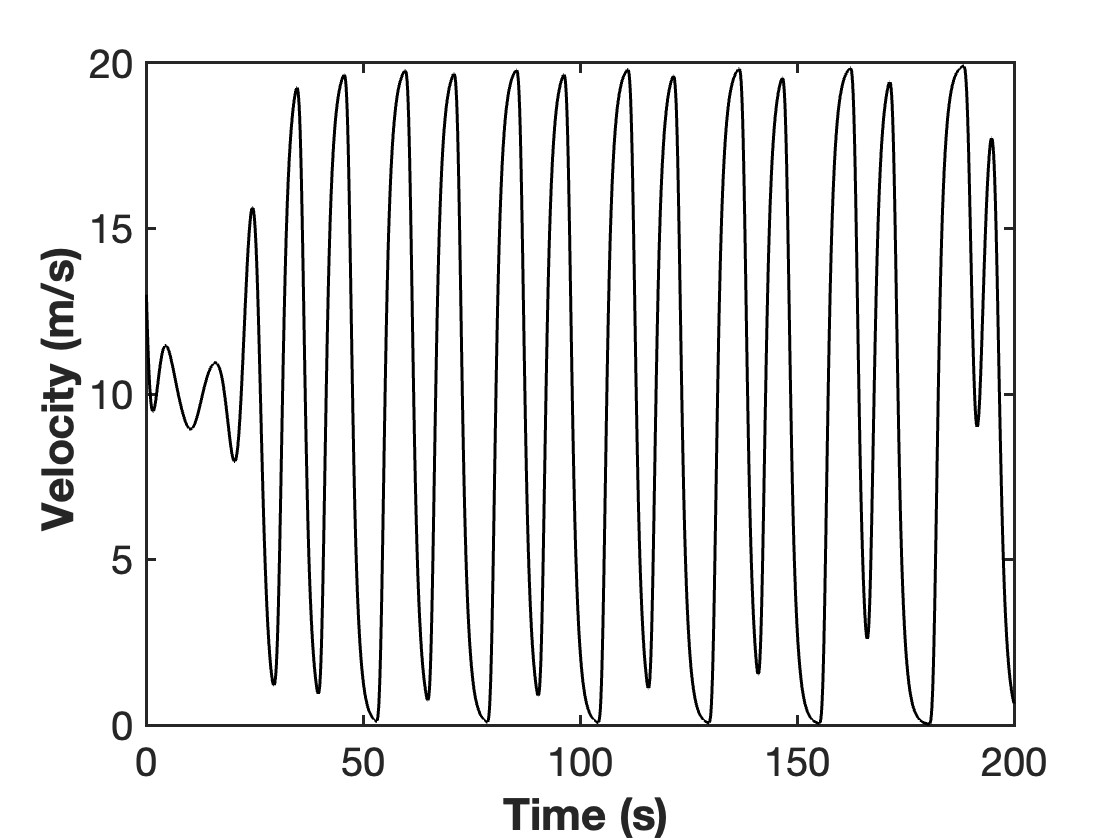}
    }
    \\
    \subfloat[$a=1.6$]{
        \includegraphics[width=0.4\textwidth]{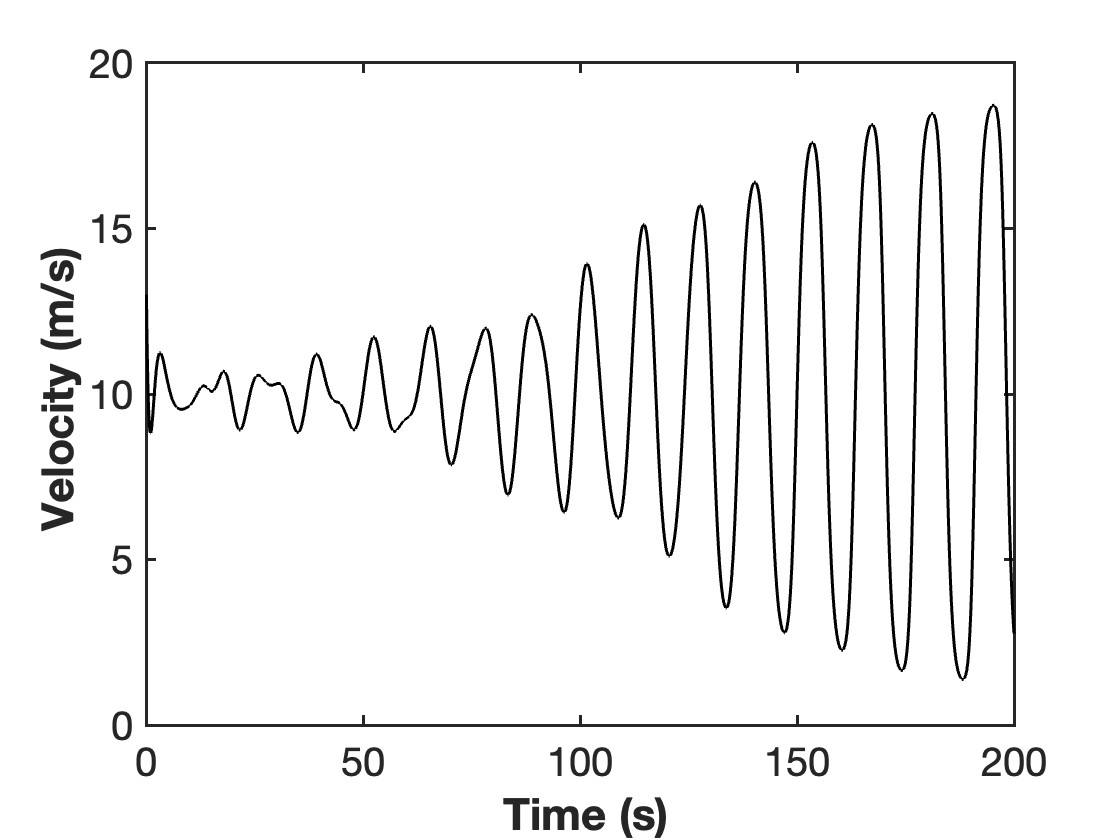}
    }
    \subfloat[$a=2.4$]{
        \includegraphics[width=0.4\textwidth]{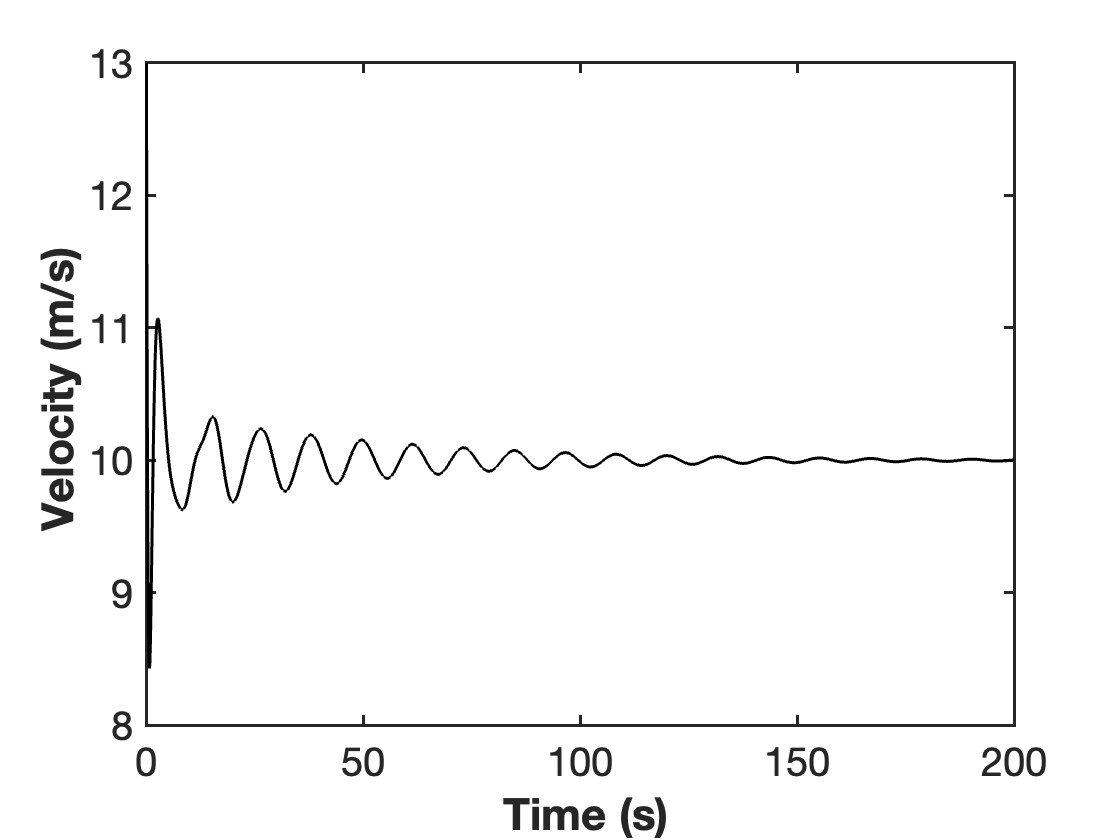}
    }
    \caption{Velocity profile of $6$th vehicle of OVM with sensitivity constant $a=0.4,\;0.8,\;1.6,\;2.4$}
    \label{13}
\end{figure}

\begin{figure}
    \centering
    \subfloat[$a=0.4$]{
        \includegraphics[width=0.4\textwidth]{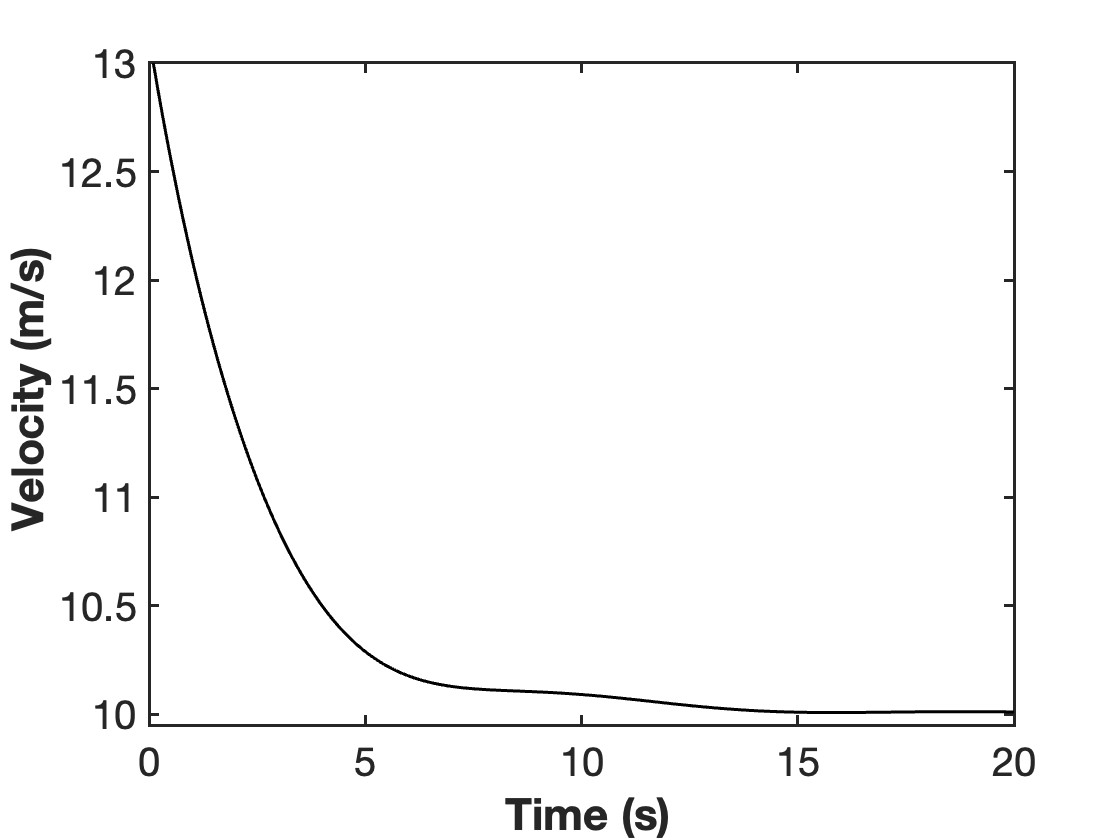}
    }
    \subfloat[$a=0.8$]{
        \includegraphics[width=0.4\textwidth]{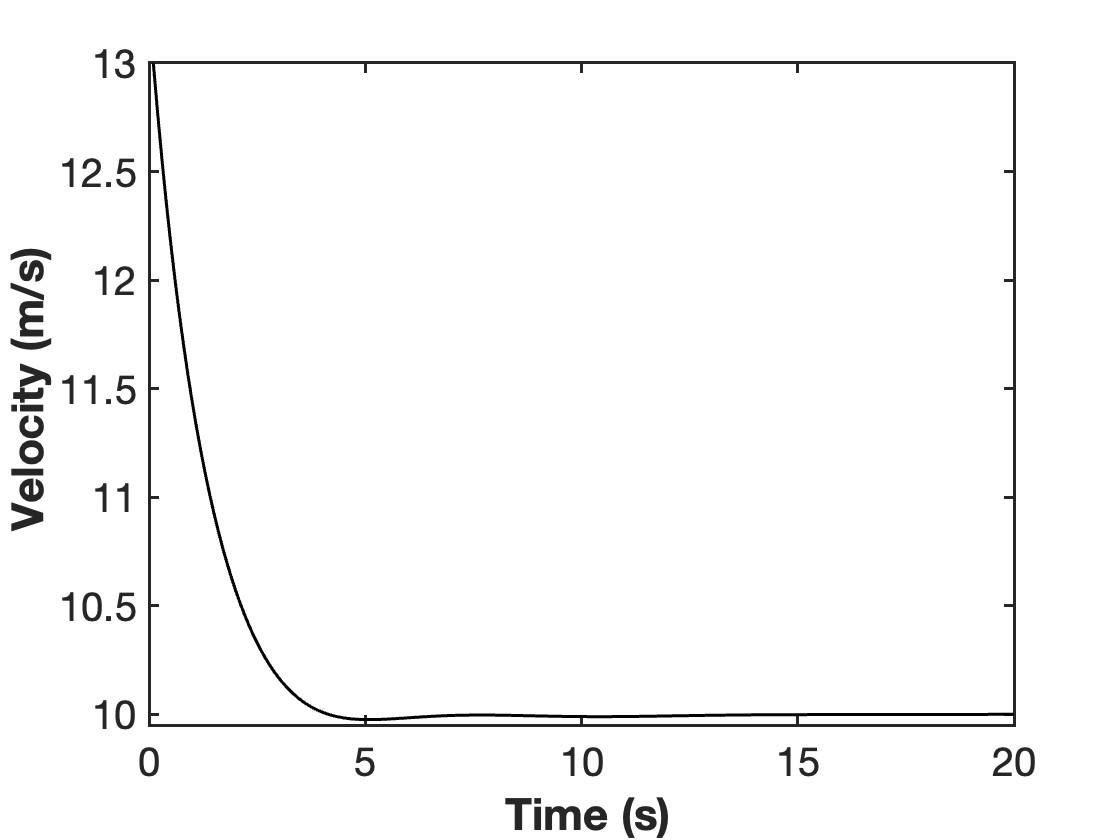}
    }
    \\
    \subfloat[$a=1.6$]{
        \includegraphics[width=0.4\textwidth]{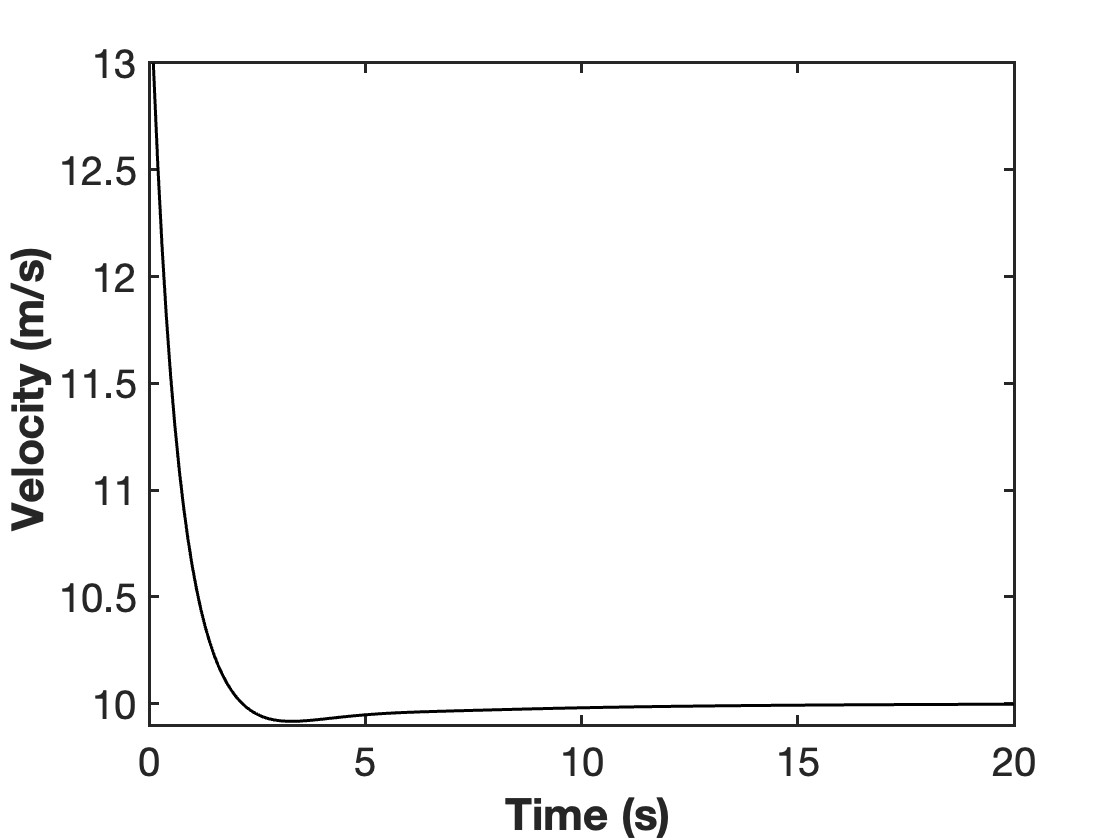}
    }
    \subfloat[$a=2.4$]{
        \includegraphics[width=0.4\textwidth]{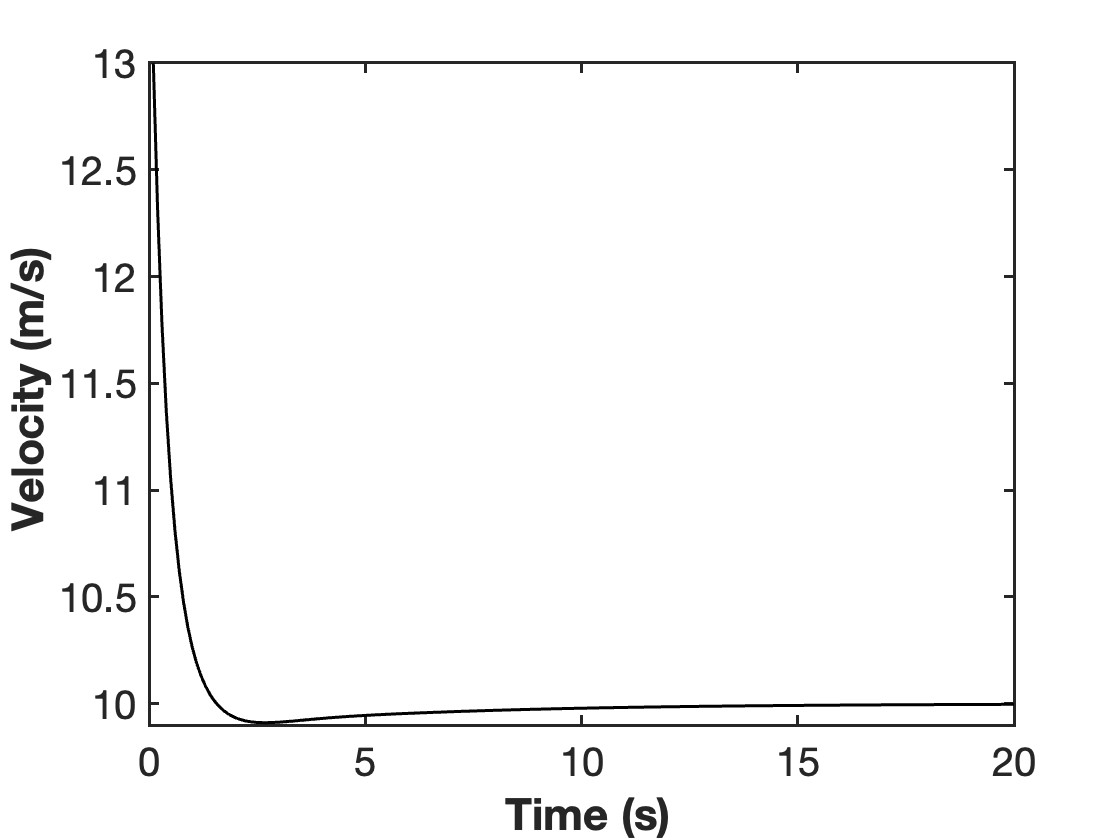}
    }
    \caption{Velocity profile of $6$th vehicle of P-OVM with sensitivity constant $a=0.4,\;0.8,\;1.6,\;2.4$}
    \label{14}
\end{figure}
The simulation results show that the OVM is unstable for $a=0.4,\;0.8,\;1.6$. At $a=0.4$, negative headway are generated, indicating a risk of collisions due to low sensitivity. For $a=2.4$ although the OVM is stable, the initial acceleration for the selected vehicle is unrealistically high. In contrast, the P-OVM remains stable for all selected values of $a$ the model is stable and $a$ with sensitivity primarily affecting the convergence speed. For greater values of $a$, traffic converge to the equilibrium state faster. Yet when $a$ is sufficiently large, further improvements in convergence become negligible, as the original OVM is already stable.

\textbf{Simulation 1.2}: Test of some T-OVM parameters

To test the performance of T-OVM, four cases of sensitivity pair: $(a,b)=(0.5,0.1),$ $(0.1,0.5),$ $(1,0.2),$ $(0.6,0.6)$ are tested under the same initial conditions. The extreme cases where $a=0$ or $b=0$ are not considered as they essentially reduce to OVM or P-OVM, respectively. Simulation results are in Figures \ref{21}-\ref{22}. Figure \ref{21} is the headway profile of all 12 vehicles in 3-D plot with different pairs of sensitivity constants, and Figure \ref{22} is the velocity profile of the $6$th vehicle.

\begin{figure}
    \centering
    \subfloat[$(a,b)=(0.5,0.1)$]{
        \includegraphics[width=0.4\textwidth]{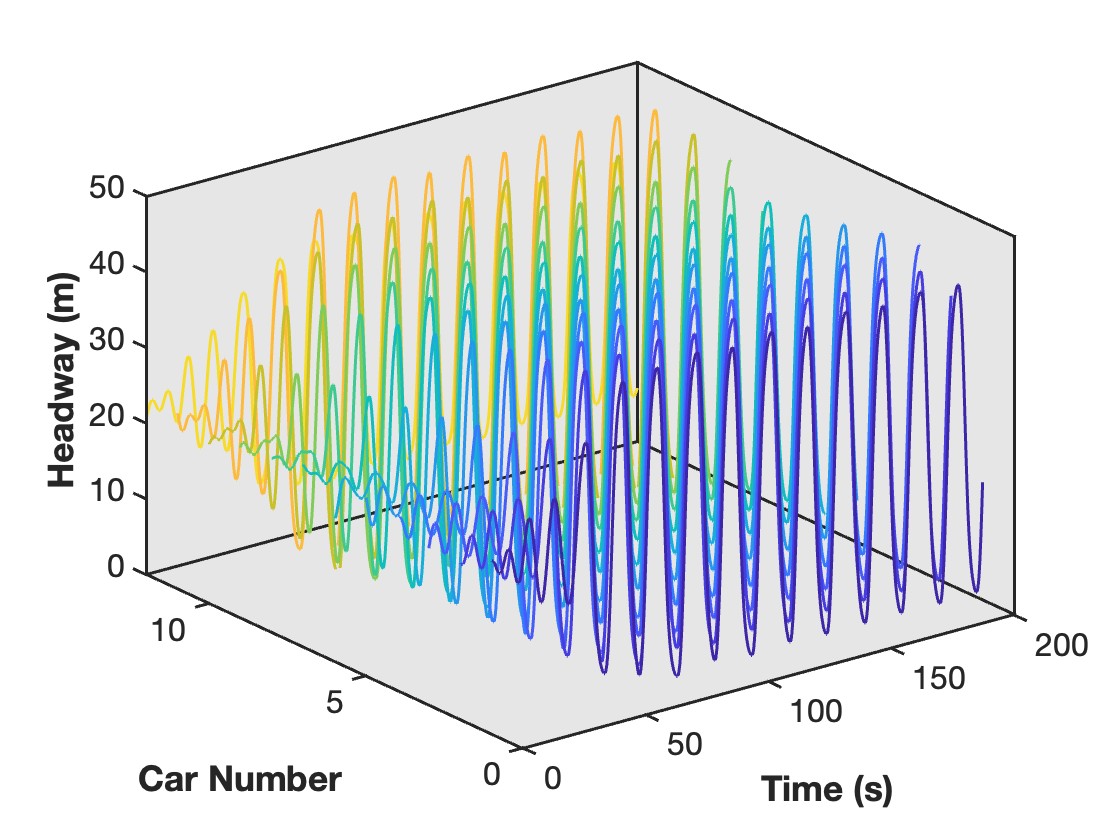}
    }
    \subfloat[$(a,b)=(0.1,0.5)$]{
        \includegraphics[width=0.4\textwidth]{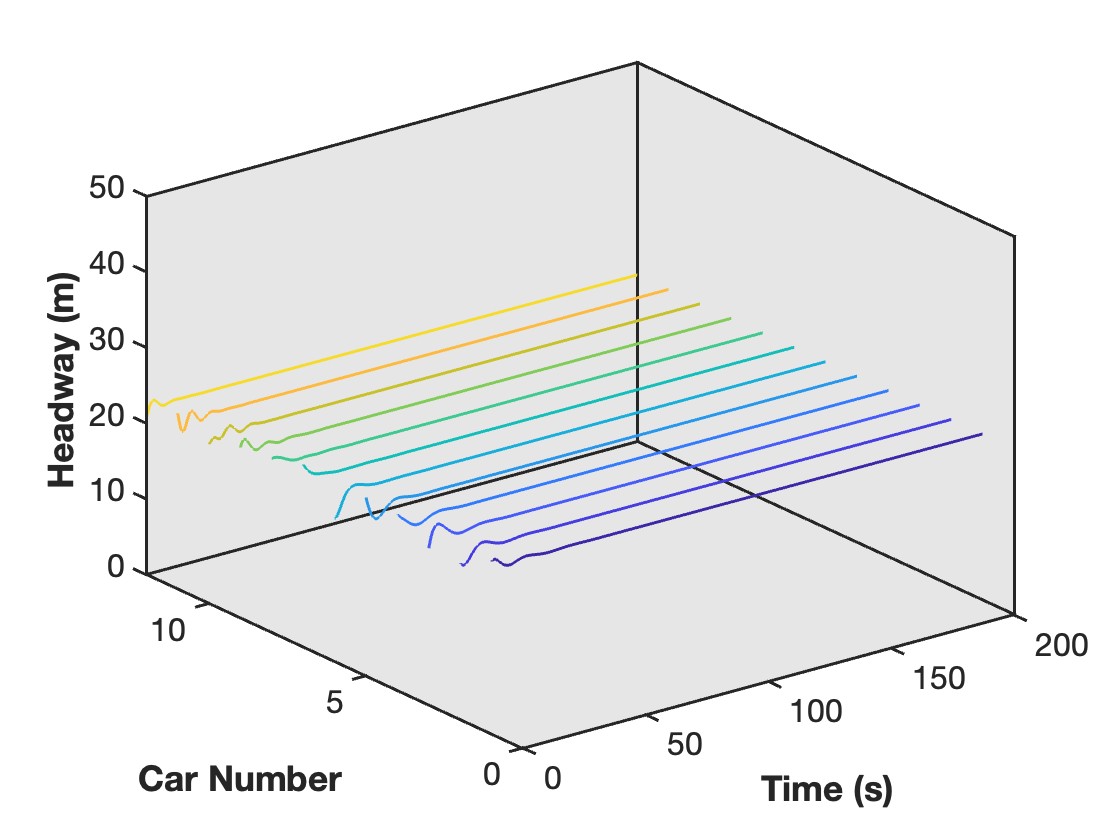}
    }
    \\
    \subfloat[$(a,b)=(1,0.2)$]{
        \includegraphics[width=0.4\textwidth]{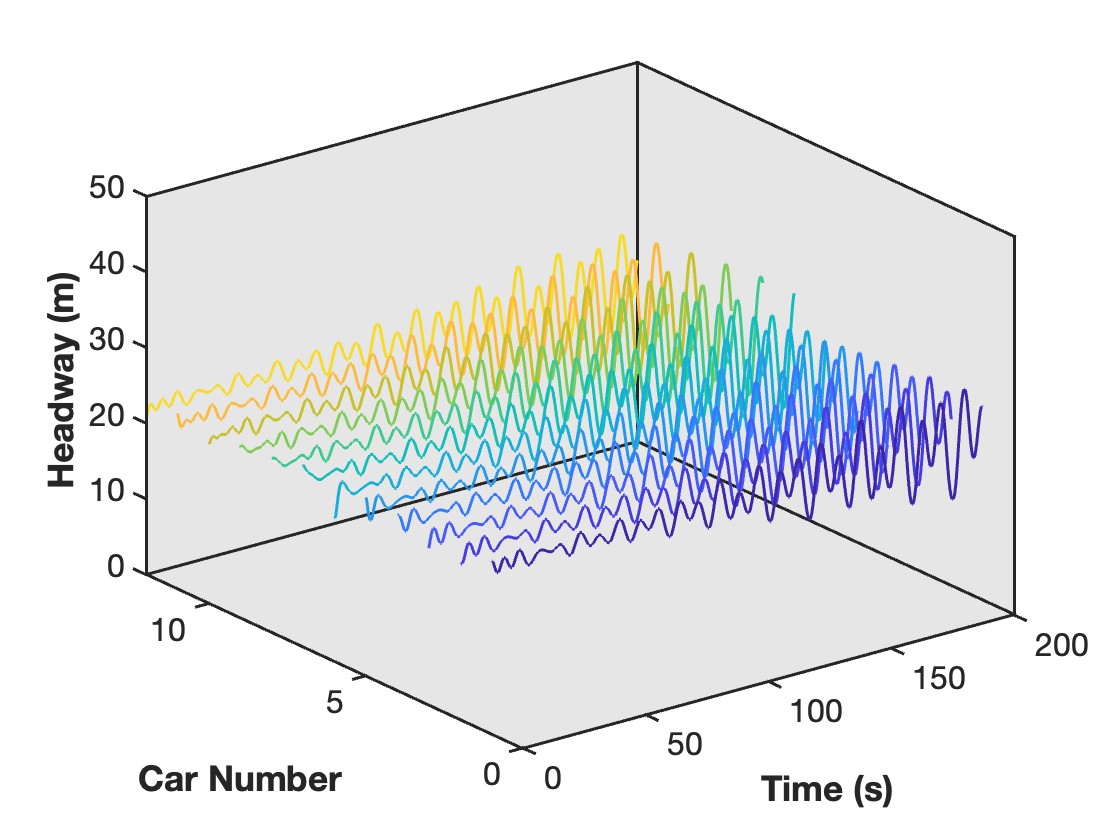}
    }
    \subfloat[$(a,b)=(0.6,0.6)$]{
        \includegraphics[width=0.4\textwidth]{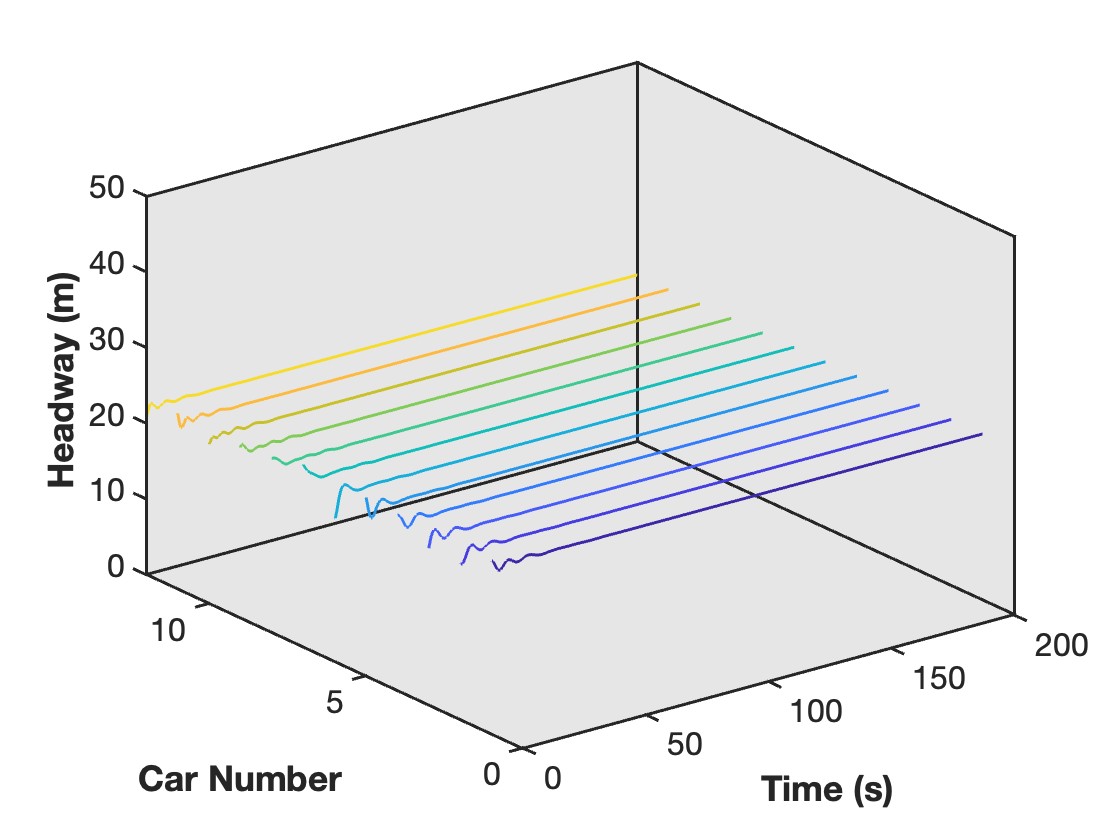}
    }
    \caption{Headway profile of T-OVM with sensitivity constant $(a,b)=(0.5,0.1),(0.1,0.5),(1,0.2),(0.6,0.6)$}
    \label{21}
\end{figure}

\begin{figure}
    \centering
    \subfloat[$(a,b)=(0.5,0.1)$]{
        \includegraphics[width=0.4\textwidth]{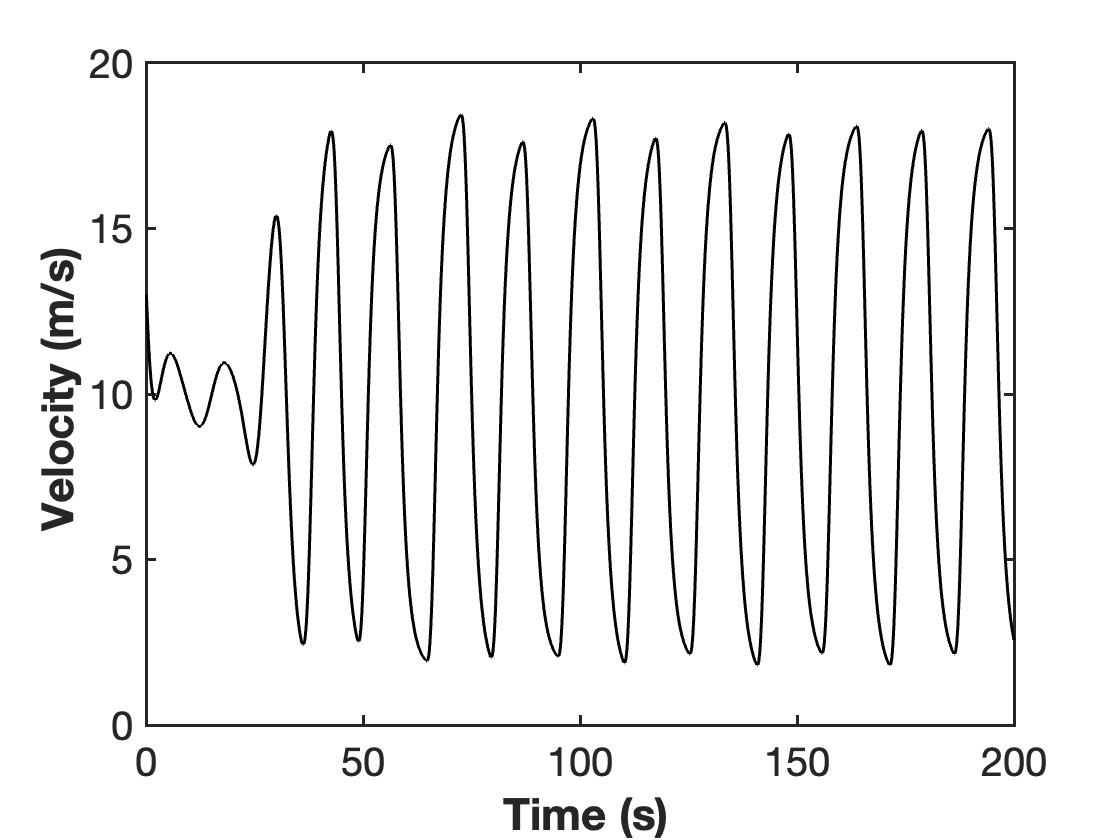}
    }
    \subfloat[$(a,b)=(0.1,0.5)$]{
        \includegraphics[width=0.4\textwidth]{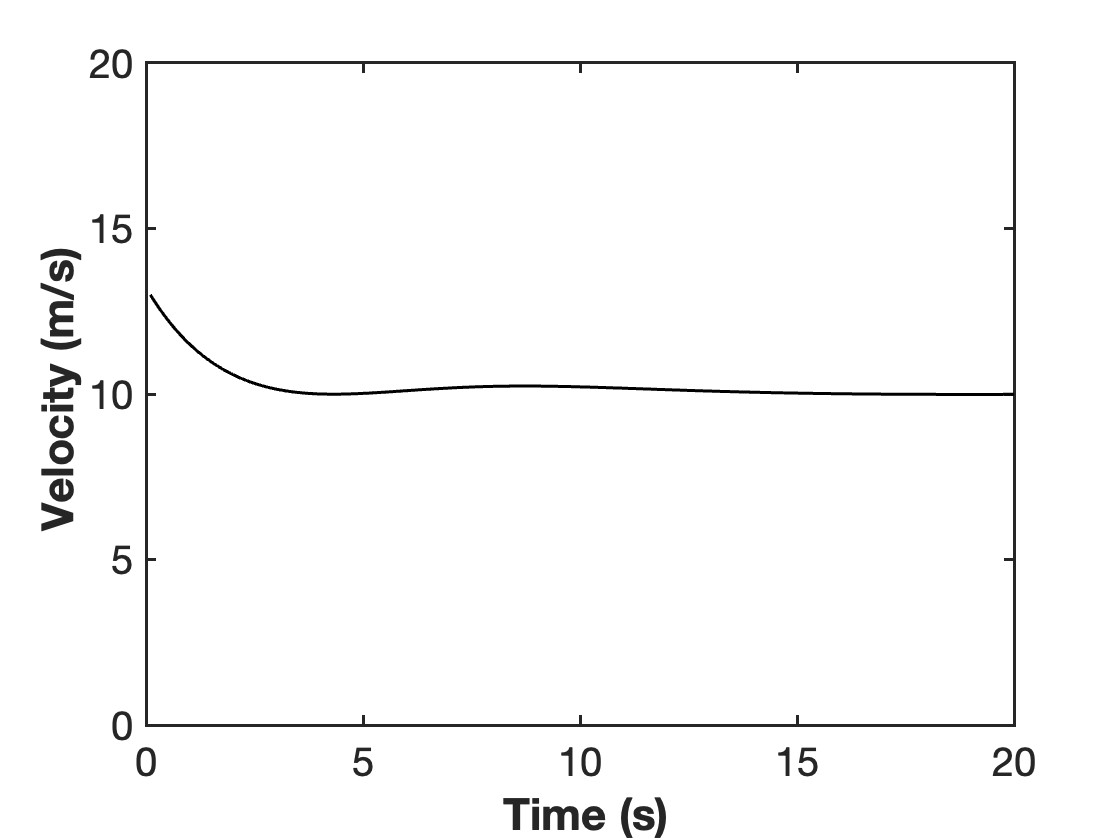}
    }
    \\
    \subfloat[$(a,b)=(1,0.2)$]{
        \includegraphics[width=0.4\textwidth]{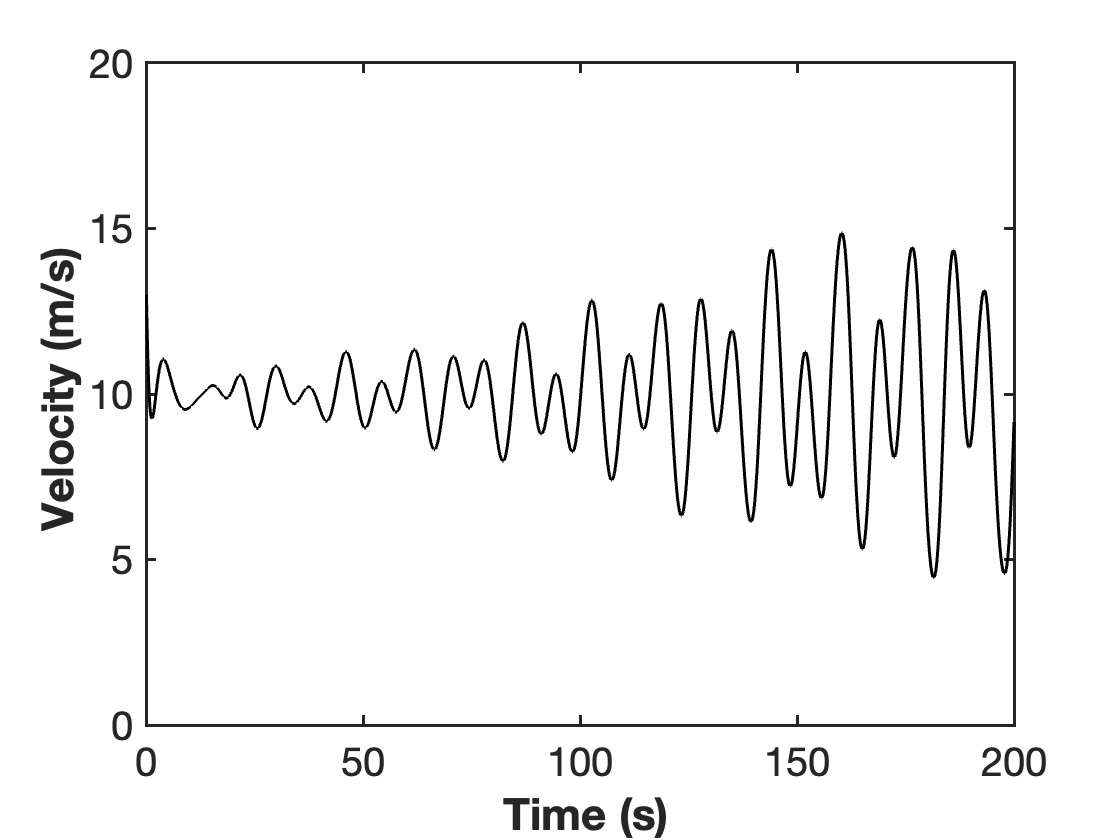}
    }
    \subfloat[$(a,b)=(0.6,0.6)$]{
        \includegraphics[width=0.4\textwidth]{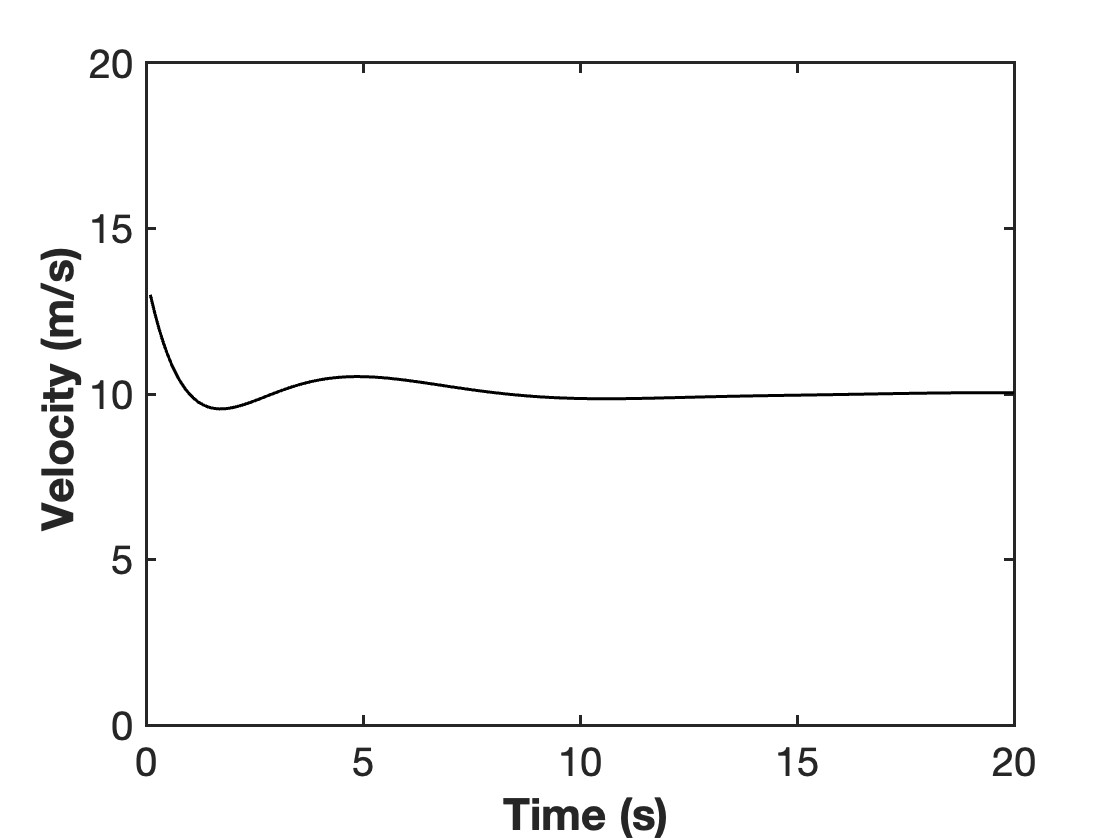}
    }
    \caption{Velocity profile of $6$th vehicle of T-OVM with sensitivity constant $(a,b)=(0.5,0.1),(0.1,0.5),(1,0.2),(0.6,0.6)$}
    \label{22}
\end{figure}

The simulation results show that as the percentage of sensitivity to the leading vehicle increases, the traffic stream becomes more stable. Additionally, for higher total sensitivity $(a+b)$ lower percentages of leading vehicle sensitivity is required to achieve string stability, which aligns with the theoretical results (see figure \ref{s1}).

\textbf{Simulation 1.3}: Comparison of the T-OVM with the two cars ahead following OVM

To show the effectiveness of following the platoon leader, we compare the T-OVM \eqref{tran} with the front multi-following OVM (F-OVM) of two cars in \cite{lenz1999multi}:
\begin{equation}
    \ddot{x}_i=a(V(x_{i+1}-x_i)-\dot{x}_i)+b\left(V\left(\frac{x_{i+2}-x_i}{2}\right)-\dot{x}_i\right),
    \label{OVM2front}
\end{equation}
where the sensitivity $b$ is for the second vehicle in front. Two cases of sensitivity pair: $(a,b)=(0.8, 0.4),(0.2, 0.4)$ are tested. Simulation results are in Figures \ref{add1}, \ref{add2}. Figure \ref{add1} is the headway profile of all 12 vehicles in 3-D plot with T-OVM and F-OVM and Figure \ref{add2} is the velocity profile of the $6$th vehicle.

\begin{figure}
    \centering
    \subfloat[T-OVM with $(a,b)=(0.8, 0.4)$]{
        \includegraphics[width=0.4\textwidth]{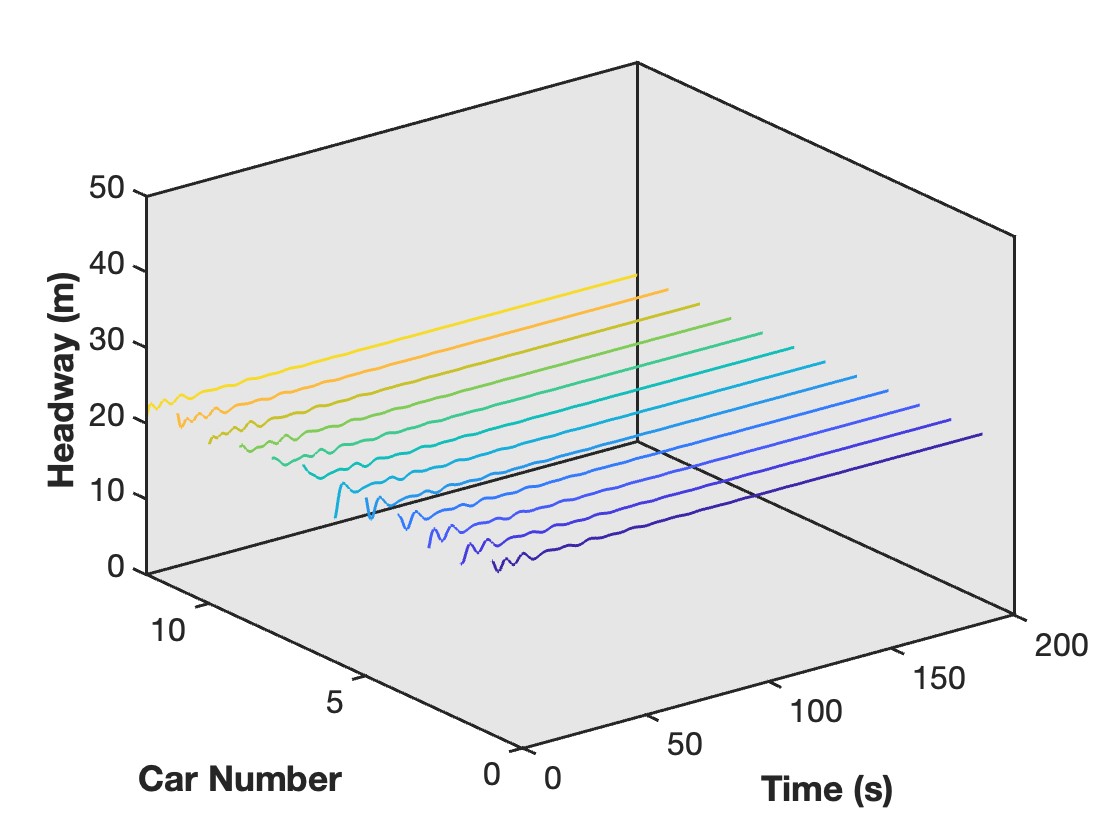}
    }
    \subfloat[F-OVM with $(a,b)=(0.8, 0.4)$]{
        \includegraphics[width=0.4\textwidth]{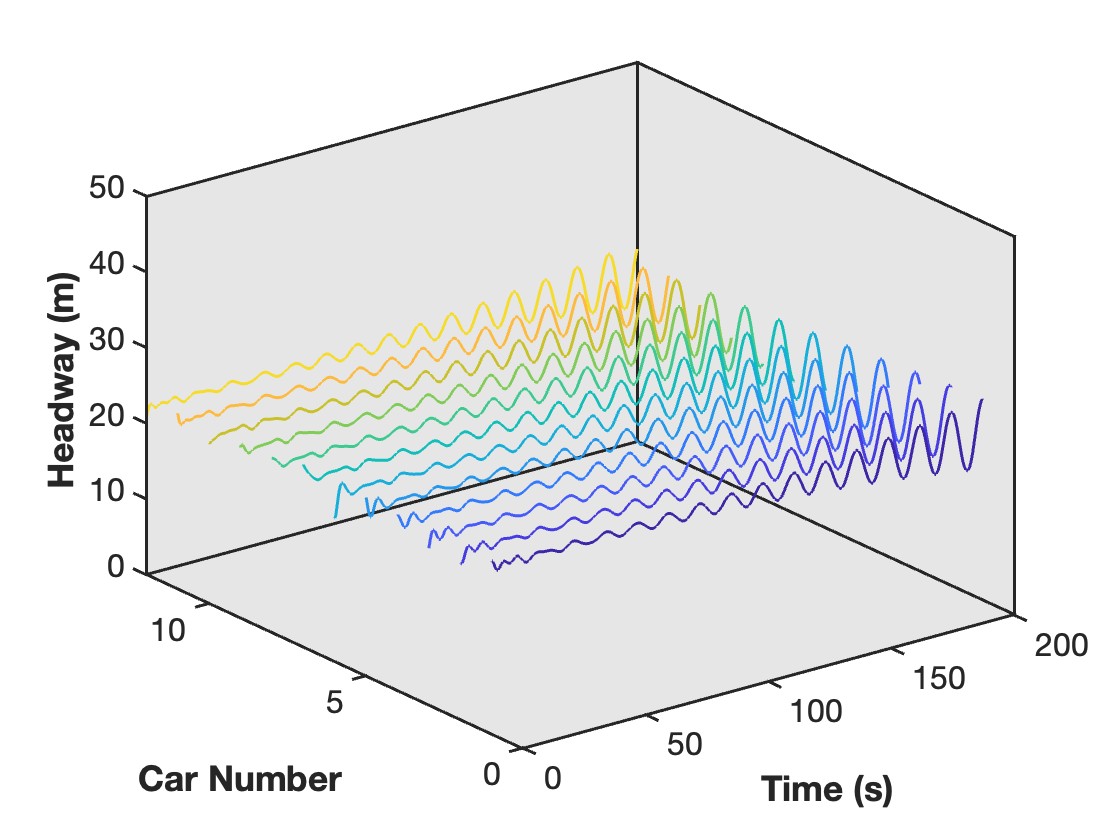}
    }
    \\
    \subfloat[T-OVM with $(a,b)=(0.2, 0.4)$]{
        \includegraphics[width=0.4\textwidth]{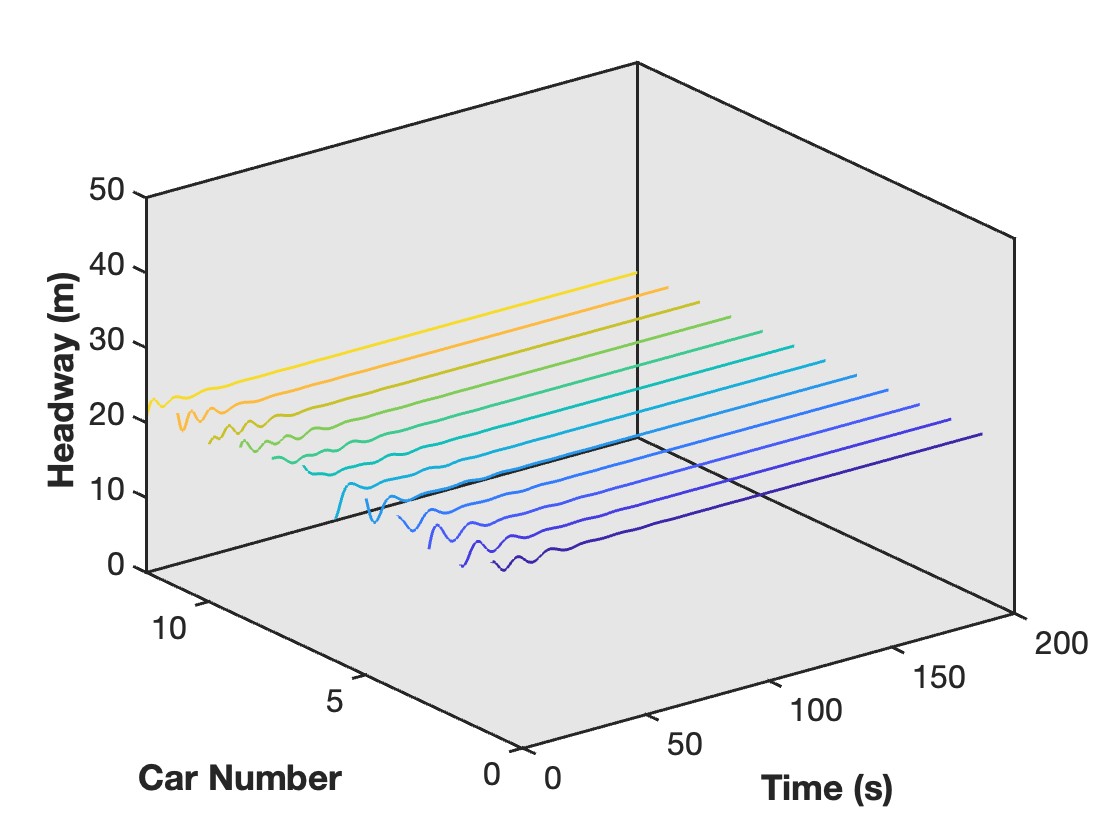}
    }
    \subfloat[F-OVM with $(a,b)=(0.2, 0.4)$]{
        \includegraphics[width=0.4\textwidth]{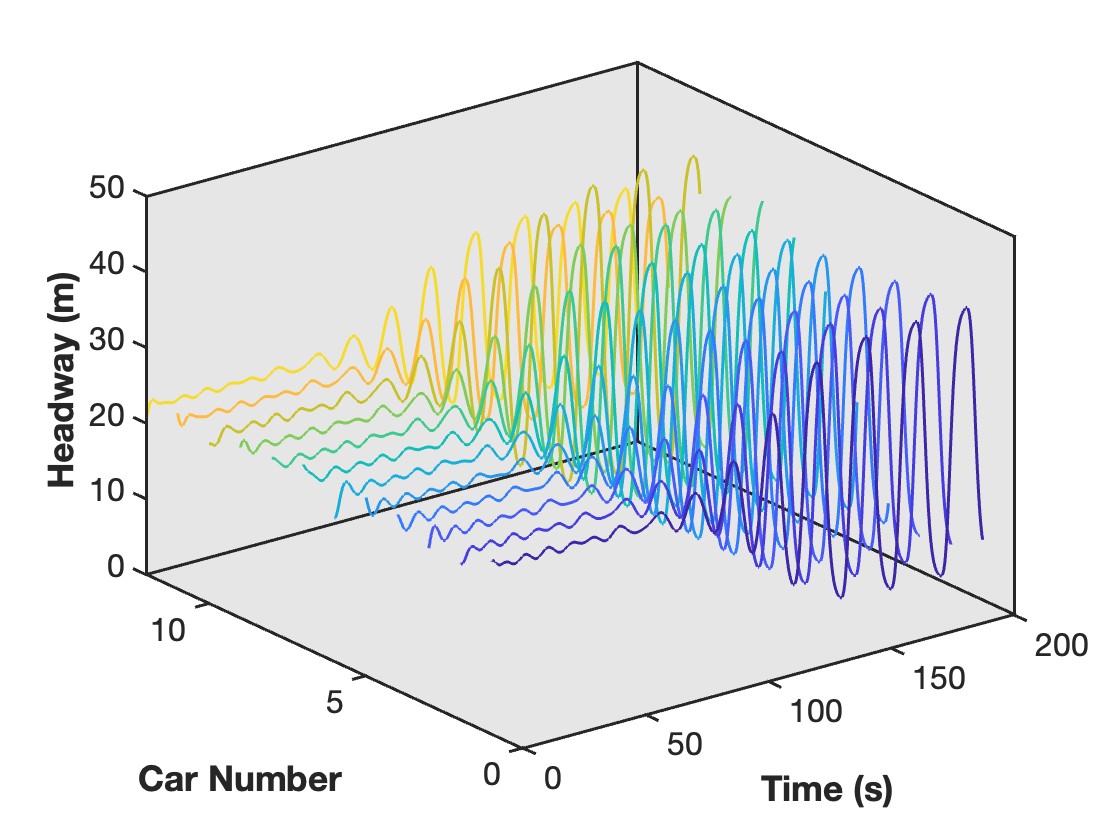}
    }
    \caption{Headway profile of T-OVM and F-OVM with sensitivity constant $(a,b)=(0.8, 0.4),(0.2, 0.4)$}
    \label{add1}
\end{figure}

\begin{figure}
    \centering
    \subfloat[T-OVM with $(a,b)=(0.8, 0.4)$]{
        \includegraphics[width=0.4\textwidth]{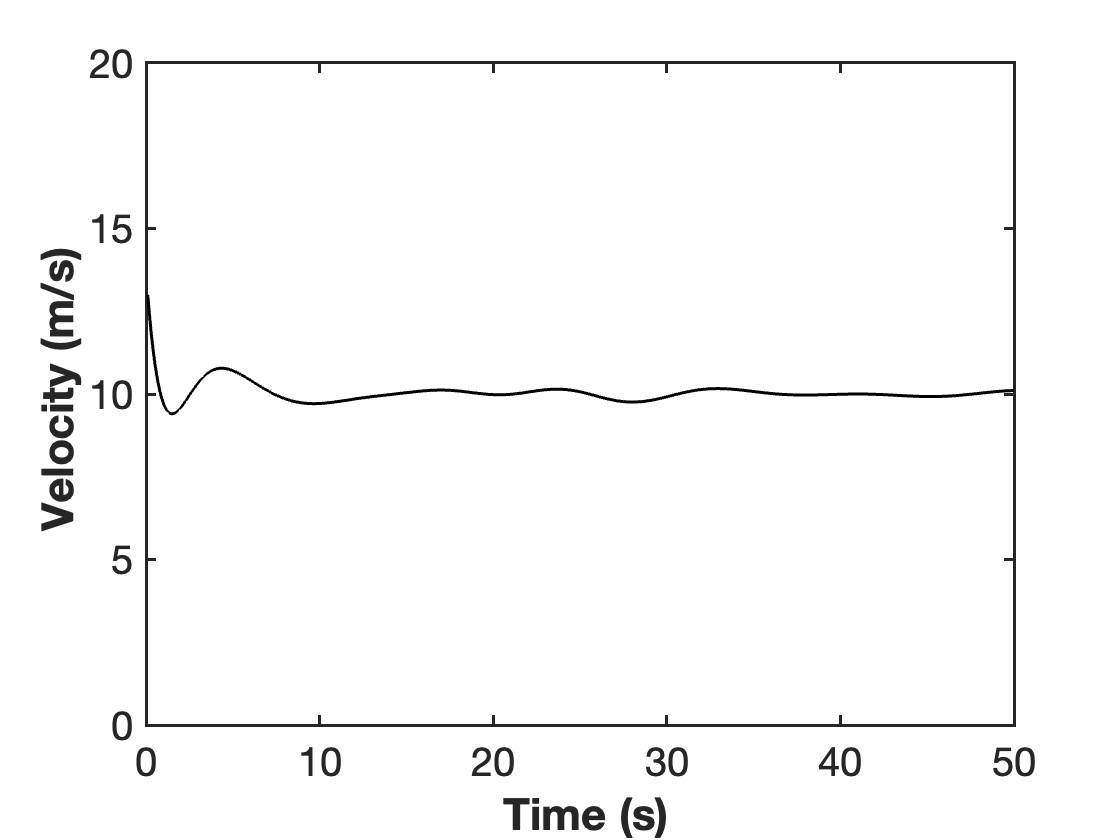}
    }
    \subfloat[F-OVM with $(a,b)=(0.8, 0.4)$]{
        \includegraphics[width=0.4\textwidth]{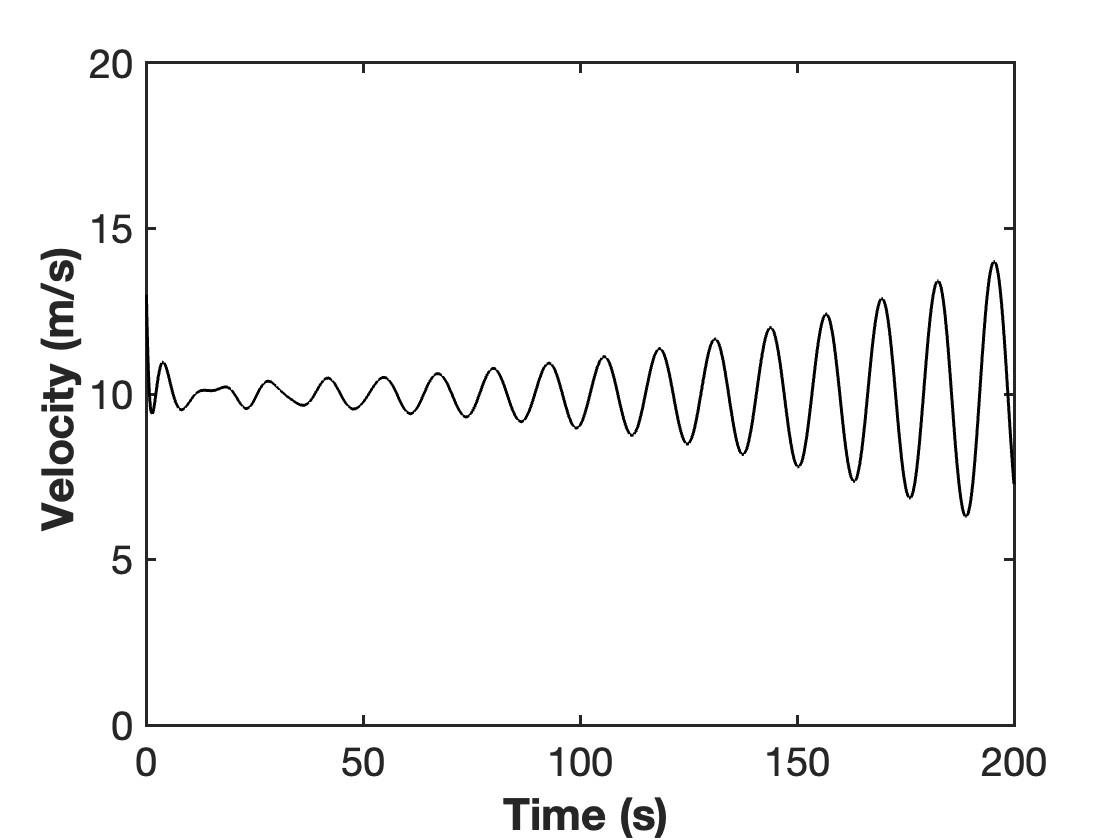}
    }
    \\
    \subfloat[T-OVM with $(a,b)=(0.2, 0.4)$]{
        \includegraphics[width=0.4\textwidth]{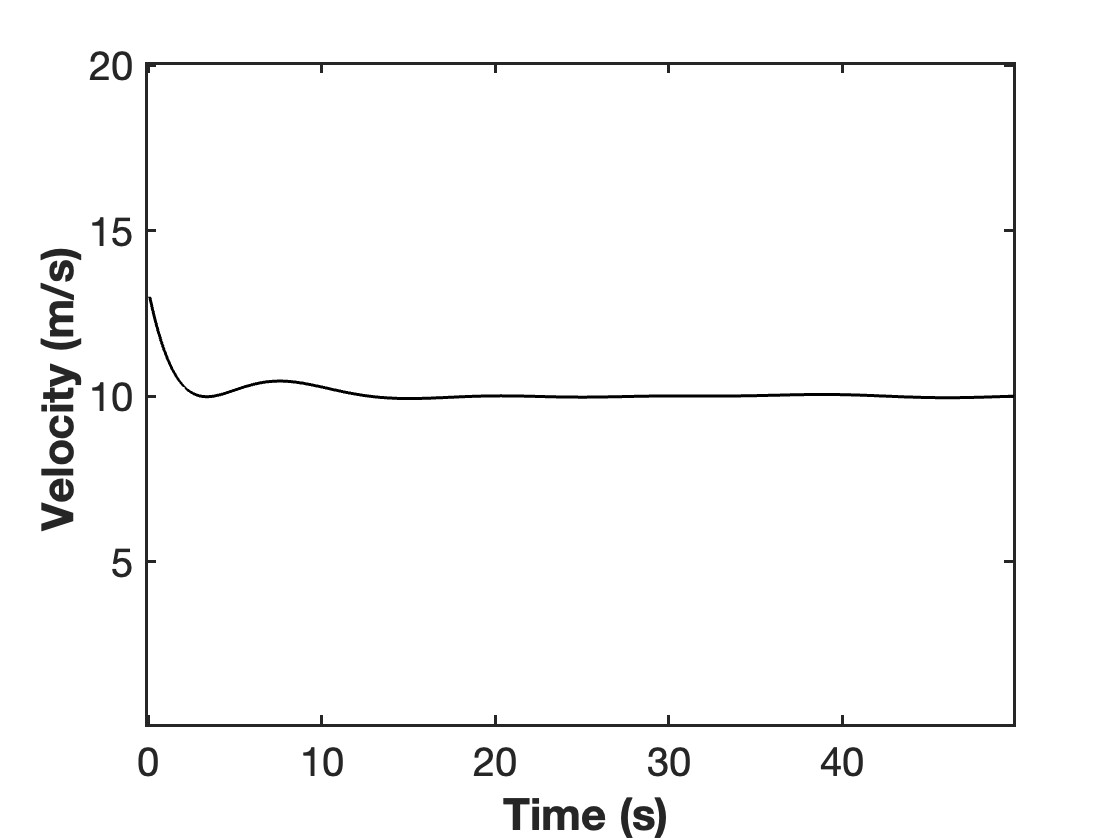}
    }
    \subfloat[F-OVM with $(a,b)=(0.2, 0.4)$]{
        \includegraphics[width=0.4\textwidth]{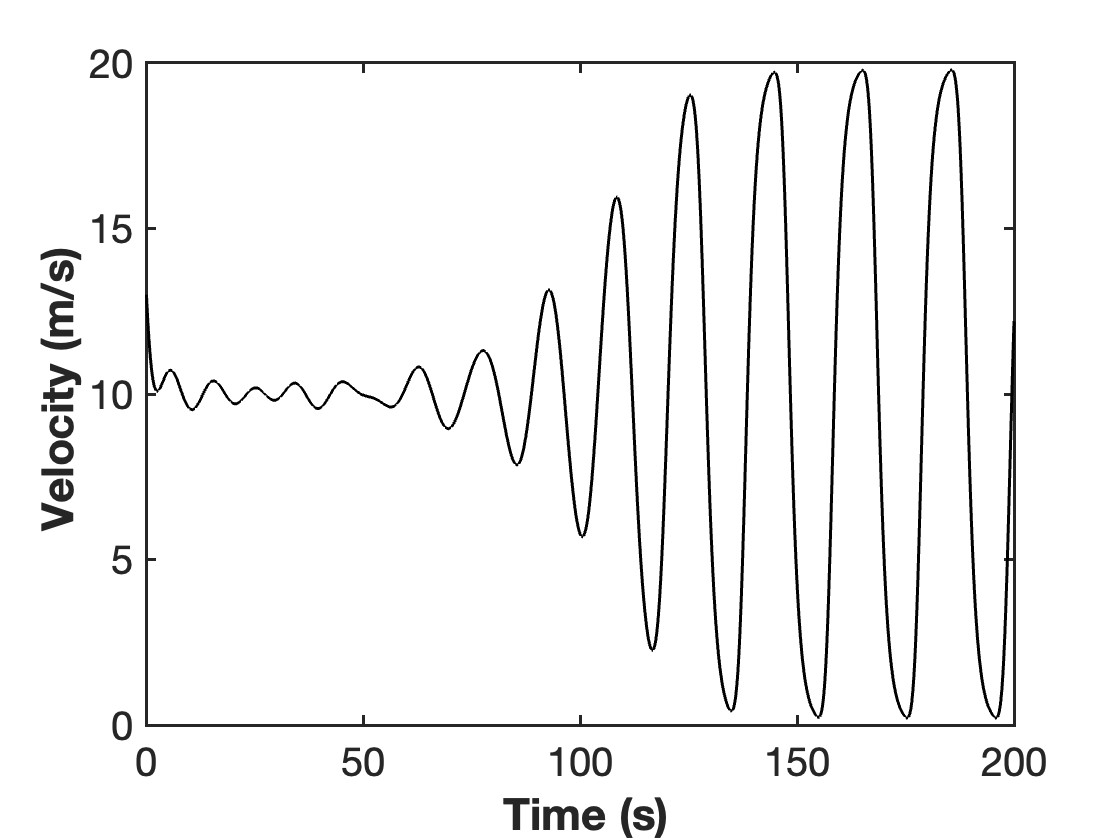}
    }
    \caption{Velocity profile of $6$th vehicle of T-OVM and F-OVM with sensitivity constant $(a,b)=(0.8, 0.4),(0.2, 0.4)$}
    \label{add2}
\end{figure}

From the simulation results we can observe that for the same setting of sensitivity parameters, T-OVM can suppress the disturbance and the platoon remains stable while F-OVM cannot. Furthermore, the stabilizing effect of the T-OVM becomes more prominent when the following vehicles respond more strongly to the leader in the T-OVM, while the destabilizing effect in the F-OVM becomes stronger when the following vehicle respond more strongly to the second vehicle ahead. These findings demonstrate the superiority of applying T-OVM in stabilizing CAV platoons.

\subsection{Infinite road with periodic disturbance}

In addition to the initial disturbance, it is also valuable to examine whether platoon control can enhance stability under periodic disturbances. In this subsection we consider $N=10$ vehicles travelling on an infinite road with a free flow speed of $v_{max}=30$m/s and a uniform length of $l=5$m. The optimal velocity function $V_I(h)$ for the infinite road simulation is equivalent to a triangular fundamental diagram:
\begin{equation}
    V_I(h)=\begin{cases}
        v_{\max}, & \text{if } \rho(h)\leq \rho_c; \\
        \frac{v_{\max}*\rho_c(\rho(h)-\rho_{\max})}{\rho(h)(\rho_c-\rho_{\max})}, & \text{if } \rho_c\leq \rho(h) \leq \rho_{\max};\\
        0, & \text{if } \rho\geq \rho_{\max},
    \end{cases}
\end{equation}
where $\rho(h)=l/h$ is the occupancy of vehicles on the road where $\rho=0$ indicates an empty road, and $\rho=1$ represents full vehicle occupancy. $\rho_c=5/37$ is the critical occupancy where flow is maximized and $\rho_{\max}=5/7$ is the jam occupancy. Figure \ref{f2} is a plot of the optimal velocity function and corresponding triangular fundamental diagram.
\begin{figure}
    \centering
\subfloat[Optimal velocity function]{
    \includegraphics[width=0.4\textwidth]{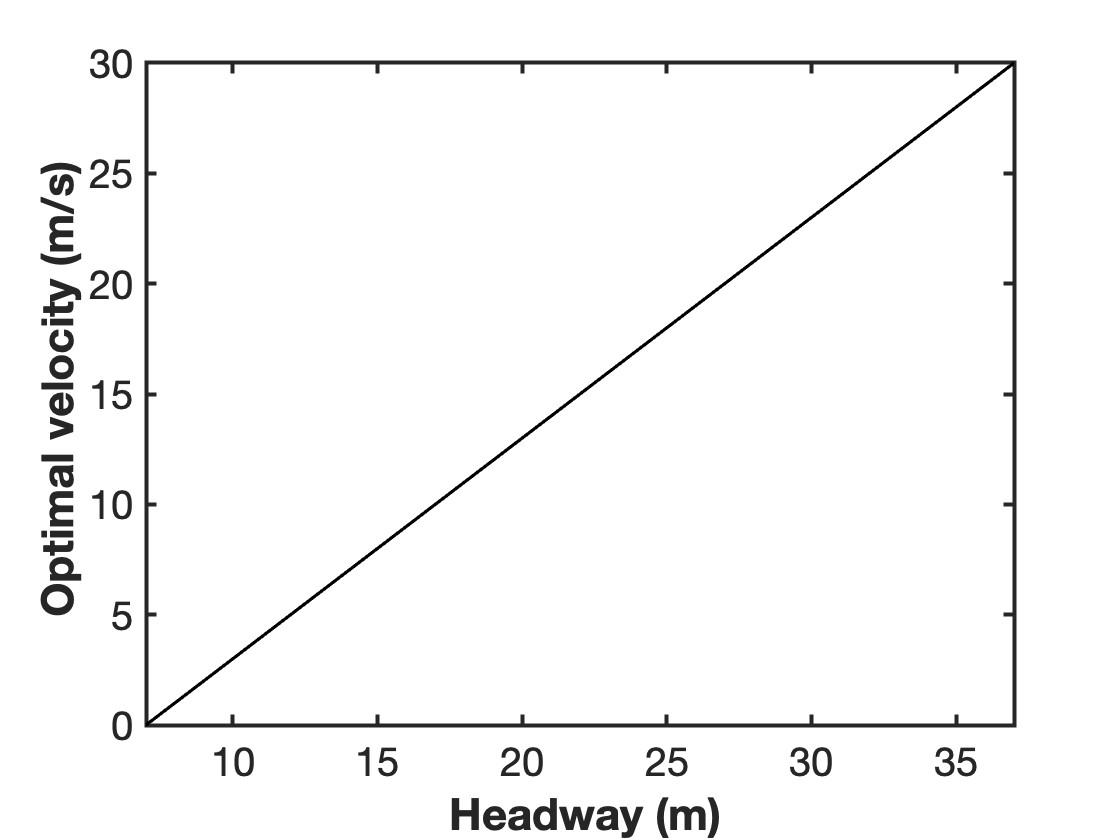}
}
\subfloat[Triangular fundamental diagram]{
    \includegraphics[width=0.4\textwidth]{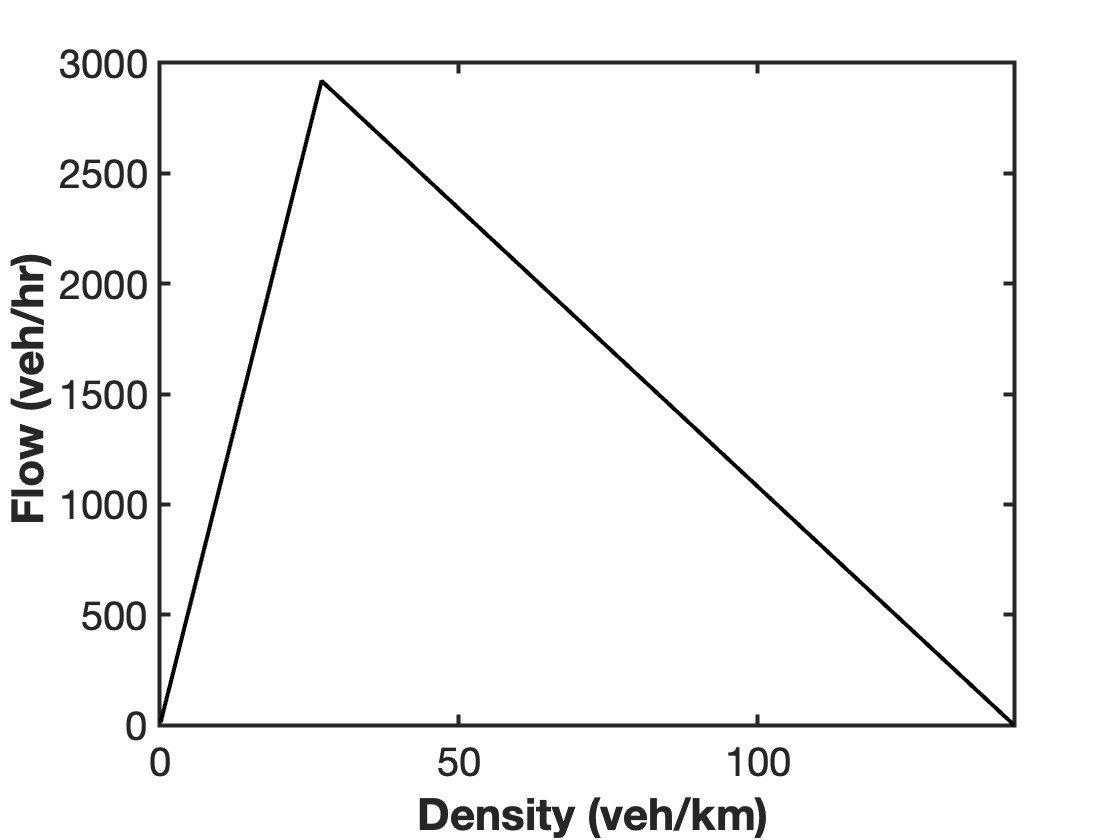}
}
    \caption{Plot of an optimal velocity function with $l=5$m and $v_{\max}=30$m/s and the corresponding triangular fundamental diagram.}
    \label{f2}
\end{figure}

During the simulation time, the leading CAV tries to maintain at an equilibrium speed but it encounters sinusoidal disturbance, which is similar to \cite{zhou2023autonomous}. The equilibrium speed is $v_0=15$m/s and the equilibrium headway is $h=22$m. And if we denote $p$ as the duration of each period of the sinusoidal disturbance and $A$ as the amplitude of the disturbance, then the velocity of the leading vehicle can be written as:
\begin{equation}
    v_N=v_0+A\sin\left(\frac{2\pi}{p}t\right),
\end{equation}
where the periods are set to $p=5, \;10, \;15, \;20$s, with an amplitude $A=5$m/s. OVM and P-OVM with sensitivity constant $a=1.2,\; 2.4$ are tested with a simulation duration is $60$ seconds for all cases. Simulation results are shown in Figures \ref{31}-\ref{34}. Figure \ref{31}, \ref{32} are the headway profiles of OVM and P-OVM with sensitivity constant $a=1.2$ under different perturbation frequency. Figure \ref{33}, \ref{34} are the headway profiles of OVM and P-OVM with sensitivity constant $a=2.4$ under different perturbation frequency. To further show the difference in the P-OVM cases, Table \ref{tab1} presents the average oscillations across all vehicles for various sensitivity constants and perturbations.

\begin{figure}
    \centering
    \subfloat[$p=20$]{
        \includegraphics[width=0.4\textwidth]{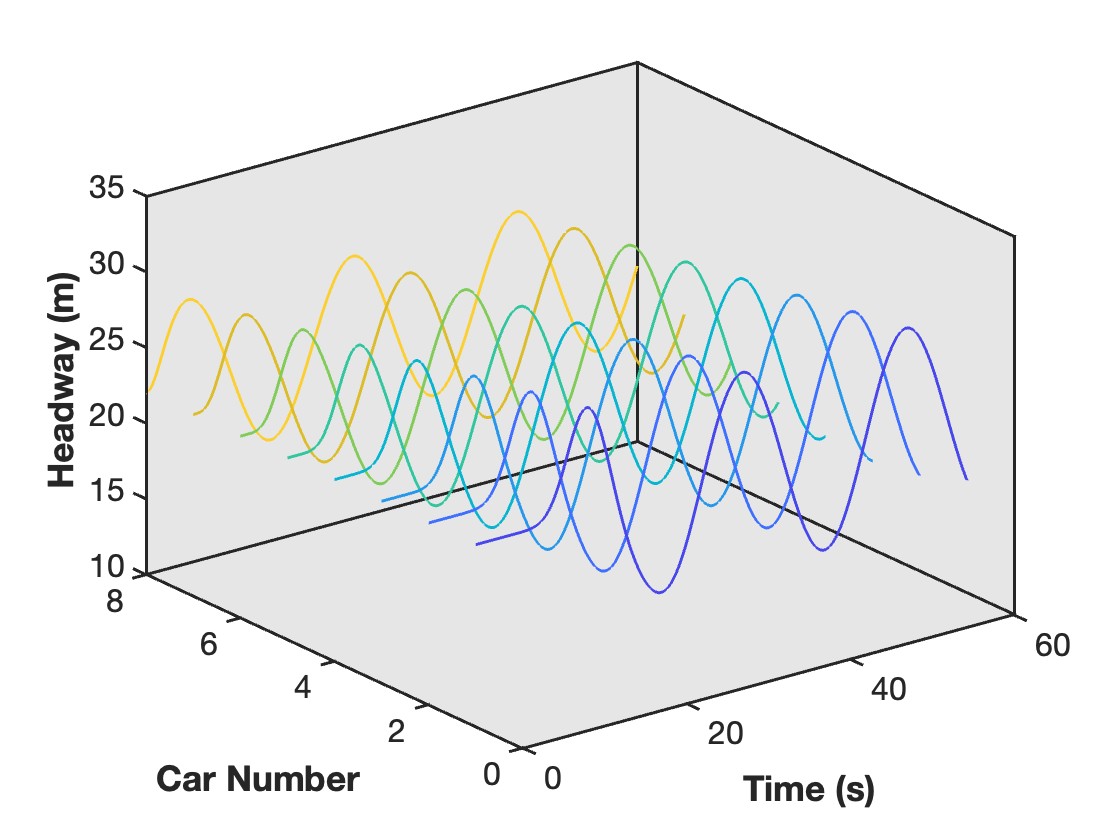}
    }
    \subfloat[$p=15$]{
        \includegraphics[width=0.4\textwidth]{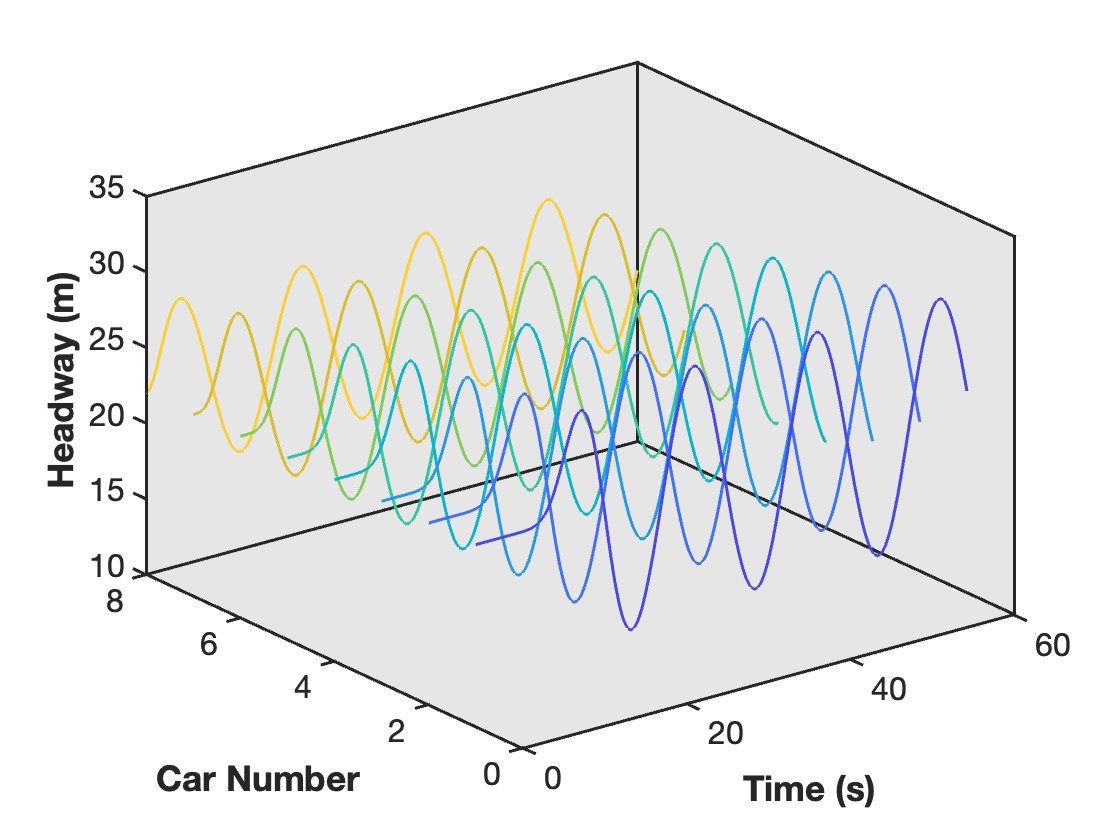}
    }
    \\
    \subfloat[$p=10$]{
        \includegraphics[width=0.4\textwidth]{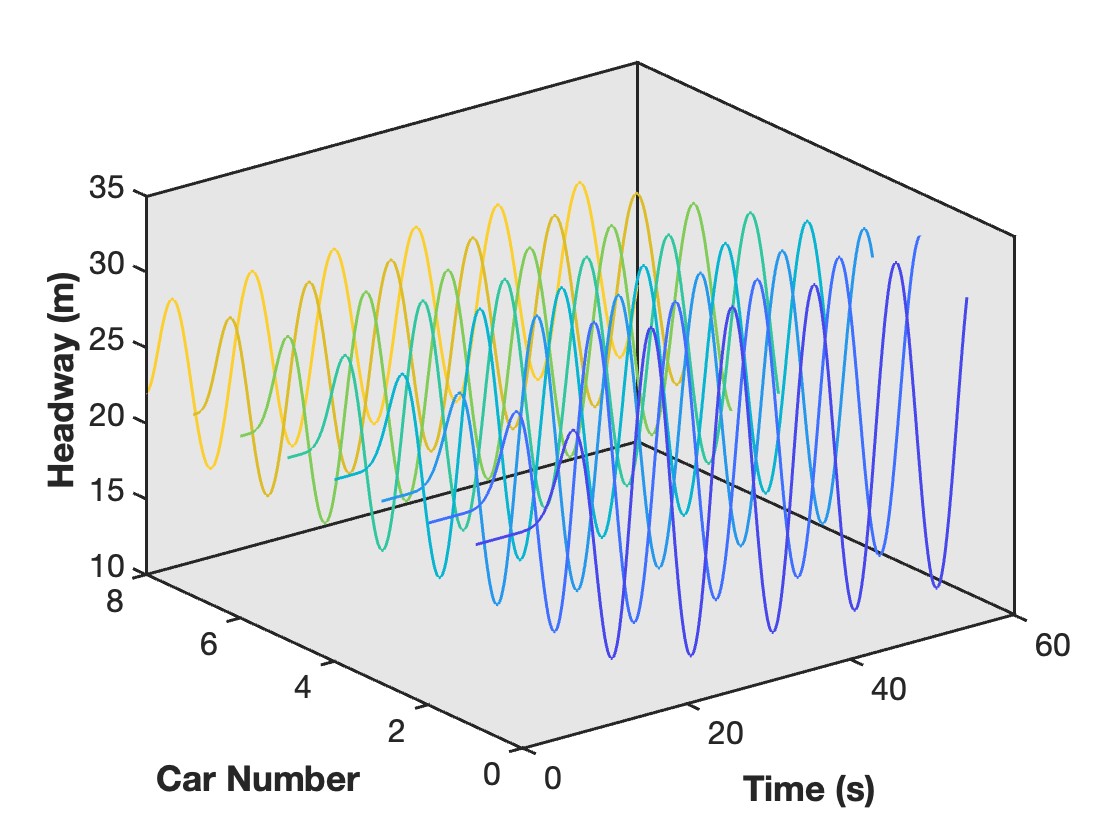}
    }
    \subfloat[$p=5$]{
        \includegraphics[width=0.4\textwidth]{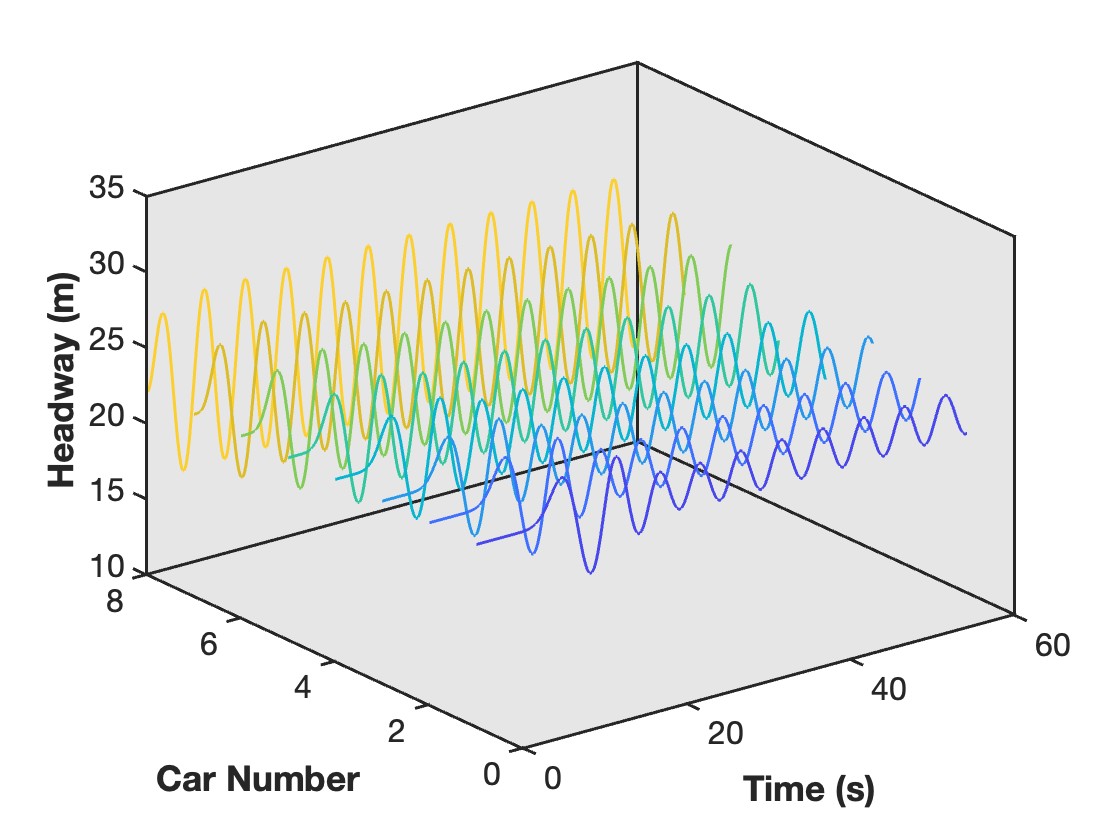}
    }
    \caption{Headway profile of OVM with sensitivity constant $a=1.2$ and period parameter $p=5, \;10, \;15, \;20$}
    \label{31}
\end{figure}

\begin{figure}
    \centering
    \subfloat[$p=20$]{
        \includegraphics[width=0.4\textwidth]{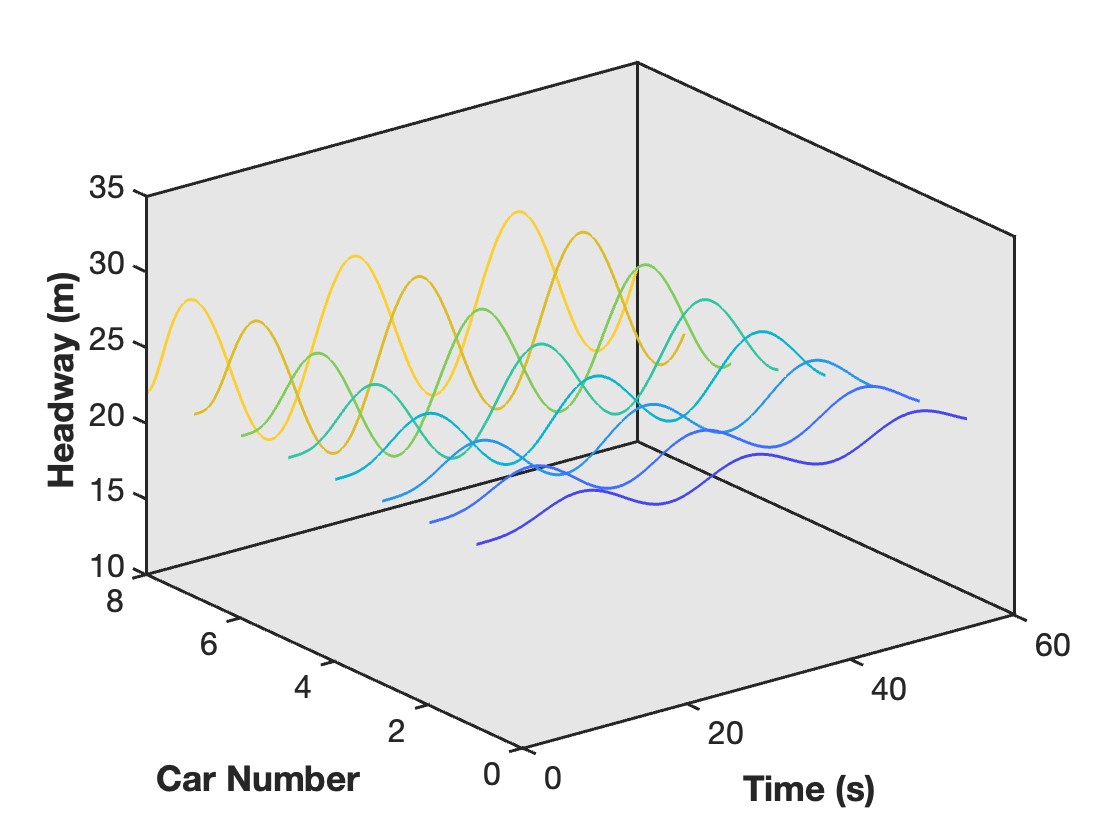}
    }
    \subfloat[$p=15$]{
        \includegraphics[width=0.4\textwidth]{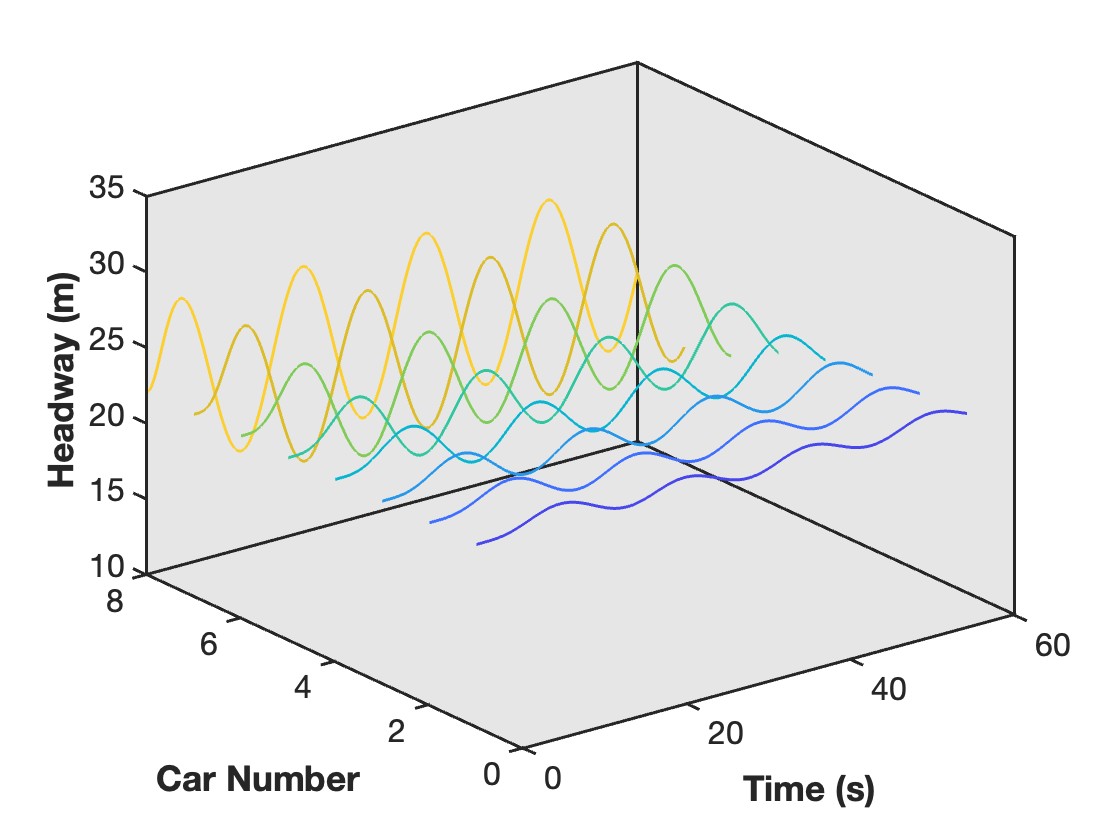}
    }
    \\
    \subfloat[$p=10$]{
        \includegraphics[width=0.4\textwidth]{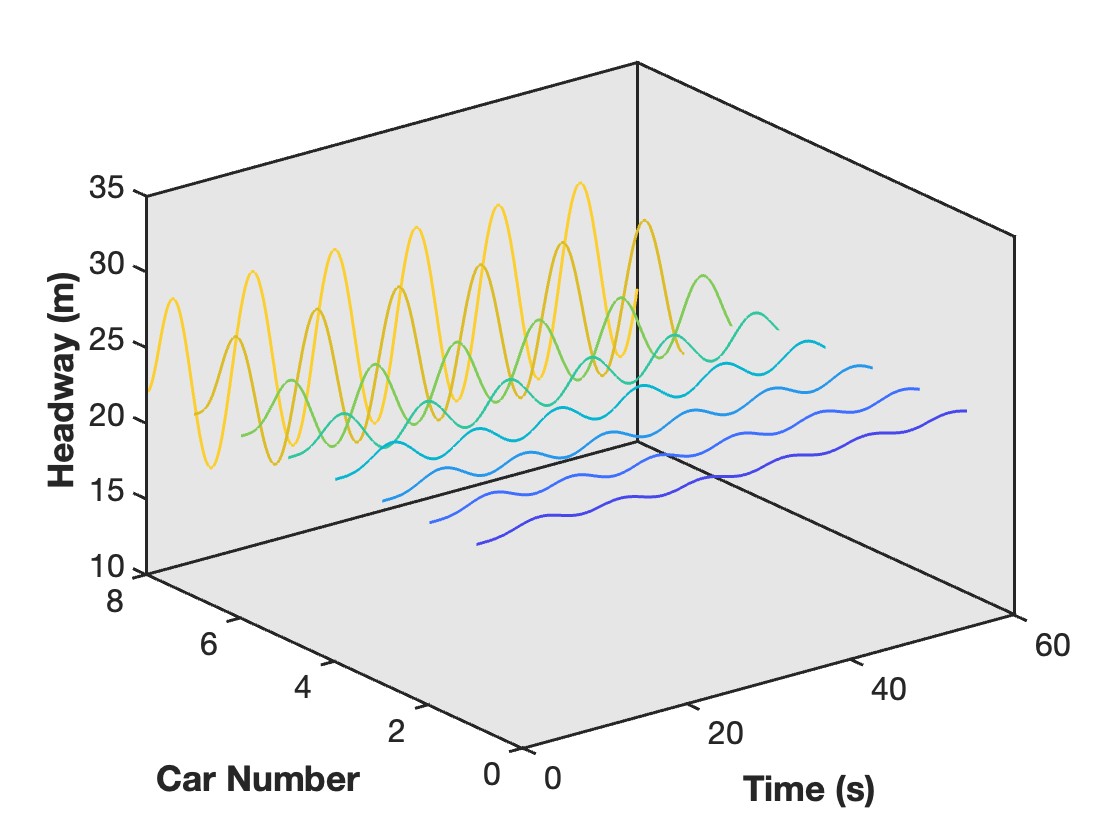}
    }
    \subfloat[$p=5$]{
        \includegraphics[width=0.4\textwidth]{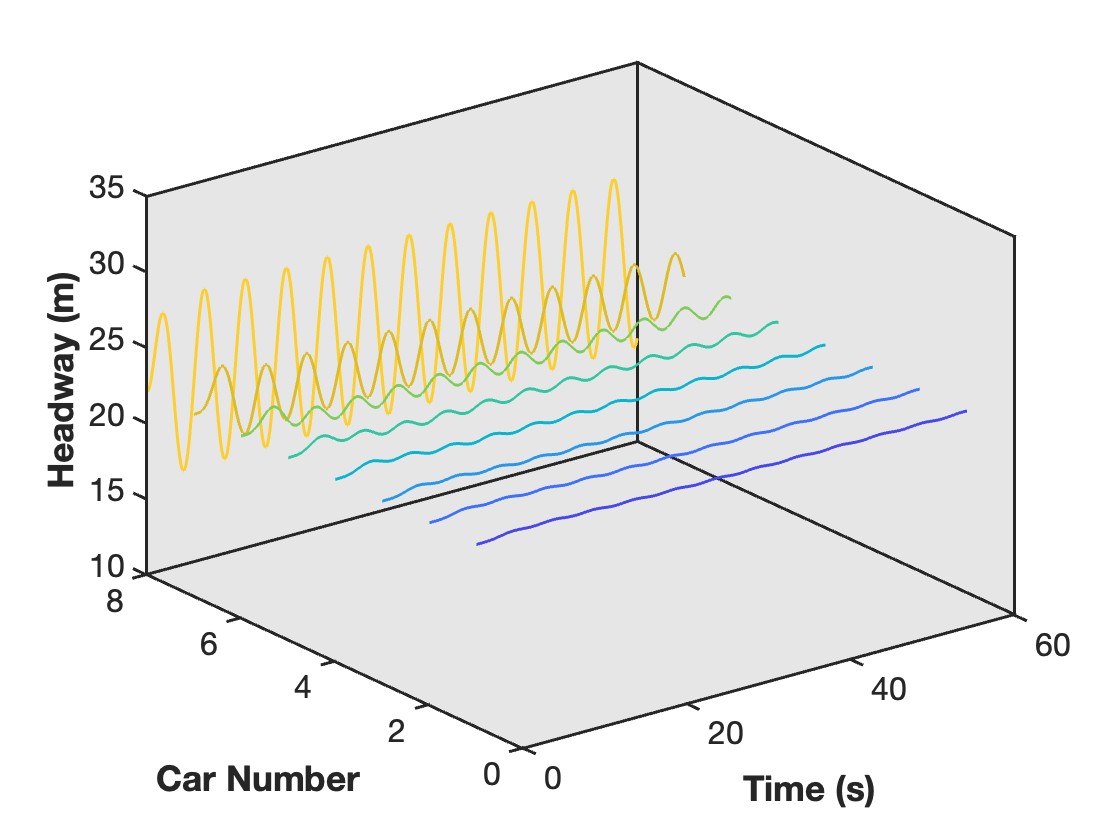}
    }
    \caption{Headway profile of P-OVM with sensitivity constant $a=1.2$ and period parameter $p=5, \;10, \;15, \;20$}
    \label{32}
\end{figure}

\begin{figure}
    \centering
    \subfloat[$p=20$]{
        \includegraphics[width=0.4\textwidth]{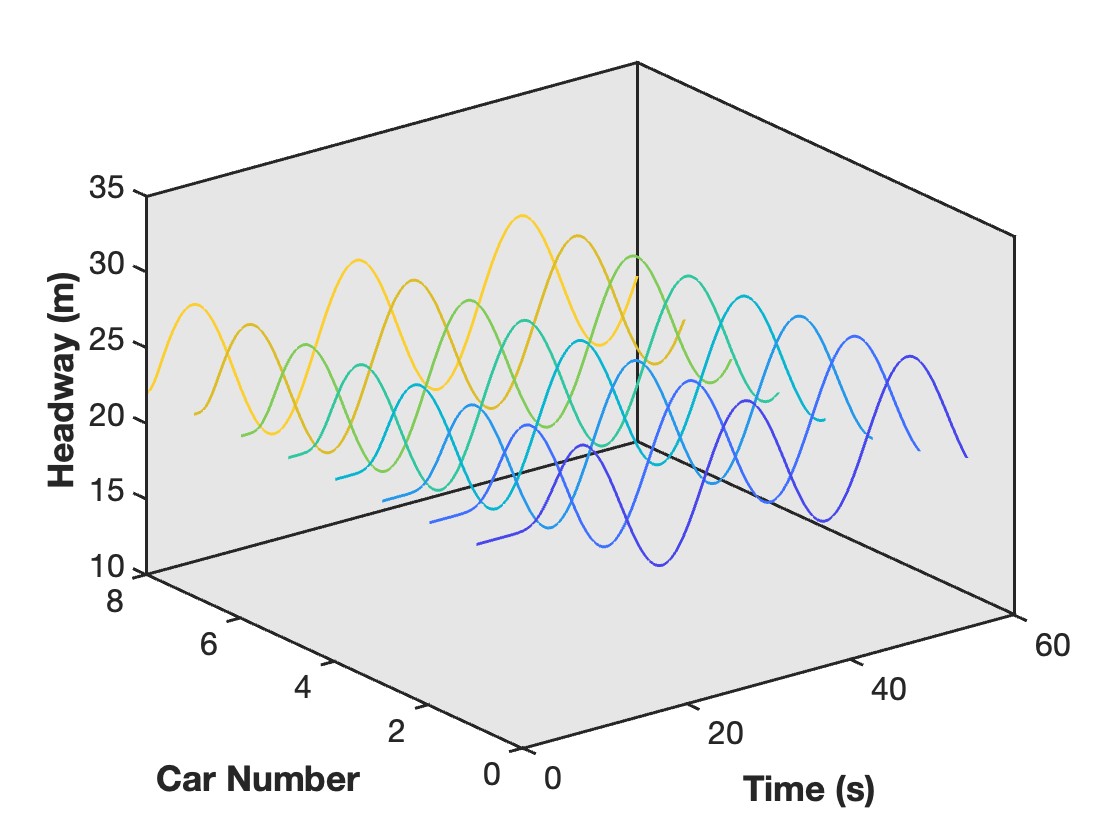}
    }
    \subfloat[$p=15$]{
        \includegraphics[width=0.4\textwidth]{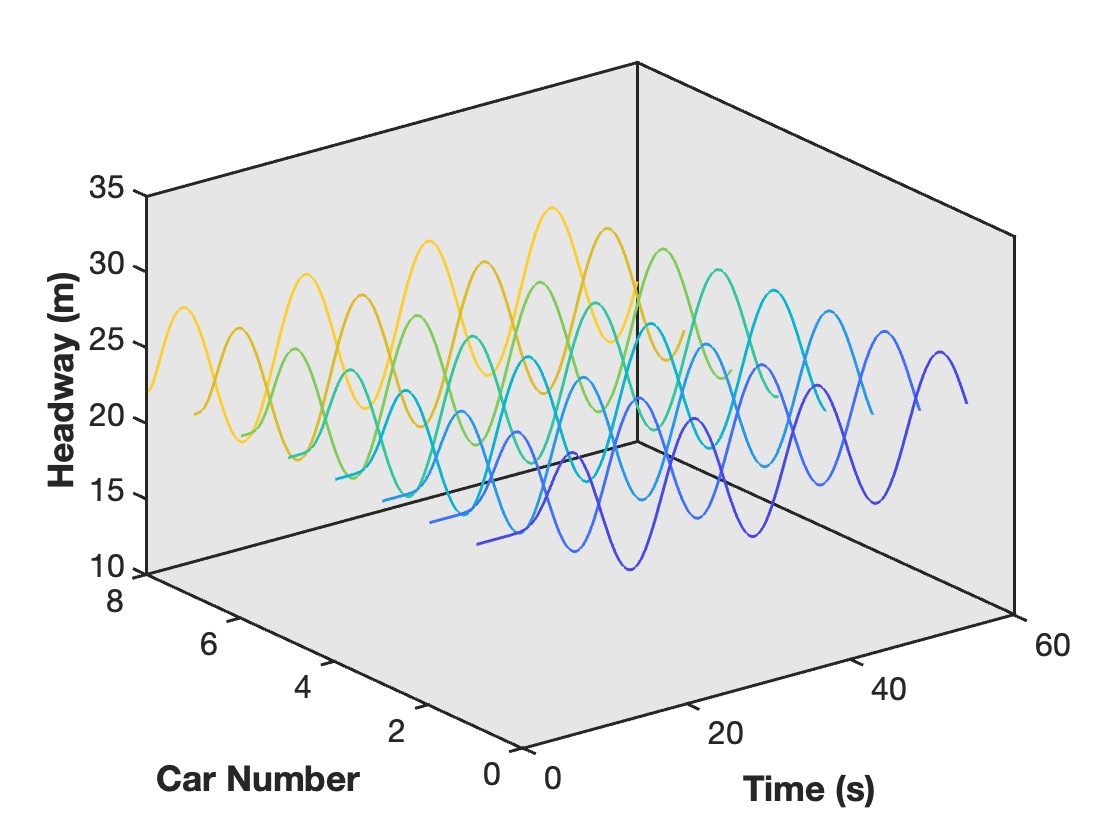}
    }
    \\
    \subfloat[$p=10$]{
        \includegraphics[width=0.4\textwidth]{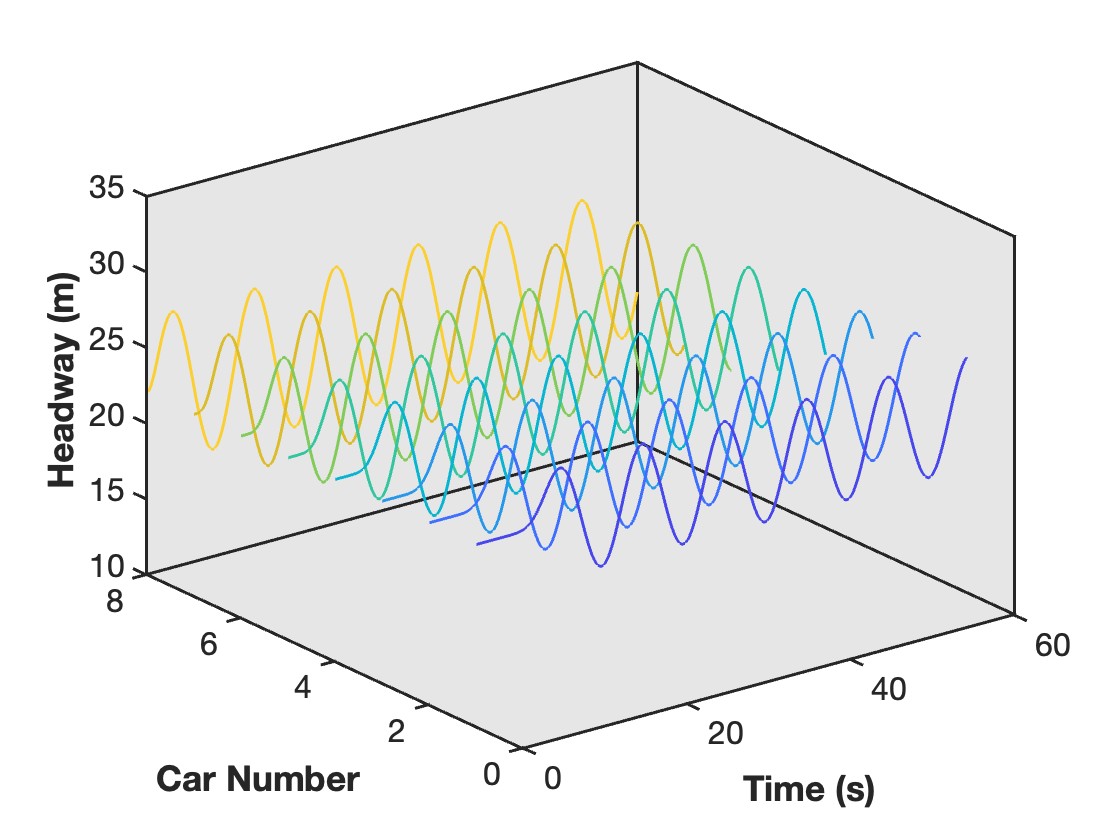}
    }
    \subfloat[$p=5$]{
        \includegraphics[width=0.4\textwidth]{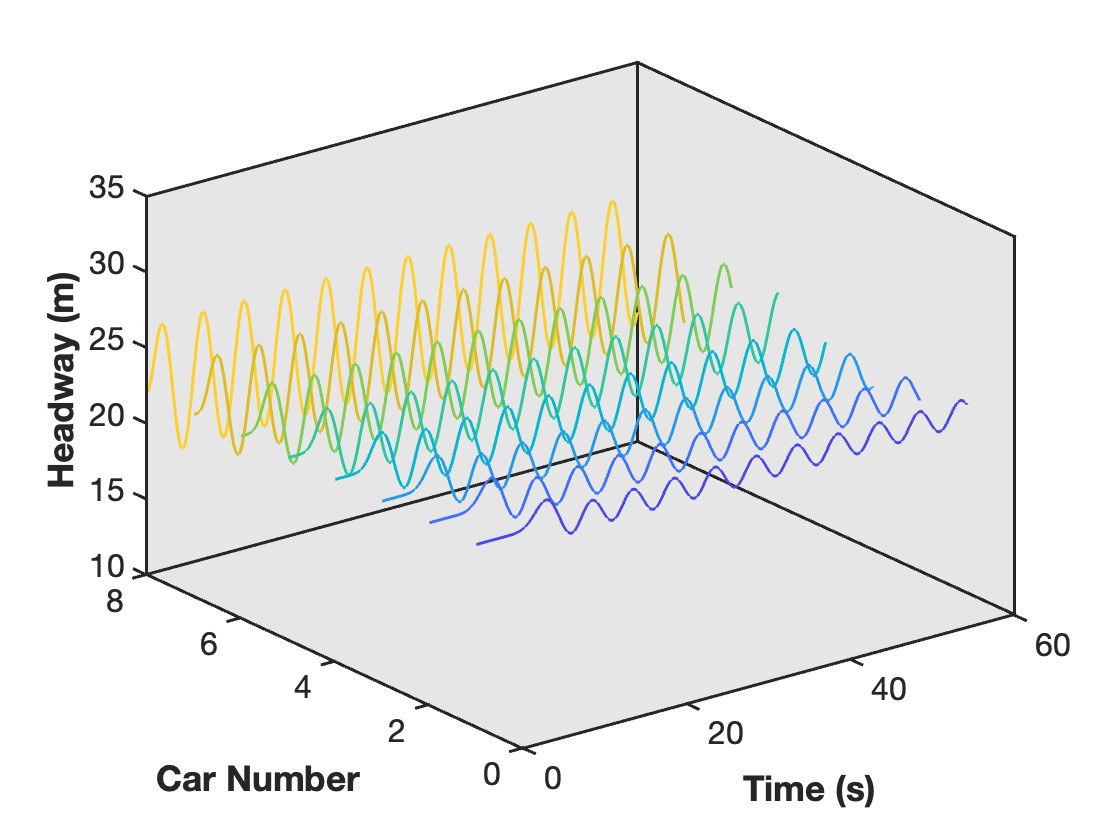}
    }
    \caption{Headway profile of OVM with sensitivity constant $a=2.4$ and period parameter $p=5, \;10, \;15, \;20$}
    \label{33}
\end{figure}

\begin{figure}
    \centering
    \subfloat[$p=20$]{
        \includegraphics[width=0.4\textwidth]{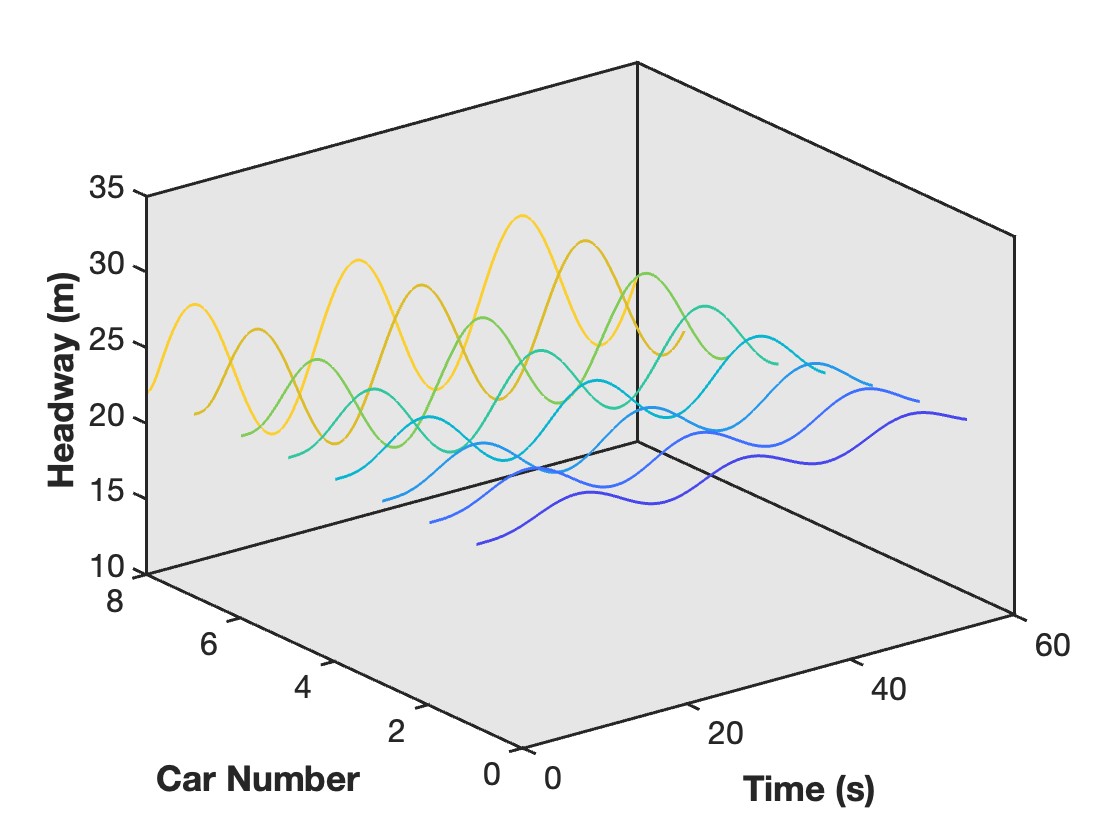}
    }
    \subfloat[$p=15$]{
        \includegraphics[width=0.4\textwidth]{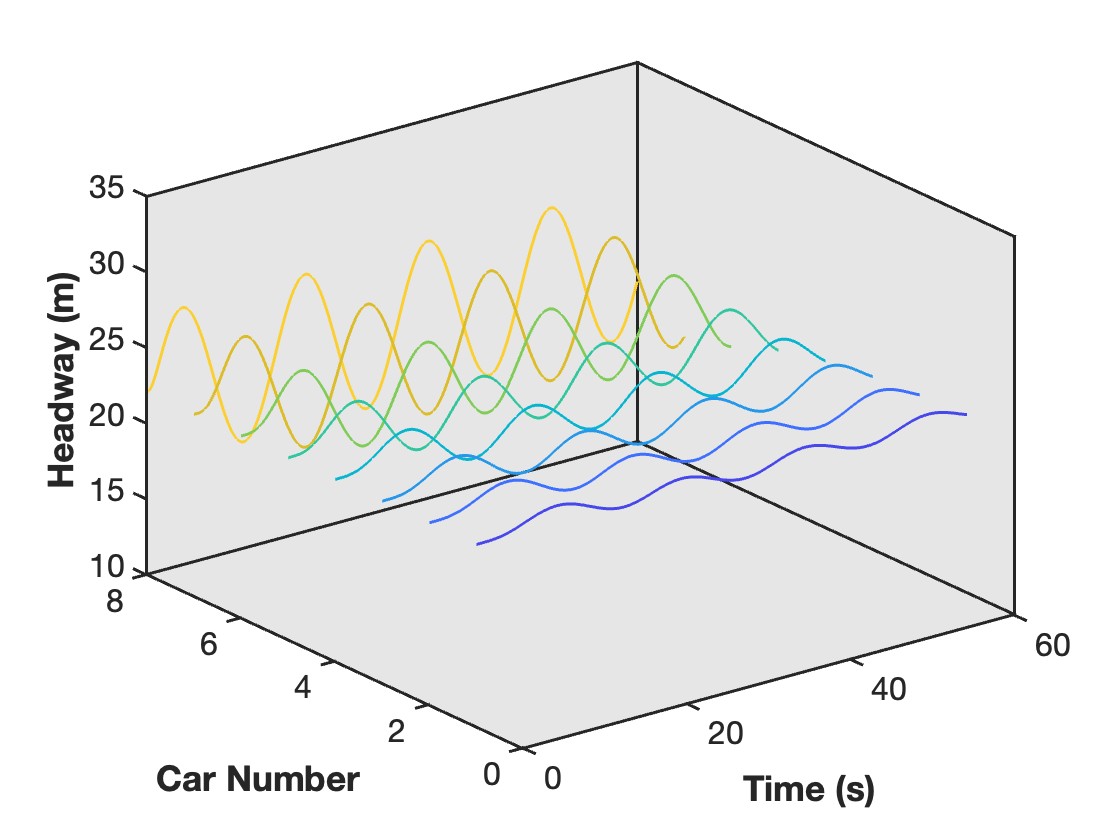}
    }
    \\
    \subfloat[$p=10$]{
        \includegraphics[width=0.4\textwidth]{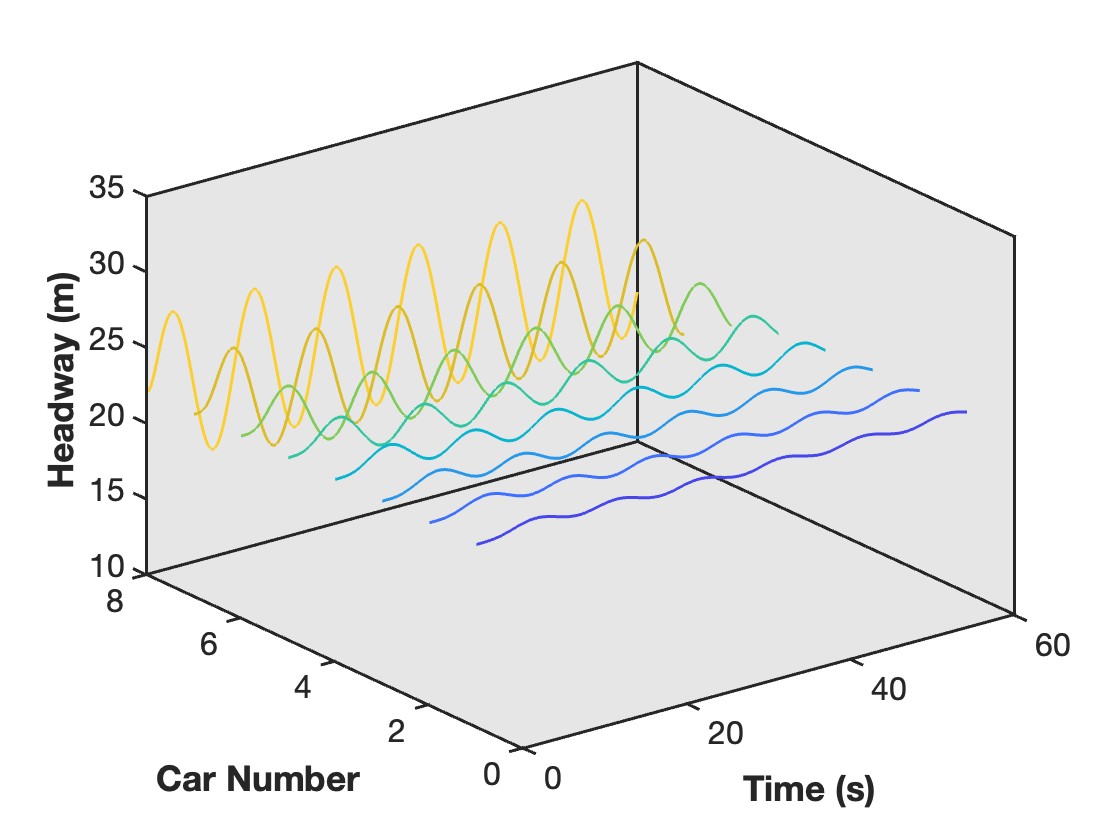}
    }
    \subfloat[$p=5$]{
        \includegraphics[width=0.4\textwidth]{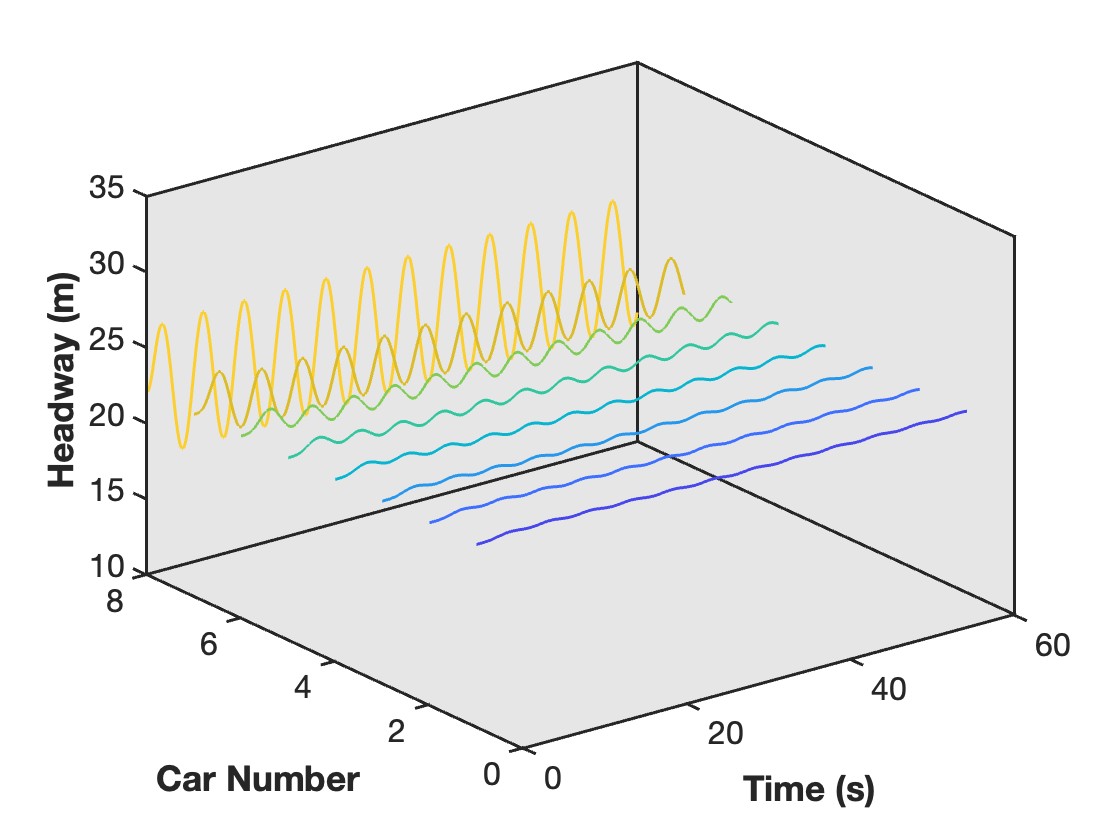}
    }
    \caption{Headway profile of P-OVM with sensitivity constant $a=2.4$ and period parameter $p=5, \;10, \;15, \;20$}
    \label{34}
\end{figure}

\begin{table}
    \centering
    \begin{tabular}{c|c|c|c|c}
        \toprule
        sensitivity/period & $p=5$ & $p=10$ & $p=15$ & $p=20$ \\
        \midrule
        $a=1.2$ & $0.5055$m & $0.8966$m & $1.1276$m & $1.3279$m \\
        \midrule
        $a=2.4$ & $0.4256$m & $0.7382$m & $0.9882$m & $1.2049$m \\
        \bottomrule
    \end{tabular}
    \caption{The average oscillations (in meters) across all vehicles for sensitivity constant $a=1.2,\;2.4$ and period parameter $p=5, \;10, \;20, \;30$}
    \label{tab1}
\end{table}

From the simulation results, we observe that when the frequency of the sinusoidal disturbance is high ($p=5$), the disturbance amplitude can decrease even without platoon control, despite the OVM being unstable at $a=1.2$. This is possibly due to the frequent changes of the optimal velocity and the delayed reaction inherent in OVM. However, with the P-OVM, the headways stabilize around the equilibrium state for all vehicles, and the oscillations are smaller. Stability improves slightly for $a=2.4$ compared to $a=1.2$, suggesting that a high level of sensitivity is not necessary for CAVs. In summary, P-OVM enhances string stability under periodic disturbances.
\FloatBarrier
\section{Conclusions \label{sec5}}
In this paper, we introduced two models: a platoon model with a coordinated following strategy (P-OVM) and a transition phase model with adjustable parameters (T-OVM). While P-OVM was theoretically proven to be always linearly stable, its practical application may be constrained by safety concerns when both control and state measurements are prone to errors. We also analyzed the stability of the base and transition phase models, providing conditions for their stability. To verify the theoretical findings, numerical simulations were performed on both a ring road with an initial disturbance and an infinite road with periodic disturbances. Several model parameters were tested and analyzed. The simulation results confirmed that the proposed platoon control guarantees linear stability for all sensitivity constants, with the constants influencing how quickly disturbances are attenuated within the platoon. For the transition phase model, with a fixed total sensitivity, stability improved with increased sensitivity to the leading vehicle, in line with theoretical predictions.

The proposed P-OVM and T-OVM significantly outperformed the HDVs' model (OVM) in suppressing disturbances and maintaining string stability. These models are simple forms of platooning that multi-following models tend to focus on, representing a significant improvement over previous multi-following models. With the rapid development of CAV technologies, the proposed platoon control strategy offers a promising solution for mitigating traffic oscillations.

This paper can be extended in various directions for future research, including: (1) incorporating communication delays into the proposed models and analyzing their impact on stability; (2) extend the proposed models to multiple CAV platoons with interactions; (3) examining more complex road conditions, such as multi-lane ring roads with mixed CAVs and HDVs, with adjustments to the control design for the leading vehicle; (4) conducting field experiments with CAVs to test the proposed control design under various traffic and road conditions.
\section*{Declaration of competing interest}
The authors declare that there are no source of financial interests or personal relationships that could have any impacts of the work in this paper.

\bibliographystyle{unsrt}  
\bibliography{references} 

\begin{thebibliography}{10}

\bibitem{sheikholeslam1990longitudinal}
Shahab Sheikholeslam and Charles~A Desoer.
\newblock Longitudinal control of a platoon of vehicles.
\newblock In {\em 1990 American control conference}, pages 291--296. IEEE,
  1990.

\bibitem{pipes1953operational}
Louis~A Pipes.
\newblock An operational analysis of traffic dynamics.
\newblock {\em Journal of applied physics}, 24(3):274--281, 1953.

\bibitem{bando1995dynamical}
Masako Bando, Katsuya Hasebe, Akihiro Nakayama, Akihiro Shibata, and Yuki
  Sugiyama.
\newblock Dynamical model of traffic congestion and numerical simulation.
\newblock {\em Physical review E}, 51(2):1035, 1995.

\bibitem{bando1995phenomenological}
Masako Bando, Katsuya Hasebe, Ken Nakanishi, Akihiro Nakayama, Akihiro Shibata,
  and Y{\=u}ki Sugiyama.
\newblock Phenomenological study of dynamical model of traffic flow.
\newblock {\em Journal de Physique I}, 5(11):1389--1399, 1995.

\bibitem{treiber2000congested}
Martin Treiber, Ansgar Hennecke, and Dirk Helbing.
\newblock Congested traffic states in empirical observations and microscopic
  simulations.
\newblock {\em Physical review E}, 62(2):1805, 2000.

\bibitem{lenz1999multi}
H~Lenz, CK~Wagner, and R~Sollacher.
\newblock Multi-anticipative car-following model.
\newblock {\em The European Physical Journal B-Condensed Matter and Complex
  Systems}, 7:331--335, 1999.

\bibitem{nakayama2001effect}
Akihiro Nakayama, Y{\=u}ki Sugiyama, and Katsuya Hasebe.
\newblock Effect of looking at the car that follows in an optimal velocity
  model of traffic flow.
\newblock {\em Physical Review E}, 65(1):016112, 2001.

\bibitem{jiang2001full}
Rui Jiang, Qingsong Wu, and Zuojin Zhu.
\newblock Full velocity difference model for a car-following theory.
\newblock {\em Physical Review E}, 64(1):017101, 2001.

\bibitem{yu2013full}
Shaowei Yu, Qingling Liu, and Xiuhai Li.
\newblock Full velocity difference and acceleration model for a car-following
  theory.
\newblock {\em Communications in Nonlinear Science and Numerical Simulation},
  18(5):1229--1234, 2013.

\bibitem{lazar2016review}
Hajar Lazar, Khadija Rhoulami, and Driss Rahmani.
\newblock A review analysis of optimal velocity models.
\newblock {\em Periodica Polytechnica Transportation Engineering},
  44(2):123--131, 2016.

\bibitem{treiber2006delays}
Martin Treiber, Arne Kesting, and Dirk Helbing.
\newblock Delays, inaccuracies and anticipation in microscopic traffic models.
\newblock {\em Physica A: Statistical Mechanics and its Applications},
  360(1):71--88, 2006.

\bibitem{derbel2013modified}
Oussama Derbel, Tamas Peter, Hossni Zebiri, Benjamin Mourllion, and Michel
  Basset.
\newblock Modified intelligent driver model for driver safety and traffic
  stability improvement.
\newblock {\em IFAC Proceedings Volumes}, 46(21):744--749, 2013.

\bibitem{zhu2018analysis}
Wen-Xing Zhu and H~Michael Zhang.
\newblock Analysis of mixed traffic flow with human-driving and autonomous cars
  based on car-following model.
\newblock {\em Physica A: Statistical Mechanics and its Applications},
  496:274--285, 2018.

\bibitem{jia2016platoon}
Dongyao Jia and Dong Ngoduy.
\newblock Platoon based cooperative driving model with consideration of
  realistic inter-vehicle communication.
\newblock {\em Transportation Research Part C: Emerging Technologies},
  68:245--264, 2016.

\bibitem{zhang2021internet}
Lidong Zhang, Mengmeng Zhang, Jian Wang, Xiaowei Li, and Wenxing Zhu.
\newblock Internet connected vehicle platoon system modeling and linear
  stability analysis.
\newblock {\em Computer Communications}, 174:92--100, 2021.

\bibitem{wang2021leading}
Jiawei Wang, Yang Zheng, Chaoyi Chen, Qing Xu, and Keqiang Li.
\newblock Leading cruise control in mixed traffic flow: System modeling,
  controllability, and string stability.
\newblock {\em IEEE Transactions on Intelligent Transportation Systems},
  23(8):12861--12876, 2021.

\bibitem{zhao2023safety}
Chenguang Zhao, Huan Yu, and Tamas~G Molnar.
\newblock Safety-critical traffic control by connected automated vehicles.
\newblock {\em Transportation research part C: emerging technologies},
  154:104230, 2023.

\bibitem{zhou2023autonomous}
Zhi Zhou, Linheng Li, Xu~Qu, and Bin Ran.
\newblock An autonomous platoon formation strategy to optimize cav
  car-following stability under periodic disturbance.
\newblock {\em Physica A: Statistical Mechanics and its Applications},
  626:129096, 2023.

\bibitem{zhou2024self}
Zhi Zhou, Linheng Li, Xu~Qu, and Bin Ran.
\newblock A self-adaptive idm car-following strategy considering asymptotic
  stability and damping characteristics.
\newblock {\em Physica A: Statistical Mechanics and its Applications},
  637:129539, 2024.

\bibitem{jin2020dynamical}
Yanfei Jin and Jingwei Meng.
\newblock Dynamical analysis of an optimal velocity model with time-delayed
  feedback control.
\newblock {\em Communications in Nonlinear Science and Numerical Simulation},
  90:105333, 2020.

\bibitem{sun2020relationship}
Jie Sun, Zuduo Zheng, and Jian Sun.
\newblock The relationship between car following string instability and traffic
  oscillations in finite-sized platoons and its use in easing congestion via
  connected and automated vehicles with idm based controller.
\newblock {\em Transportation Research Part B: Methodological}, 142:58--83,
  2020.

\bibitem{li2017stochastic}
Baibing Li.
\newblock Stochastic modeling for vehicle platoons (i): Dynamic grouping
  behavior and online platoon recognition.
\newblock {\em Transportation Research Part B: Methodological}, 95:364--377,
  2017.

\bibitem{li2017stochastic2}
Baibing Li.
\newblock Stochastic modeling for vehicle platoons (ii): Statistical
  characteristics.
\newblock {\em Transportation Research Part B: Methodological}, 95:378--393,
  2017.

\bibitem{gong2016constrained}
Siyuan Gong, Jinglai Shen, and Lili Du.
\newblock Constrained optimization and distributed computation based car
  following control of a connected and autonomous vehicle platoon.
\newblock {\em Transportation Research Part B: Methodological}, 94:314--334,
  2016.

\bibitem{zhou2019distributed}
Yang Zhou, Meng Wang, and Soyoung Ahn.
\newblock Distributed model predictive control approach for cooperative
  car-following with guaranteed local and string stability.
\newblock {\em Transportation research part B: methodological}, 128:69--86,
  2019.

\bibitem{graffione2020model}
Simone Graffione, Chiara Bersani, Roberto Sacile, and Enrico Zero.
\newblock Model predictive control of a vehicle platoon.
\newblock In {\em 2020 IEEE 15th International Conference of System of Systems
  Engineering (SoSE)}, pages 513--518. IEEE, 2020.

\bibitem{li2022variable}
Yongfu Li, Qingxiu Lv, Hao Zhu, Haiqing Li, Huaqing Li, Simon Hu, Shuyou Yu,
  and Yibing Wang.
\newblock Variable time headway policy based platoon control for heterogeneous
  connected vehicles with external disturbances.
\newblock {\em IEEE Transactions on Intelligent Transportation Systems},
  23(11):21190--21200, 2022.

\bibitem{liu2022autonomous}
Tong Liu, Lei Lei, Kan Zheng, and Kuan Zhang.
\newblock Autonomous platoon control with integrated deep reinforcement
  learning and dynamic programming.
\newblock {\em IEEE Internet of Things Journal}, 10(6):5476--5489, 2022.

\bibitem{stern2018dissipation}
Raphael~E Stern, Shumo Cui, Maria~Laura Delle~Monache, Rahul Bhadani, Matt
  Bunting, Miles Churchill, Nathaniel Hamilton, Hannah Pohlmann, Fangyu Wu,
  Benedetto Piccoli, et~al.
\newblock Dissipation of stop-and-go waves via control of autonomous vehicles:
  Field experiments.
\newblock {\em Transportation Research Part C: Emerging Technologies},
  89:205--221, 2018.

\bibitem{lee2024traffic}
Jonathan~W Lee, Han Wang, Kathy Jang, Amaury Hayat, Matthew Bunting, Arwa
  Alanqary, William Barbour, Zhe Fu, Xiaoqian Gong, George Gunter, et~al.
\newblock Traffic control via connected and automated vehicles: An open-road
  field experiment with 100 cavs.
\newblock {\em arXiv preprint arXiv:2402.17043}, 2024.

\bibitem{tsugawa2011automated}
Sadayuki Tsugawa, Shin Kato, and Keiji Aoki.
\newblock An automated truck platoon for energy saving.
\newblock In {\em 2011 IEEE/RSJ international conference on intelligent robots
  and systems}, pages 4109--4114. IEEE, 2011.

\bibitem{zhou2023experimental}
Shirui Zhou, Junfang Tian, Ying-En Ge, Shaowei Yu, and Rui Jiang.
\newblock Experimental features of emissions and fuel consumption in a
  car-following platoon.
\newblock {\em Transportation Research Part D: Transport and Environment},
  121:103823, 2023.

\end{thebibliography}

\begin{appendices}
\section*{\large Appendices}
\addcontentsline{toc}{section}{Appendices} 
    \section{Proof of Theorem 3.1 \label{ap1}}
    In this section we give the proof of Theorem \ref{lin}:
    \begin{proof}
        Similar to proof of \ref{sta2}, we consider small deviation $y_i$ from the equilibrium solution and neglect higher order. Then we have
        \begin{equation}
            \ddot{y}_i(t)=a\left[V'(h)(y_{i+1}(t)-y_i(t))-\dot{y}_i(t)\right],\; y_{N+1}=y_1,
            \label{linear2}
        \end{equation}
        which is a linear ODE system with $N$ equations. Suppose that $\lambda$ is an eigenvalue of the system, and $\xi_i$ is the coefficient of $y_i$ with the term $e^{\lambda t}$, then simplified from \eqref{linear2}, $\lambda$ satisfies
    \begin{equation}
        \lambda^2+a\lambda-aV'(h)(\frac{\xi_{i+1}}{\xi_i}-1)=0
    \end{equation}
    This means $\xi_{i+1}/\xi_i=r$ is fixed for given $\lambda$, and note that $\prod_{i=1}^{N}(\xi_{i+1}/\xi_i)=1$, we have
    \begin{equation}
        r=e^{\frac{2\pi k}{N}i},\; k = 1, 2, \dots,N.
        \label{eigen}
    \end{equation}
    Therefore $\lambda$ satisfies
    \begin{equation}
        \lambda^2+a\lambda-aV'(h)(e^{\frac{2\pi k}{N}i}-1)=0,
    \end{equation}
    for some $k=1,2,\dots,N$. Then, to have a stable system of $y_n$, the real parts of any $\lambda$ should be negative. By considering the solution of $\lambda$ with smaller real parts we have
    \begin{equation}
        a-\sqrt{\frac{d+\sqrt{d^2+e^2}}{2}}>0,
        \label{condGen}
    \end{equation}
    where $d=a^2+4aV'(h)\cos\frac{2\pi k}{N}-4aV'(h)$ and $e=-4aV'(h)\sin\frac{2\pi k}{N}$. For the system to be stable, we need this condition to be satisfied for every $k$. Let $\theta\triangleq \frac{2\pi k}{N}$, then \eqref{condGen} can be simplified to
    \begin{equation}
        -V'(h)\cos^2\theta+a\cos\theta+V'(h)-a<0.
        \label{cond2}
    \end{equation}
    \eqref{cond2} holds if \eqref{ovmsta} holds, otherwise for N sufficiently large we can let $\cos\theta=a/2V'(h)$ and \eqref{cond2} will fail. 
    \end{proof}
    
    \section{Proof of Theorem 3.3 \label{ap2}}
    In this section we give the proof of Theorem \ref{transta}:
    \begin{proof}
        Again we consider small deviation $y_i$ from the equilibrium solution and neglect higher order. Then we have
        \begin{equation}
            \ddot{y}_i(t)=\begin{cases}
                a\left[V'(h)(y_{i+1}(t)-y_i(t))-\dot{y}_i(t)\right]+b\left[V'(h)\frac{y_{N}(t)-y_i(t)}{N-i}-\dot{y}_i(t)\right],\;\text{if } i\neq N; \\
                (a+b)\left[V'(h)(y_{i+1}(t)-y_i(t))-\dot{y}_i(t)\right],\;\text{if } i= N,
            \end{cases}
            \label{linear3}
        \end{equation}
        which is a linear ODE system with $N$ equations. Suppose that $\lambda$ is an eigenvalue of the system, and $\xi_i$ is the coefficient of $y_i$ with the term $e^{\lambda t}$, then simplified from \eqref{linear3}, $\lambda$ satisfies
        \begin{equation}
        \lambda^2+(a+b)\lambda-aV'(h)(\frac{\xi_{i+1}}{\xi_i}-1)-\frac{bV'(h)}{N-i}(\frac{\xi_{N}}{\xi_i}-1)=0, \;\text{if } i\neq N 
        \label{mix1}
        \end{equation}
        and
        \begin{equation}
            \lambda^2+(a+b)\lambda-(a+b)V'(h)(\frac{\xi_{1}}{\xi_N}-1)=0.
        \end{equation}
        When N is very large ($N\to\infty$), $bV'(h)/(N-i)$ is very small for most i, and the last term of \eqref{mix1} has minimal effect. This means $\xi_{i+1}/\xi_i$ will converge to eigenvalues given in \eqref{eigen}. Then $\lambda$ will converge to values given by
        \begin{equation}
            \lambda^2+(a+b)\lambda-aV'(h)(e^{\frac{2\pi k}{N}i}-1)=0.
        \end{equation}
        Follow the same steps of previous proof we can find the stability criterion \eqref{transtae}.
    \end{proof}
\end{appendices}
\end{document}